\definecolor{darkgreen}{rgb}{0,0.5,0}
\definecolor{darkgray}{rgb}{0.2,0.2,0.2}
\newtheorem{theorem}{Theorem}
\newtheorem{lemma}[theorem]{Lemma}
\newtheorem{definition}[theorem]{Definition}
\newtheorem{remark}{Remark}
\newtheorem*{remark*}{Remark}
\crefname{theorem}{Theorem}{Theorems}
\Crefname{lemma}{Lemma}{Lemmas}
\Crefname{invariant}{Invariant}{Invariants}
\Crefname{claim}{Claim}{Claims}
\Crefname{observation}{Observation}{Observations}
\Crefname{algorithm}{Algorithm}{Algorithms}
\Crefname{figure}{Figure}{Figures}
\Crefname{challenge}{Challenge}{Challenges}
\Crefname{problem}{Problem}{Problems}
\def\alloc{\mathsf{alloc}}
\def\vx{\mathsf{x}}
\def\mcap{\mathsf{C}}
\def\neigh{\mathsf{N}}
\def\halloc{\overline{\mathsf{alloc}}}
\def\mweight{\mathsf{MatchWeight}}
\def\gap{\frac{2\log(2|R| / \epsilon)}{\epsilon} + 1}
\def\hbeta{\overline{\beta}}
\def\hvx{\overline{\vx}}
\newcommand{\logn}{\frac{\log(|R| / \epsilon)}{\epsilon^2}}
\newcommand{\lognExact}{\frac{2\log(2|R| / \epsilon)}{\epsilon^2} + \frac{1}{\epsilon}}
\newcommand{\loglambda}{\log_{1+\epsilon}(4\lambda / \epsilon) + 1}
\newcommand{\eps}{\epsilon}
\newcommand{\cL}{\mathcal{L}}
\newcommand{\tO}{\tilde{O}}
\newcommand{\rb}[1]{\left( #1 \right)}
\DeclareMathOperator{\poly}{poly}
\newcommand{\bbN}{\mathbb{N}}
\def\poly{\text{poly}}
\begin{document}

\title{Faster MPC Algorithms for Approximate Allocation in Uniformly Sparse Graphs} 

\author{ Jakub Łącki \thanks{e-mail: \texttt{jlacki@google.com}} \\ Google Research
\and Slobodan Mitrović\thanks{Supported by the Google Research Scholar and NSF Faculty Early Career Development Program No.~2340048. e-mail: \texttt{smitrovic@ucdavis.edu}} \\ UC Davis 
\and Srikkanth Ramachandran\thanks{Supported by NSF Faculty Early Career Development Program No.~2340048. e-mail: \texttt{sramach@ucdavis.edu}} \\ UC Davis
\and Wen-Horng Sheu\thanks{Supported by NSF Faculty Early Career Development Program No.~2340048. e-mail: \texttt{wsheu@ucdavis.edu}} \\ UC Davis}
\date{ }



\maketitle

\begin{abstract}
We study the allocation problem in the Massively Parallel Computation (MPC) model. This problem is a special case of $b$-matching, in which the input is a bipartite graph with capacities greater than $1$ in only one part of the bipartition.
We give a $(1+\epsilon)$ approximate algorithm for the problem, which runs in $\tilde{O}(\sqrt{\log \lambda})$ MPC rounds, using sublinear space per machine and $\tilde{O}(\lambda n)$ total space, where $\lambda$ is the arboricity of the input graph.
Our result is obtained by providing a new analysis of a LOCAL algorithm by Agrawal, Zadimoghaddam, and Mirrokni~[ICML 2018], which improves its round complexity from $O(\log n)$ to $O(\log \lambda)$.
Prior to our work, no $o(\log n)$ round algorithm for constant-approximate allocation was known in either LOCAL or sublinear space MPC models for graphs with low arboricity.
\end{abstract}

\newpage

\section{Introduction}
The maximum matching problem is one of the most central problems studied in the Massively Parallel Computation (MPC) model.
Indeed, several highly influential lines of research in the MPC model started with the study of the maximum matching problem, including the introduction of the filtering technique~\cite{lattanzi2011filtering}, the round compression technique~\cite{czumaj2018round,GU19}, or the component stability framework~\cite{GKU19, CDP21}.
The extensive research on the problem has led to a number of improved MPC algorithms for maximum matching in several settings~\cite{lattanzi2011filtering,czumaj2018round,GGK+18,assadi2019coresets,BBD+19,behnezhad2019exponentially,GU19,GGJ20,ghaffari2022massively,dhulipala2024parallel}.

The MPC model is a theoretical abstraction of large-scale frameworks, such as MapReduce, Flume, Hadoop, and Spark, which was introduced in a series of papers~~\cite{dean2008mapreduce,karloff2010model,goodrich2011sorting}. 
An instance of this model has $N$ machines, each equipped with $S$ words of space.
The computation on these machines proceeds in synchronous rounds. 
Originally, the input data is partitioned across the machines arbitrarily.
During a round, a machine performs computation on the data it has in its memory.
At the end of a round, the machines simultaneously exchange messages.
Each machine can send messages to any other machine as long as the total number of words sent and received by a machine in one round does not exceed $S$.
For solving a problem on $n$-vertex graphs, the known MPC algorithms focus on one of the following regimes: the \emph{superlinear} regime, in which $S = n^{1 + \gamma}$; the \emph{near-linear} regime, where $S = n \cdot \poly \log n$; and the \emph{sublinear} regime, where $S = n^{\gamma}$. Here $\gamma$ is an arbitrary constant satisfying $0 < \gamma < 1$.
The ultimate goal in the MPC model is to come up with algorithms that use a low number of rounds in the most challenging, sublinear regime.

After several successful years, the progress in developing new maximum matching algorithms in the MPC model has recently slowed down.
In particular, the best known algorithm for $(1+\epsilon)$-approximate maximum matching in the sublinear regime, which uses $\tilde{O}(\sqrt{\log n})$ rounds, has been proposed over 5 years ago~\cite{GU19}, and improving this bound has been a major open problem since.

Several lines of work studied different variants of the problem in order to break the notorious barrier of $\tilde{O}(\sqrt{\log n})$ rounds.
Behnezhad et al. \cite{BBD+19} showed that for graphs of arboricity $\lambda$, approximate maximum matching can be found in $\tilde{O}(\sqrt{\log \lambda} + \log^2 \log n)$ rounds in the sublinear regime, which is an improvement for graphs of low arboricity.
Moreover, in the near-linear regime Ghaffari, Grunau, and Mitrović showed how to solve approximate matching (and even a more general problem of b-matching)  in $O(\log \log n)$ rounds~\cite{ghaffari2022massively}.

We contribute to this line of research by showing that within the best known bounds for computing approximate matching in the sublinear MPC regime, one can solve a more general \emph{allocation} problem.
Given a bipartite graph $G = (L \cup R, E)$ with integer capacities $\mcap : R \to \bbN_{\ge 1}$, the allocation problem aims to find a set of edges $M \subseteq E$ of maximum cardinality, such that a vertex in $L$ is incident to at most $1$ edge and a vertex $v \in R$ is incident to at most $\mcap_v$ edges in $M$.
Particularly, the allocation problem is a generalization of the maximum matching problem in bipartite graphs.
Various versions of this problem have been extensively studied in computer science, especially in the context of online ads and server-client resource allocation~\cite{mehta2007adwords,feldman2009online,vee2010optimal,wang2015socially,AZM18,matching-paper1,matching-paper2,matching-paper3,matching-paper4,matching-paper5,matching-paper6}.
Moreover, the allocation problem is used as a subroutine in other problems~\cite{AZM18}.
In particular, it was used to obtain the state-of-the-art algorithm for load balancing~\cite{ahmadian2021distributed}.

In this paper, we show that the allocation problem can be solved in $\tilde{O}(\sqrt{\log \lambda})$ rounds in the sublinear MPC regime.

\subsection{Our contributions}
Our first contribution is a new analysis of the algorithm of Agrawal, Zadimoghaddam, and
Mirrokni \cite{AZM18}, which works in the LOCAL model.
The LOCAL model is a distributed model of computation for graph problems, in which the communication graph is the same as the input graph.
The computation happens at the vertices of the graph.
The algorithms work in synchronous rounds, where a vertex can send messages of arbitrary size to its neighbors in each round.

Given an instance of the allocation problem, the LOCAL algorithm of Agrawal, Zadimoghaddam, and
Mirrokni computes a \emph{fractional allocation}, which is a function that assigns each edge a real number in $[0, 1]$ while respecting the capacity constraints (see \cref{def:fractional-allocation} for a formal definition).
The analysis of \cite{AZM18} shows that the algorithm computes a $1 + \epsilon$ approximate fractional allocation within $O_{\epsilon}(\log n)$ rounds.
We show that the same algorithm, in fact, converges to a $2+\epsilon$ approximate solution within $O_{\epsilon}(\log \lambda)$ rounds, where $\lambda$ is the arboricity of the graph. 
Our algorithm can be transformed into a constant approximation algorithm for the (integral) allocation problem using standard techniques.
In addition, its approximation factor can be boosted to $1+\eps$ using the framework in \cite{ghaffari2022massively}.
The round complexity remains $O_\eps(\log \lambda)$ but has a higher dependency on $\eps$.

\begin{theorem}\label{thm:LOCAL-main}
    There is a randomized LOCAL algorithm that computes a $1+\epsilon$ approximate (integral) allocation in $O_\eps(\log \lambda)$ rounds. 
\end{theorem}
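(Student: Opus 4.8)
I would prove \cref{thm:LOCAL-main} through a three-stage pipeline whose first stage is the genuinely new ingredient. \emph{Stage 1} (reanalysis of~\cite{AZM18}): I claim that the fractional LOCAL algorithm of Agrawal, Zadimoghaddam and Mirrokni, run for $T = \loglambda = O_\eps(\log\lambda)$ rounds, outputs a fractional allocation (in the sense of \cref{def:fractional-allocation}) whose value is at least $\frac{1}{2+\eps}$ times that of an optimal allocation. \emph{Stage 2} (rounding): using standard LOCAL rounding I would convert this into an \emph{integral} allocation, losing only a constant factor, in $O_\eps(\log\lambda)$ further rounds. \emph{Stage 3} (boosting): finally I would run the black-box augmentation framework of~\cite{ghaffari2022massively} on top of the constant-approximate integral algorithm of Stage~2 to push the ratio down to $1+\eps$; this framework multiplies the round complexity only by a factor depending on $\eps$, so the total stays $O_\eps(\log\lambda)$.

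For Stage 1, I would view the algorithm of~\cite{AZM18} as a distributed price-ascent process. Every right vertex $v$ maintains a price $\beta_v \ge 1$; in each round every left vertex $u$ re-splits its unit of demand among its neighbors, weighting them inversely by price (so most of the mass goes to the cheaper neighbors), and every right vertex $v$ whose received load exceeds $(1+\eps)\mcap_v$ multiplies $\beta_v$ by $1+\eps$. The price vector acts as a scaled dual certificate: an over-loaded vertex is ``saturated'', and any left vertex that remains under-served at the end must have \emph{all} of its neighbors saturated. The analysis then splits in two. The feasibility half shows that after $T$ rounds the load on every right vertex is within a $(1+\eps)$ factor of its capacity (restored exactly by a final $(1+\eps)$-scaling) and the configuration is essentially stationary. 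The optimality half is a charging argument: each under-served left vertex is charged to the nearly-exhausted capacity of its saturated neighbors, and since the total saturated capacity is at most the value of the computed solution, the number of under-served left vertices --- and hence the additive gap to the optimum --- is at most that value, giving the factor $2+\eps$. The arboricity enters through a truncation/peeling step: because every subgraph of an arboricity-$\lambda$ graph has average degree at most $2\lambda$, one can show that once a price exceeds roughly $4\lambda/\eps$ the corresponding vertex may be frozen without changing the final value by more than an $\eps$-fraction; since prices grow by the multiplicative factor $1+\eps$, they reach this ceiling within $T = \loglambda$ rounds, which is the round bound. (The original $(1+\eps)$-approximation of~\cite{AZM18} instead needs prices as large as $\poly(|R|/\eps)$, hence $\gap$ rounds; keeping the sharper approximation is precisely what forbids truncating the prices at $\poly(\lambda)$.)

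Stage 2 is routine: a $2+\eps$ approximate fractional allocation certifies a large integral allocation, and standard LOCAL rounding --- e.g.\ splitting each $v \in R$ into $\mcap_v$ unit-capacity copies (which does not increase the arboricity), discarding negligible edge values, and performing randomized assignment with local conflict repair --- recovers a constant fraction of its value in $O_\eps(\log\lambda)$ rounds. For Stage 3 I would invoke~\cite{ghaffari2022massively}: their framework turns any $O(1)$-approximate $b$-matching routine into a $(1+\eps)$-approximate one by repeatedly locating and applying short augmenting structures, with only an $\eps$-dependent blow-up in the number of rounds. One then checks that the input graph and the auxiliary graphs arising in the framework all have arboricity $O(\lambda)$ --- straightforward, since they are essentially subgraphs of the input --- so every invocation of the Stage-2 routine still costs $O_\eps(\log\lambda)$ rounds and the overall bound is $O_\eps(\log\lambda)$.

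\textbf{Main obstacle.} The real work is the truncation argument in Stage 1: pinning down the exact potential function (or peeling order) witnessing that prices above $\Theta(\lambda/\eps)$ are irrelevant up to an $\eps$-loss, and verifying that this unavoidably degrades the ratio from $1+\eps$ to $2+\eps$ --- i.e.\ that the factor-$2$ loss and the $\log n \to \log\lambda$ speedup are two faces of the same trade-off. By contrast, Stages 2 and 3 are bookkeeping; the only point requiring care there is that the reductions preserve low arboricity.
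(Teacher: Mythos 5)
Your overall pipeline (fractional $2+\eps$ approximation in $O_\eps(\log\lambda)$ LOCAL rounds, rounding to a constant-factor integral allocation, then boosting via the framework of \cite{ghaffari2022massively}) matches the paper's, and Stage~3 in particular is handled the same way, including the observation that every invocation runs on a subgraph of $G$ and hence on a graph of arboricity at most $\lambda$. The gap is in Stage~1, which is the actual content of the theorem. You do not give an argument; you propose a ``truncation/peeling'' plan (freeze a vertex once its price exceeds $\Theta(\lambda/\eps)$ and argue this loses only an $\eps$-fraction) and explicitly defer ``pinning down the exact potential function'' --- but that is precisely the missing proof. The paper's argument is different and does not modify or freeze anything: after $\tau \geq \loglambda$ rounds it partitions $R$ into level sets $\lset_0,\dots,\lset_{2\tau}$ by $\beta$-value, shows (Lemma~\ref{lem:alloc-bound}) that vertices outside $\lset_{2\tau}$ are nearly filled and vertices outside $\lset_0$ are only $(1+O(\eps))$ over-allocated, imports the bound $OPT \leq (1+3\eps)\mweight + |\neigh(\lset_{2\tau})|$ from \cite{AZM18}, and then bounds $|\neigh(\lset_{2\tau})|$ by $(1+O(\eps))\mweight$ via a two-case argument: either $|\neigh(\lset_{2\tau})| \leq |\lset_0|$ (then the capacity of $\lset_0$ already accounts for it), or the priority gap $(1+\eps)^{2\tau-2} \geq 4\lambda/\eps$ forces every edge from $\neigh(\lset_{2\tau})$ to $\lset_0$ to carry value at most $\eps/(4\lambda)$, while arboricity caps the number of such edges by $2\lambda|\neigh(\lset_{2\tau})|$, so at most an $\eps/2$-fraction of that mass ``leaks'' to the unboundedly over-allocated set $\lset_0$ and the rest is counted toward $\mweight$. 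Your sketch never confronts the set $\lset_0$ at all, and your ``each under-served left vertex is charged to its saturated neighbors'' step presumes a stationarity/saturation property (``any left vertex that remains under-served at the end must have all of its neighbors saturated'') that the algorithm, run for a fixed number of rounds, does not provide and that the paper neither needs nor proves. Your description of the dynamics is also off: in \cref{alg:local} mass is split \emph{proportionally} to $\beta$, and $\beta_v$ is \emph{increased} when $v$ is under-allocated and decreased when over-allocated; the two-sided update matters in the proof of Lemma~\ref{lem:alloc-bound}.

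A secondary inaccuracy: in Stage~2 you assert that splitting each $v \in R$ into $\mcap_v$ unit-capacity copies ``does not increase the arboricity.'' As stated (each copy adjacent to all of $v$'s neighbors) this is exactly the reduction the paper rules out --- a star with center capacity $n-1$ becomes a complete bipartite graph, raising arboricity from $1$ to $\Theta(n)$. This does not sink your Stage~2, because rounding a fractional allocation needs no arboricity bound at all: the paper's rounding simply samples each edge with probability $M_f(e)/6$ and discards edges at vertices whose sampled degree exceeds their capacity, which gives a constant fraction of the fractional weight with constant probability (boosted by parallel repetition) in $O(1)$ LOCAL rounds. But you should either drop the arboricity claim or make precise that you partition $v$'s incident fractional mass among the copies rather than duplicating its edges.
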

We focus on describing the $2+\epsilon$ approximation for fractional allocation, i.e., we focus on proving the following claim. The details on obtaining a $(1+\epsilon)$-approximation algorithm are deferred to \cref{sec:rounding,sec:framework}.

\begin{theorem}\label{thm:LOCAL-allocation}
    There is a deterministic LOCAL algorithm that computes a $2+\epsilon$ approximate fractional allocation in $O_\eps(\log \lambda)$ rounds. 
\end{theorem}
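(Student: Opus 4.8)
The plan is to analyze the fixed algorithm of \cite{AZM18}. Recall its shape: every right vertex $v$ carries a weight $\vw_v \ge 1$ (a congestion price), initialized to $1$ and only ever increased; in each round every left vertex recomputes a fractional allocation onto its neighbours by a water‑filling rule that sends more mass to cheaper (lower‑weight) neighbours subject to its unit budget, and then every right vertex $v$ that received more than $\mcap_v$ caps the allocation it keeps at $\mcap_v$ and multiplies $\vw_v$ by $1+\epsilon$. Two facts are immediate and will be used throughout: the weights are non‑decreasing across rounds, and after capping the retained allocation $\hvx$ (with loads $\halloc$) is always a feasible fractional allocation — so at any point the algorithm holds a valid answer, and the only question is how many rounds are needed for its value to be within $2+\epsilon$ of the optimum.

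I would first isolate the combinatorial state that certifies a $2+\epsilon$ approximation. Call $u\in L$ \emph{saturated} if $\sum_v \hvx_{uv} \ge 1-\epsilon$ and $v\in R$ \emph{full} if $\halloc_v \ge (1-\epsilon)\mcap_v$. Claim: if the current state has the property that every edge $uv$ has $u$ saturated or $v$ full, then $\sum_e \hvx_e \ge \tfrac{(1-\epsilon)^2}{2}\cdot\mathrm{OPT}$. This is a one‑line LP‑duality computation: put $y_u=(1-\epsilon)^{-1}$ on saturated left vertices, $z_v=(1-\epsilon)^{-1}$ on full right vertices, and $0$ elsewhere; the state property makes $y_u+z_v\ge1$ on every edge, so $(y,z)$ is dual‑feasible with objective $(1-\epsilon)^{-1}\rb{|\{\text{saturated }u\}| + \sum_{\text{full }v}\mcap_v}$, and since $\sum_e\hvx_e \ge (1-\epsilon)\,|\{\text{saturated }u\}|$ and $\sum_e\hvx_e=\sum_v\halloc_v\ge(1-\epsilon)\sum_{\text{full }v}\mcap_v$, this objective is at most $\tfrac{2}{(1-\epsilon)^2}\sum_e\hvx_e$; weak duality finishes it, and rescaling $\epsilon$ upgrades the constant to $1/(2+\epsilon)$.

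The heart of the proof — and where arboricity enters — is showing that after $T := \loglambda = O_\epsilon(\log\lambda)$ rounds this state property holds. The key lemma I would prove is that a right vertex that never becomes full never acquires weight exceeding $4\lambda/\epsilon$ (hence after $\log_{1+\epsilon}(4\lambda/\epsilon)$ rounds its weight is frozen); granting this, the state property follows by tracking the monotone per‑vertex quantities, since once $T$ rounds have elapsed a still‑unsaturated left vertex has had its water level driven low enough that every non‑full neighbour (whose weight is below $4\lambda/\epsilon$) is affordable to it, so it is sending that neighbour enough mass to make it full — a contradiction. To establish the lemma I would argue by contradiction with a charging / dense‑subgraph argument: whenever a non‑full $v$ gets overloaded at weight level $(1+\epsilon)^j$, more than $(1+\epsilon)^{j-1}\mcap_v$ left neighbours are then sending it positive mass (each at most $1/\vw_v$), and the water‑filling rule forces each of them to a low water level that propagates a lower bound of roughly $(1+\epsilon)^{j}$ on the weights of their remaining neighbours; iterating this over levels $j=1,\dots,k$ for all right vertices that reach level $k$ exhibits a subgraph whose number of edges exceeds $\lambda$ times its number of vertices once $k \ge \log_{1+\epsilon}(4\lambda/\epsilon)$, contradicting the Nash--Williams characterization of arboricity $\lambda$.

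The main obstacle is exactly this last lemma. The delicate points are: ensuring that the left vertices charged at different levels and against different right vertices are assembled into a single subgraph whose vertex count does not blow up (so that the edge‑to‑vertex ratio genuinely exceeds $\lambda$); making sure that the statement ``the remaining neighbour has weight $\ge (1+\epsilon)^j$'' is available simultaneously for all the edges being counted — this is where the monotonicity of the weights and of the water levels is essential; and handling the fractional, revocable nature of the commitments by working throughout with the capped quantities $\halloc,\hvx$ rather than pretending each left vertex commits integrally to one right vertex. Everything else — the invariants, the duality estimate, and converting ``state property reached'' into ``$T$ rounds suffice'' — is routine bookkeeping once the weight bound is in hand.
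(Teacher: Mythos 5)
Your proposal has a genuine gap at exactly the step that carries the whole theorem. Everything funnels into your ``key lemma'' that a right vertex which never becomes full never acquires weight above $4\lambda/\epsilon$, to be proved by a charging/dense-subgraph contradiction --- but that argument is only gestured at, and you yourself list the unresolved points (assembling the left vertices charged at different levels and against different right vertices into one subgraph whose vertex count does not blow up; having the lower bound ``remaining neighbour has weight $\ge(1+\epsilon)^j$'' hold simultaneously for all counted edges; coping with revocable fractional mass). Since the only new content of the theorem beyond the known $O(\log n)$ analysis of \cite{AZM18} is precisely an arboricity-based argument of this kind, leaving it at ``iterating over levels exhibits a subgraph of density $>\lambda$'' is not routine bookkeeping but the missing proof. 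The surrounding steps are also not secured: the claim that each left neighbour sends at most $1/\vw_v$ to $v$ is not a consequence of a water-filling rule as you describe it, and the deduction that after $T$ rounds every unsaturated left vertex would be ``sending enough mass to make a non-full neighbour full'' is asserted rather than derived from the (not fully specified) dynamics. In addition, the algorithm you analyze is not the algorithm of \cite{AZM18}: there, $\beta_v$ is an attractiveness value, each $u\in L$ allocates \emph{proportionally} to the $\beta$'s, $\beta_v$ goes \emph{up} when $v$ is under-allocated and \emph{down} when over-allocated, and scaling/capping happens only once at the end --- so the monotonicity of weights you use ``throughout'' simply does not hold for the algorithm you say you are analyzing; at best you are proposing a new price-increase algorithm whose correctness and convergence would themselves need proof.

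For comparison, the paper's route avoids the multi-level charging you found delicate. It partitions $R$ into level sets $\lset_0,\dots,\lset_{2\tau}$ by the \emph{final} priority, shows (Lemma~\ref{lem:alloc-bound}) that every vertex outside $\lset_{2\tau}$ is nearly full and every vertex outside $\lset_0$ is only $(1+O(\epsilon))$ over-allocated, uses the cut bound $OPT\le\sum_{j\le 2\tau-1}\mcap(\lset_j)+|\neigh(\lset_{2\tau})|$ from \cite{AZM18}, and then bounds the leakage from $\neigh(\lset_{2\tau})$ into $\lset_0$ on the \emph{single} induced subgraph $\neigh(\lset_{2\tau})\cup\lset_0$: each such edge carries at most $\epsilon/(4\lambda)$ because of the $(1+\epsilon)^{2\tau-2}\ge 4\lambda/\epsilon$ priority gap, while arboricity caps the number of such edges by $2\lambda|\neigh(\lset_{2\tau})|$ (after first disposing of the easy case $|\neigh(\lset_{2\tau})|\le|\lset_0|$). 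Your LP-duality certificate via saturated/full vertices is fine as a stand-alone observation, but without a complete proof of the price-bound lemma for a fully specified algorithm, the proposal does not establish the theorem; if you want to repair it, the cleanest fix is to switch to the final-level-set structure and run the arboricity count on that one induced subgraph, which is exactly what the paper does.
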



\begin{remark}
The intuition behind \cref{thm:LOCAL-allocation} is as follows. Agrawal et al. \cite{AZM18}'s algorithm begins by saturating the densest part of the graph and then gradually spreads the allocation to the sparser regions using a multiplicative weight update approach.
The speed of this process is inversely proportional to the density of the densest part, and thus the algorithm converges faster on low-arboricity graphs.
\end{remark}

We remark that \cref{thm:LOCAL-allocation} cannot be obtained by the existing reduction from the allocation problem to the maximum matching problem.
One such reduction is to split each vertex $v \in R$ into $\mcap_v$ copies such that each copy is adjacent to all neighbors of $v$.
It is not hard to see that solving the allocation problem on $G$ reduces to finding a maximum matching in the new graph.
However, this reduction can significantly increase the arboricity of the graph.
For instance, if $G$ is a star, in which the center has capacity $n - 1$ and all leaves have capacity $1$, the above reduction creates a complete bipartite graph with $n - 1$ vertices in each part.
Thus, the arboricity increases from $1$ to $\Theta(n)$.

Leveraging \cref{thm:LOCAL-allocation}, we give a new algorithm for the allocation problem in the MPC model.
Our MPC algorithm works in the sublinear regime, which is the most restrictive, and thus the most challenging one.
In all our round complexities we hide dependency on $\gamma$ within the $O(\cdot)$ notation.


To obtain our MPC algorithm, we show how to sparsify the computation graph for the allocation problem, which enables us to simulate $\Omega(\sqrt{\log \lambda})$ LOCAL rounds of the algorithm of \cref{thm:LOCAL-allocation} in $O(\log \log \lambda)$ MPC rounds.
As a result, we obtain an $O_{\epsilon}(\sqrt{\log \lambda} \cdot \log \log \lambda)$-round MPC algorithm which computes a $(2+\epsilon)$-approximate fractional allocation.
We then use standard techniques to transform this into a $\Theta(1)$-approximate integral solution.
Finally, by applying the recent framework of \cite{ghaffari2022massively} for $b$-matching, we  improve our approximation factor to $1+\epsilon$.

Interestingly, the analysis of \cref{thm:LOCAL-allocation} also enables us to run the algorithm without knowing the value of $\lambda$.
We remark that in the MPC model to the best of our knowledge there is no efficient algorithm for checking whether a given matching is a $(1+\epsilon)$-approximate matching. 
As a result, the standard approach of guessing $\lambda$, i.e., running the algorithm for different choices of $\lambda = 2^i$, does not seem to apply in a straightforward way without increasing the round complexity.
This gives our main result of the paper, which is formally stated below.

\begin{theorem}\label{thm:alloc-main}
    For any constant $\alpha \in (0, 1)$, there is an $O_{\epsilon}(\sqrt{\log \lambda} \cdot \log \log \lambda)$-round MPC algorithm using $n^{\alpha}$ local memory and $\tilde{O}(\lambda n)$ total memory that with high probability computes a $1+\epsilon$ approximate solution for allocation. Moreover, the algorithm does not require knowledge of $\lambda$.
\end{theorem}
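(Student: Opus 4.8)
The plan is to combine three ingredients: (i) the LOCAL algorithm of \cref{thm:LOCAL-allocation}, (ii) a graph-sparsification step that lets a batch of LOCAL rounds be simulated in exponentially fewer MPC rounds, and (iii) the $b$-matching framework of \cite{ghaffari2022massively} for boosting the approximation factor. I would start from the $(2+\epsilon)$-approximate fractional allocation algorithm, which runs in $T = O_\epsilon(\log\lambda)$ LOCAL rounds. The key structural observation is that this algorithm is \emph{local} in the usual sense: the state of each vertex after $k$ rounds depends only on its $k$-hop neighborhood. So if we want to run $r$ consecutive LOCAL rounds, it suffices, for each vertex, to gather its $r$-hop ball. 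Naively this ball can be huge, but in a graph of arboricity $\lambda$ the number of \emph{edges} in any $r$-hop ball of a bounded-degree-after-peeling set is controlled; the trick (as in \cite{BBD+19}) is to first discard high-degree vertices on the $R$-side — capacities only help here — and argue that the remaining relevant subgraph has small enough balls that $r = \Theta(\sqrt{\log\lambda})$ rounds of balls fit in $n^\alpha$ space on a machine, after a round-compression/graph-exponentiation phase costing $O(\log r) = O(\log\log\lambda)$ MPC rounds. Iterating $T/r = O_\epsilon(\sqrt{\log\lambda})$ times gives the claimed $O_\epsilon(\sqrt{\log\lambda}\cdot\log\log\lambda)$ total MPC rounds for a $(2+\epsilon)$-approximate fractional allocation, using $\tilde O(\lambda n)$ total space (each of the $n$ vertices stores a ball with $\tilde O(\lambda)$ edges, amortized via the arboricity bound) and $n^\alpha$ space per machine.

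Next I would convert the fractional allocation to an integral one. This is where I invoke "standard techniques": a fractional allocation that is a $(2+\epsilon)$-approximation can be rounded — losing only a constant factor — to an integral allocation, either by the randomized rounding / cycle-and-path-cancelling arguments standard for bipartite fractional matchings, or by the explicit rounding lemma the paper defers to \cref{sec:rounding}. This step costs at most $O(\log\log\lambda)$ additional MPC rounds and preserves the space bounds, yielding a $\Theta(1)$-approximate integral allocation. Finally, to push the approximation ratio down to $1+\epsilon$, I would feed this constant-factor integral allocation into the $b$-matching framework of \cite{ghaffari2022massively}: that framework takes a constant-factor approximate $b$-matching and, in a number of rounds that multiplies the base round complexity by a function of $\epsilon$ only (no extra $\log$ factors in $n$ or $\lambda$), produces a $(1+\epsilon)$-approximate $b$-matching; since allocation is the special case of $b$-matching with unit capacities on $L$, this applies verbatim. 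Folding the $\epsilon$-dependence into $O_\epsilon(\cdot)$ gives the stated $O_\epsilon(\sqrt{\log\lambda}\cdot\log\log\lambda)$ bound.

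The last claim — that the algorithm need not know $\lambda$ — follows from the analysis of \cref{thm:LOCAL-allocation} rather than from a guess-and-check wrapper. The point is that the LOCAL algorithm's convergence is automatic: after $O_\epsilon(\log\lambda)$ rounds it has \emph{already} reached a $(2+\epsilon)$-approximation regardless of whether we knew $\lambda$, so we may simply run the MPC simulation for $O_\epsilon(\sqrt{\log\hat\lambda}\cdot\log\log\hat\lambda)$ steps with a doubling estimate $\hat\lambda$ of $\lambda$, and the simulation cost for $\hat\lambda \le \lambda$ is a lower bound on what is needed, while for the first $\hat\lambda \ge \lambda$ it already succeeds; crucially we never need to \emph{verify} success (which, as the excerpt notes, has no known efficient MPC procedure). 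One must be slightly careful that the per-ball space during simulation stays $\tilde O(\lambda)$ and not $\tilde O(\hat\lambda)$ — this is fine because the actual arboricity of the input caps the ball sizes irrespective of $\hat\lambda$.

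The main obstacle I anticipate is step (ii): showing that $\Omega(\sqrt{\log\lambda})$ LOCAL rounds can be compressed into $O(\log\log\lambda)$ MPC rounds within $n^\alpha$ local space and $\tilde O(\lambda n)$ total space. The subtlety is that the high-capacity vertices in $R$ (e.g.\ the center of the star in the remark after \cref{thm:LOCAL-allocation}) can have degree $\Theta(n)$, so their $r$-hop balls trivially overflow a sublinear-space machine; the argument must exploit that such vertices get saturated immediately by the Agrawal--Zadimoghaddam--Mirrokni dynamics and can be "peeled off" and handled separately, so that the residual graph on which further LOCAL rounds matter has the small-ball property. Making this peeling interact correctly with the multiplicative-weight state that the LOCAL algorithm maintains — i.e.\ showing the peeled and residual computations can be stitched back together without losing the approximation guarantee — is the delicate part.
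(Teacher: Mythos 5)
The central gap is in your step (ii). The paper does not compress rounds by collecting true $r$-hop balls after peeling high-degree $R$-vertices, and that route, as you sketch it, does not go through: high degrees are a problem on both sides (a star with its center in $L$ forces that center to aggregate $\sum_{v \in \neigh_u}\beta_v$ over $\Theta(n)$ neighbors in every round, which $R$-side peeling does not touch), and a high-degree $R$-vertex cannot simply be removed after it "saturates", because its $\beta$-value continues to appear in the denominators $\sum_{v'\in\neigh_u}\beta_{v'}$ of all its neighbors and thus drives the dynamics for the remaining rounds. The paper's actual mechanism (\cref{alg:matching}) is different: each vertex buckets its neighbors by current $\beta$-value into $O(\tau)$ groups, samples $t = O((1+\eps)^{2B}\eps^{-5}\log n)$ neighbors per group (with fresh independent samples for each of the next $B$ rounds), and estimates the required aggregates from the samples; since $\beta$-values drift by at most $(1+\eps)^{B}$ within a phase, a concentration argument (\cref{lem:sampling,lem:approx}) shows the estimates are within $1+O(\eps)$. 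This in turn forces a robustness analysis showing the LOCAL dynamics tolerate perturbed update thresholds, carried out via \cref{alg:det-thresh} with parameters $k_{v,r}\in[\frac14,4]$ and the equivalence argument \cref{lem:equivalence} in \cref{sec:mpc:correctness}. Graph exponentiation is then run only on the sampled communication graph, whose $B$-hop balls have size $d^{B}\le\min\{\lambda, n^{\alpha}\}$ for $B=\Theta(\min\{\sqrt{\alpha\log n},\sqrt{\log\lambda}\})$, which is where the $n^{\alpha}$ local and $\tilde O(\lambda n)$ total memory bounds come from. None of this is supplied by your peeling sketch, which you yourself flag as unresolved; it is the heart of the theorem, not a detail.

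Your treatment of the unknown-$\lambda$ claim also has a genuine gap. You assert that no verification is ever needed, but a doubling estimate $\hat\lambda$ without any test has no stopping rule: you either double forever or must already know $\lambda$ to stop (stopping only when $\hat\lambda\ge n$ would give $\sqrt{\log n}$, not $\sqrt{\log\lambda}$). The paper resolves precisely this by extracting from the proof of \cref{thm:apx-factor} a checkable termination condition: after $\tau$ rounds, either $|\neigh(\mathcal{L}_{2\tau})|\le|\mathcal{L}_0|$ or the total allocation from $\neigh(\mathcal{L}_{2\tau})$ to vertices outside $\mathcal{L}_0$ is at least $(1-\frac{\eps}{2})|\neigh(\mathcal{L}_{2\tau})|$; either condition certifies a $(2+O(\eps))$-approximation, one of them must hold after $O_\eps(\log\lambda)$ rounds, and both are testable in $O(1)$ MPC rounds, so guess-and-double works (the paper's remark about the difficulty of verifying $(1+\eps)$-approximate matchings concerns that general task, not this condition). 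Finally, a smaller point: the framework of \cite{ghaffari2022massively} does not apply verbatim, because its inner calls must be instances your constant-factor \emph{allocation} subroutine can solve rather than general $b$-matching instances; the paper modifies the layered-graph construction in \cref{sec:framework} (free $L$-vertices to $L_0$, free $R$-vertices to $L_{k+1}$, edges oriented by side) to ensure each call is again an allocation instance.
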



The best previously known MPC algorithm for the allocation problem follows by adapting a LOCAL algorithm by \cite{AZM18} to MPC. This adaptation yields an $O(\log n)$ MPC round approach for the $(1+\eps)$-approximate allocation problem.

\subsection{Background and State of the Art}

\subsubsection{The allocation and b-matching problem in MPC}
Agrawal, Zadimoghaddam and Mirrokni showed that the fractional $(1+\eps)$-approximate allocation problem can be solved in $O\rb{\frac{\log n}{\eps^2}}$ rounds of distributed computation~\cite{AZM18}. Since their approach sends only $\poly \log n$-size messages over an edge in each round, the algorithm readily translates to the sublinear MPC regime with the same round complexity.
This algorithm was used as a subroutine in an approximate load balancing framework~\cite{ahmadian2021distributed}.

Several works have studied the complexity of computing $b$-matching in MPC, which is a generalization of the allocation problem. 
The most recent work shows how to find a $(1+\eps)$-approximate maximum $b$-matching in $O_\eps(\log \log n)$ MPC rounds~\cite{ghaffari2022massively} in the near-linear memory regime. The approach of their work yields $O(\log n)$ round complexity for the sublinear memory regime.
In the blackboard distributed computation, \cite{liu2023scalable} study $(1+\eps)$-approximate $b$-matching in bipartite graphs aiming to reduce the dependence on $1/\eps$. 
They obtain $O_\eps(\log n)$ round complexity.
To the best of our knowledge, no algorithm is known for $\Theta(1)$-approximate maximum $b$-matching that runs in $o(\log n)$ or even $o(\log \lambda)$ MPC rounds in the sublinear memory regime, and it remains an open question whether such an algorithm exists.
Our work on the allocation problem can be seen as the first step towards answering that question in the affirmative.

\subsubsection{Matching in MPC}
\cite{lattanzi2011filtering} showed that a maximal matching can be found in $O(1)$ in the super-linear regime. In a more challenging, near-linear memory regime, a line of work~\cite{czumaj2018round,GGK+18,assadi2019coresets,behnezhad2019exponentially, ghaffari2022massively} resulted in a $O(\log \log \bar{\Delta})$ MPC round complexity for computing maximal matching, where $\bar{\Delta}$ is the average degree.
In the sublinear memory regime, it is known how to compute a maximal matching in $\tO(\sqrt{\log \Delta})$ MPC rounds~\cite{GU19}.

Several authors focused on obtaining faster algorithms for graphs with low arboricity. In particular, \cite{BBD+19} show how to compute a maximal matching in $O(\log^2 \log n + \sqrt{\log \lambda} \log \log \lambda)$ many rounds. That result was improved by \cite{GGJ20} to $O(\log \log n + \sqrt{\log \lambda} \log \log \lambda)$ round complexity.

\subsection{Organization}
\cref{sec:prelim} provides necessary notations and definitions.
\cref{sec:overview} gives an overview of our techniques.
In \cref{sec:local}, we describe our LOCAL algorithm for computing a $(2+\eps)$-approximate fractional allocation.
Then, we show in \cref{sec:mpc} how to simulate the algorithm in the MPC model.
\cref{sec:rounding} outlines the rounding procedure to transform a fractional allocation into an integral one.
\cref{sec:mpc:correctness} provides correctness proof for our MPC algorithm.
\cref{sec:framework} outlines how to boost the approximation factors of our algorithms to $1 + \eps$, using \cite{ghaffari2022massively}'s framework.

\section{Preliminaries and notation} \label{sec:prelim}
\subsection{Basic notations and definitions}
In this paper, we deal with a bipartite graph $G = (V, E)$ with the bipartition $V = L \cup R$. 
Let $n$ and $m$ be, respectively, the number of vertices and edges of $G$.
We use $\neigh_{G, v}$ to refer to the neighborhood of $v$ in the graph $G$. We shall drop $G$ when it is clear in context.

\begin{definition}[Arboricity]\label{def:arb}
    The arboricity of a graph $G$ is the minimum number of forests into which its edges can be partitioned.
    We denote the arboricity of $G$ by $\lambda(G)$ or simply $\lambda$ when $G$ is clear in context.
\end{definition}

Every subgraph of a graph $G$ has average degree $O(\lambda(G))$. Hence, small arboricity implies that the graph is ``everywhere sparse''. In the literature, graphs with small arboricity $\lambda$ are often referred to as \emph{uniformly sparse}.

\def\seta{\mathbb{A}}
\def\seti{\mathbb{I}}

\begin{definition}[Allocation]\label{def:alloc-prob}
    Let $G$ be a bipartite graph with bipartition $L \cup R$. Let $\mcap : R \rightarrow \mathbb{N}$ be a capacity function. An \emph{allocation} is a subset $M \subseteq E(G)$ of edges such that (i) every $u \in L$ has at most one incident edge in $M$ and (ii) every $v \in R$ has at most $\mcap_v$ incident edges in $M$. The \emph{allocation problem} asks for an allocation of the maximum size.
\end{definition}

\noindent We note that the allocation problem is a special case of the $b$-matching problem for bipartite graphs wherein all vertices on one side of $G$ have $b$-value as $1$.

\begin{definition}[Fractional allocation] \label{def:fractional-allocation}
    Let $(G, L, R, \mcap)$ be an instance of the allocation problem. A \emph{fractional allocation} is a function that assigns, to each edge $(u, v)$ in $E(G)$, a fraction $\vx_{u, v} \in [0, 1]$ such that (i) every $u \in L$ satisfies $\sum_{v \in \neigh_{u}} \vx_{u, v} \leq 1$ and (ii) every $v \in R$ satisfies $\sum_{u \in \neigh_{v}} \vx_{u, v} \leq \mcap_v$. The \emph{weight} of the fraction allocation is defined as $\sum_{(u, v) \in E(G)} \vx_{u, v}$. A \emph{maximum fractional allocation} is a fractional allocation of the maximum weight.
\end{definition}

\subsection{LOCAL model} The $\text{LOCAL}$ model is one of the most standard models in distributed computing for studying graph problems.
In this model, the communication network is defined by the input graph $G = (V, E)$, where each vertex is associated with a unique identifier of $O(\log n)$ bits.
There is a processor with unbounded memory in each graph vertex.
The processor in vertex $v$ initially only knows the unique identifier of $v$.
The processors communicate along the edges of the graph in synchronous rounds.
In each round, each processor:
\begin{enumerate}
    \item Receives messages sent to it in the previous round.
    \item Performs arbitrary computation.
    \item Sends messages of arbitrary size to the processors in neighboring vertices. These messages are delivered at the beginning of the next round.
\end{enumerate}

The goal in this model is to solve a given problem in the smallest number of rounds.
Note that if the graph diameter is $D$, each processor can learn the entire graph in $D$ rounds, and thus, solving any graph problem in $D$ rounds is trivial.

\subsection{MPC model} The $\text{MPC}$ model is the standard model for modern parallel computing. In this model, the communication network is a clique.
A graph $G$ of $n$ vertices is stored across the network.
Each machine in the network contains limited memory of $S$ bits.
Communication occurs in synchronous rounds. In each round, a machine can send and receive a total of $S$ words of information. Machines may perform \textit{arbitrary} computation within a single round. We refer to the global memory as the total space across all machines in the network. Problems are naturally harder to solve with limited space and global memory. In this paper, we deal with the \textit{sublinear} space regime, i.e., $S = n^\alpha$ for some constant $\alpha \in (0, 1)$. In all our round complexities we hide dependency on $\alpha$ within the $O(\cdot)$ notation. Our algorithms require $\tilde{O}(\lambda n)$ global memory.

\section{Overview of techniques} \label{sec:overview}

\subsection{Review of the algorithm by \texorpdfstring{\cite{AZM18}}{Ahmadian et al.}}

\begin{algorithm}
\begin{algorithmic}[1]
\Statex \textbf{Input:} $G = (L \cup R, E)$, capacity constraints $\{\mcap_v\}_{v \in R}$, number of rounds $\tau \geq 1$, approximation parameter $\epsilon > 0$.
\Statex \textbf{Output:} a fractional allocation $\{\vx'_{u,v}\}_{(u,v) \in E}$ 
\Statex \textbf{Initialization:} Set $\beta_v \gets 1$ for every $v \in R$

\For{rounds $r = 1, 2, \dots \tau$}
    \State for each vertex $u \in L$: \label{alg-local:agg1} 
    \Statex \hspace{1cm} Set $\vx_{u, v} = \frac{\beta_v}{\sum_{v' \in \neigh_u} \beta_{v'}}$ for each $v \in \neigh_u$ 
    \State for each vertex $v \in R$: \label{alg-local:agg2} 
    \Statex \hspace{1cm} Set $\alloc_v \gets \sum\limits_{u \in \neigh_v} \vx_{u, v}$
    \State for each vertex $v \in R$, update  $\beta_v$ as follows, \label{alg-local:thresh-checker}
            
    \Statex  $$ \beta_v \gets \begin{cases} 
        \beta_v (1 + \epsilon) \ \ \ \text{if } \alloc_v \leq \mcap_v / (1 + \epsilon) \\
        \beta_v / (1 + \epsilon) \ \ \text{if } \alloc_v \geq \mcap_v (1 + \epsilon) \\
        \beta_v \ \ \ \ \ \ \ \ \ \ \ \ ~~~\text{otherwise}      
    \end{cases} $$    
\EndFor

\For {each vertex $v \in R$}
    \State for all $u \in \neigh_v$, set $\vx'_{u,v} \gets \begin{cases} 
        \frac{\mcap_v}{\alloc_v}\vx_{u,v} \ \ \ \text{if } \alloc_v > \mcap_v \\
        \vx_{u,v} \ \ \ \ \ \ \ \ \ \ \ \ \text{otherwise}      
    \end{cases}$
\EndFor \label{line:scale}
\end{algorithmic}
\caption{A LOCAL algorithm for finding a fractional allocation}
\label{alg:local}
\end{algorithm}

We first briefly overview the main idea of the \textit{proportional allocation algorithm} by \cite{AZM18}, which is shown in \cref{alg:local}.

The input to \cref{alg:local} is a bipartite graph with a known bipartition $(L, R)$, capacity constraints $\{\mcap_v\}_{v \in R}$, and two parameters $\tau \geq 1$ and $\epsilon > 0$.
The output is a fractional allocation.
For each vertex $v \in R$, the algorithm maintains a \emph{priority value} $\beta_v$, initially set to $1$. 
Each edge $(u, v)$ is assigned a fractional value $x_{u,v}$. The algorithm performs $\tau$ rounds of computation to update the fractional values. (See lines 1-4.) In each round, each vertex $u \in L$ calculates the fractional value $\vx_{u, v}$ assigned to each vertex $v \in \neigh_u$; the value is computed as the ratio between $\beta_v$ and the sum of $\beta$ values over all vertices in $\neigh_v$. We remark that the fractional values computed in this step may not form a fractional allocation because the sum of all fractions allocated to a vertex $v \in R$ may exceed $\mcap_v$. After this step, each vertex $v \in R$ examines $\alloc_v = \sum_{u \in \neigh_v} \vx_{u, v}$ to update its priority. (See line 4.)
The update is based on the following rules: if $v$ is \textit{under-allocated} by a factor of $(1 + \epsilon)$, meaning $\alloc_v \leq \mcap_v / (1 + \epsilon)$, then $\beta_v$ is increased by a factor of $(1 + \epsilon)$. 
Symmetrically, if $v$ is \textit{over-allocated} by a factor of $(1 + \epsilon)$, $\beta_v$ is decreased. 
If neither of the above two cases holds, $\beta_v$ stays unchanged this round. After $\tau$ rounds of computation, lines 5-6 transform $\{ \vx_{u, v} \}_{(u,v) \in E}$ to a fractional allocation by scaling the values assigned to each vertex $v \in R$.

\def\lset{\mathcal{L}}

Agrawal et al.~showed that the algorithm finds a $(1 + O(\epsilon))$-approximate fractional allocation if $\tau \geq O(\logn)$. Their proof is based on two crucial observations. 
Consider the state after the end of the $\tau$ rounds. 
Partition $R$ into $2\tau+1$ \textit{level sets} $\lset_0, \lset_1, \dots, \lset_{2\tau}$, where $\lset_i = \{v \in R \mid \beta_v = (1 + \epsilon)^{(i - \tau)}\}$. 

The first observation is that each vertex $v \in R$ with $v \notin \lset_{2\tau}$ has filled up a $(1 - O(\epsilon))$ fraction of its capacity; in addition, each vertex $v \in R$ with $v \notin \lset_0$ is over-allocated by at most a factor of $1 + O(\epsilon)$, which implies that lines 5-6 only scale $\vx_{u,v}$ by a $\frac{1}{1 + O(\epsilon)}$ factor. This observation provides a lower bound on the output fractional allocation. 

The second observation is that given any vertex subset $S = \bigcup_{j \leq \ell} \lset_j$ for some $\ell \leq 2\tau$, the cardinality of the optimal fractional allocation is at most the total capacity of $S$ plus the size of the neighborhood of $R \setminus S$. Agrawal et al. proved the approximation factor by showing that the lower bound in the first observation almost matches one of the upper bounds in the second observation.

\subsection{Our approach}

To obtain our result, a key finding is that \cref{alg:local} after $\loglambda$ rounds outputs a $(2 + O(\epsilon))$-approximate allocation.
By observations discussed above, the weight of the optimal solution is at most the total capacity of $\lset_0, \lset_1, \dots, \lset_{2\tau-1}$ plus the number of neighbors of $\lset_{2\tau}$. 
Furthermore, all vertices in $\lset_0, \lset_1, \dots, \lset_{2\tau - 1}$ almost fill up their capacity. 
Hence, to obtain a $(2 + O(\epsilon))$-approximation, it suffices to choose a large enough $\tau$ so that $|\neigh(\lset_{2\tau})|$ is upper-bounded by the weight of the output allocation. 
We observe that this holds if almost all fractional values from $\neigh(\lset_{2\tau})$ are allocated to vertices not in $\lset_0$, as these vertices have $1+O(\epsilon)$ over-allocation.

Suppose that the algorithm fails to find a $(2 + O(\eps))$-approximation after some $\tau \geq 1$ rounds, which implies that a large fraction of $\neigh(\lset_{2\tau})$ is allocated to $\lset_0$.
Intuitively, this can only happen if the subgraph induced by $\neigh(\lset_{2\tau}) \cup \lset_0$ is very dense.
The above intuition can be formalized by noticing that each edge from $\neigh(\lset_{2\tau})$ to $\lset_0$ can only allocate a value of at most $1 / (1 + \eps)^{2\tau}$, due to the difference in priority values between $\lset_0$ and $\lset_{2\tau}$.
Hence, to allocate a constant fraction of $\neigh(\lset_{2\tau})$ to $\lset_0$, there must be $\Omega(|\neigh(\lset_{2\tau})| \cdot (1 + \eps)^{2\tau})$ edges between them.
On the other hand, by the property of bounded arboricity, the density of any subgraph is bounded by a function of $\lambda$.
We show that for $\tau \approx \log_{1+\epsilon} (\lambda / \epsilon)$, the above arguments lead to a contradiction, as the dense subgraph induced by $\neigh(\lset_{2\tau}) \cup \lset_0$ cannot exist in a low-arboricity graph.

Our finding provides a $O(\log\lambda)$-round LOCAL algorithm to compute a constant approximate maximum fractional allocation on a bipartite graph with a known bipartition $(L, R)$ and known upper bound on the arboricity. The fractional solution can be converted into an integral one using standard techniques. (See \cref{sec:rounding} for details.) In addition, the approximation factor can be boosted to $1+\eps$ using the framework of \cite{ghaffari2022massively}. (See \cref{sec:framework}.)




\subsubsection{Improved round complexity in MPC}
    A straightforward simulation of \cref{alg:local} in sub-linear MPC would require $O(\log \lambda)$ rounds. In this section we describe how to improve this complexity by adapting the graph sparsification technique introduced by \cite{GU19} in a nontrivial way. 

    The general idea is to split the execution of $R$ rounds of a LOCAL algorithm into $R / B$ phases of $B$ rounds each and simulate each $B$ rounds within a phase in $o(B)$ MPC rounds.
    Observe that in order to simulate $B$ LOCAL rounds in MPC, for each vertex $v$, it suffices to collect $v$'s $B$-hop neighborhood on a single MPC machine.
    Subsequently, the entire phase consisting of $B$ LOCAL rounds can be simulated without additional communication. 
    Collecting subgraphs can be done efficiently in $O(\log B)$ rounds using the graph exponentiation technique~\cite{lenzen2010brief,GU19}, i.e., by doubling the radius every round. 
    This approach yields a $O(R/B \cdot \log B)$ round MPC algorithm, \emph{assuming} that the $B$-hop neighborhood of each vertex fits into the memory of a single machine. 
    However, since vertices can have large degrees, the size of a $B$-hop neighborhood may be larger than the available memory per machine (even for $B=1$), which makes a direct implementation of this approach infeasible.
    
    To circumvent this, we redesign our LOCAL algorithm such that no communication occurs along many edges during $B$ rounds. 
    These edges are then ignored when collecting the $B$-hop neighborhoods, enabling our algorithm to leverage the aforementioned graph exponentiation technique.
    Our redesign of \cref{alg:local} stems from two observations:
    \begin{enumerate}
        \item A value $\beta_v$ changes by a factor of at most $1+\epsilon$ in a single round.
        \item Estimating $\beta_v$ is enough for \cref{alg:local} to be correct. In particular, a simple modification of the update rule by replacing $1+\epsilon$ to $1+k\epsilon$ for some bounded constant $k$ would still yield a $O(1)$-approximation. We describe the modified algorithm and analyze it in \cref{sec:mpc:correctness}.
    \end{enumerate}

    Based on Observations~1 and 2, we arrive at the following idea: instead of computing $\beta_v$ by aggregating over the entire neighborhood of $v$ in each round (as in \cref{alg:local}), sample a subset of the neighborhood and estimate the desired aggregations by extrapolating from the samples. 
    If we have enough samples so that $\beta$ is approximated to within a constant, then by Observation~2, our algorithm is correct.

    We outline how these two observations are incorporated into our MPC algorithm.
    Suppose we wish to simulate rounds $r+1, \dots r+B$ of \cref{alg:det-thresh}.
    Assume that we have previously  simulated the first $r$ rounds, and so every vertex $v$ knows its $\beta_v$ after $r$ rounds. 
    For each vertex $v$,
let $\lset^s_v(i)$ denote the set of neighbors $w$ of $v$ such that $\beta_w \in [(1+\epsilon)^{i-1}, (1+\epsilon)^i)$ at the end of round $s$.
    
    The update rules in \cref{alg:local} require the sum of $\beta$ values over the neighbors of $v$.
    To approximate these sums, motivated by Observation~1, our algorithm uniformly at random samples a subset of $\lset^r_v(i)$ for every $i$ and every $v$. 
    Since within $B$ rounds of the LOCAL algorithm, the $\beta$ value of a vertex can change by $(1+\eps)^B$ at most, sampling $O((1+\eps)^{O(B)} \cdot \poly \log n)$ 
    vertices from $\lset^r_v(i)$ allows to approximate the total sum of the $\beta$ values of the vertices in $\lset^r_v(i)$ in each of the next $B$ rounds. To simulate each round $s$ (for $s \in [r, r+B]$), we use \textit{different} independent samples from $\lset^r_{v}(i)$. The $\poly \log$ factor ensures that we have enough samples to use, as there are at most $B < \log n$ rounds.
    We emphasize here that our analysis is based on showing concentration for our estimates of the sum of $\beta$ values within $\lset^r_v(i)$, as opposed to within $\lset^q_v(j)$ for $q = r + 1 \ldots r + B$, over the next $B$ rounds.
    
    We make this intuition formal in \cref{lem:sampling}, by proving that $O((1+\epsilon)^{2B} \log n)$ samples from each set $\lset^r_v(i)$ guarantees a $1+O(\epsilon)$ approximation, leading us to \cref{alg:matching}.

    \textbf{Note on memory requirement.} The size of the subgraph around each vertex is $2^{O_{\epsilon}(B^2)}$. 
    Every vertex $v$ needs to participate in the round executions; therefore, we require $n \cdot 2^{O_{\epsilon}(B^2)}$ total memory. The maximum possible value of $B$ is thus $O_{\epsilon}(\alpha \sqrt{\log n})$, in which case the total required memory is $\tilde{O}(n^{1+\alpha} + m)$. 
    Setting $B = O_{\epsilon}(\sqrt{\log \lambda})$, we get the total required memory of $\tilde{O}(\lambda n)$.

\subsubsection{Removing the assumption on known arboricity in MPC}

Our algorithm depends on the knowledge of $\lambda$ in two ways:
    \begin{enumerate}
        \item The condition in Line~1 of \cref{alg:local} assumes the knowledge of $\tau$, which is computed based on $\lambda$.
        \item In the graph exponentiation approach, we need the value of $\sqrt{\log \lambda}$ to know how many rounds to simulate.
    \end{enumerate}

In order to remove these dependencies, we first observe that our algorithm can detect whether sufficiently many rounds have been run without knowing $\lambda$.
Specifically, we show that in $O(\log \lambda)$ rounds, one of the following conditions must hold: either the size of $|\neigh(\lset_{2\tau})|$ becomes smaller than $|\lset_0|$, or a $\frac{1}{1 + O(\epsilon)}$ fraction of $|\neigh(\lset_{2\tau})|$ is allocated to vertices not in $\lset_0$. 
Moreover, we prove that if one of the conditions holds, and we terminate the algorithm, the output is a $(2 + O(\epsilon))$-approximation.
We note that both conditions can be tested for in $O(1)$ MPC rounds.

Thanks to the above observation, we can ``guess'' the value of $\lambda$.
More precisely, we first run the algorithm assuming that $\lambda$ is a small constant and check whether the termination condition holds after $O(\log \lambda)$ rounds.
If yes, we know that we have obtained a $(2 + O(\epsilon))$-approximate solution.
If not, we increase the value of $\lambda$ and repeat.
We choose the value of $\lambda_i$ which we use in the $i$th trial, such that $\sqrt{\log \lambda_i} = 2^i$.
This ensures that the running time is only a constant factor larger than what we would have spent if we knew $\lambda$ upfront.

\section{Approximating maximum fractional allocation in low arboricity graphs} \label{sec:local}

The objective of this section is to show that \cref{alg:local} has an approximation factor of $2 + 10\epsilon$ if $\tau \geq \loglambda$.
Let $\mweight = $ $\sum_{(u,v) \in E} \vx'_{u,v}$ be the weight of the output fractional allocation.
Note that $\mweight = \sum_{v \in R} \min(\mcap_v, \alloc_v)$.

Consider the $\beta$ variables after the end of round $\tau$.
The minimum value the priority variables can take after $\tau$ rounds is $\beta_{min} = \frac{1}{(1+\epsilon)^\tau}$, and any $v \in R$ can take one of the following $2\tau + 1$ possible priority values:

\begin{equation*}
    \beta_v \in \{\beta_{min}, (1+\epsilon)\beta_{min}, \dots, (1+\epsilon)^{2\tau}\beta_{min}\}
\end{equation*}
For each $0 \leq j \leq 2\tau$, let $\lset_j$ be the set of vertices in $R$ with priority value $(1 + \epsilon)^j \beta_{min}$, that is, $\lset_j = \{ v \mid v \in R \mbox{ and } \beta_v =
(1 + \epsilon)^j \beta_{min}\}$.
We refer to $\lset_j$ sets by \emph{level sets}.
We note that some of these sets may be empty.
There are two main sources of possible suboptimality in the fractional allocation that \cref{alg:local} finds:

\begin{itemize}
    \item Over-allocation: If $\alloc_v$ is greater than $\mcap_v$, an $\alloc_v - \mcap_v$ amount of fraction allocated to $v$ will not be counted towards the objective.
    \item Under-allocation: If $\alloc_v$ is less than $\mcap_v$, an extra capacity of $\mcap_v - \alloc_v$ is left to be exploited for vertex $v$.
\end{itemize}
The following lemma shows that for vertices not in $\lset_{2\tau}$, the under-allocation loss is negligible, and for vertices not in $\lset_0$, the over-allocation loss is negligible.

\begin{lemma} \label{lem:alloc-bound} Assume that we run \cref{alg:local} for $\tau \geq 1$ rounds. Then, after all rounds have completed, we have
\begin{enumerate}
    \item for any $v \in \bigcup_{j=0}^{2\tau-1} \lset_j$, $\alloc_v \geq \frac{1}{1+3\epsilon} \mcap_v$, and
    \item for any $v \in \bigcup_{j=1}^{2\tau} \lset_j$, $\alloc_v \leq (1+3\epsilon) \mcap_v$.
\end{enumerate}
\end{lemma}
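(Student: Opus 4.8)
Throughout, write $\beta_v^{(s)}$ for the value of $\beta_v$ at the end of round $s$ (with $\beta_v^{(0)}=1$), $S_u^{(s)} \longDefiningEquals \sum_{v'\in\neigh_u}\beta_{v'}^{(s)}$, and $\alloc_v^{(s)}$ for the value $\sum_{u\in\neigh_v}\vx_{u,v}$ computed in round $s$, so that $\vx_{u,v}$ in round $s$ equals $\beta_v^{(s-1)}/S_u^{(s-1)}$. Since Lines~5--6 never modify $\alloc_v$, the quantity appearing in the statement is exactly $\alloc_v^{(\tau)}$. The update rule gives a trichotomy: in round $\rho$, either $\alloc_v^{(\rho)}\le \mcap_v/(1+\eps)$ and $\beta_v$ is multiplied by $1+\eps$ (``raised''), or $\alloc_v^{(\rho)}\ge(1+\eps)\mcap_v$ and $\beta_v$ is divided by $1+\eps$ (``lowered''), or $\mcap_v/(1+\eps)<\alloc_v^{(\rho)}<(1+\eps)\mcap_v$ and $\beta_v$ is unchanged. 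If $\beta_v$ was not raised in round $\tau$ then $\alloc_v^{(\tau)}>\mcap_v/(1+\eps)$, which already yields Part~1; if it was not lowered in round $\tau$ then $\alloc_v^{(\tau)}<(1+\eps)\mcap_v$, which already yields Part~2. So the only real work is in the two ``bad'' cases: for Part~1, $\beta_v$ was raised in round $\tau$ yet $v\notin\lset_{2\tau}$; for Part~2, $\beta_v$ was lowered in round $\tau$ yet $v\notin\lset_0$. These are mirror images, so I describe Part~1.

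The engine is a Lipschitz-type estimate. Since each $\beta_{v'}$ changes by a factor in $[(1+\eps)^{-1},\,1+\eps]$ per round, so does each $S_u^{(s)}$; hence for every edge $(u,v)$,
\[
\frac{\vx_{u,v}^{(\rho)}}{\vx_{u,v}^{(\rho-1)}} \;=\; \frac{\beta_v^{(\rho-1)}}{\beta_v^{(\rho-2)}}\cdot\frac{S_u^{(\rho-2)}}{S_u^{(\rho-1)}}
\]
lies within a factor $(1+\eps)^{\pm1}$ of $\beta_v^{(\rho-1)}/\beta_v^{(\rho-2)}$. As the latter ratio is independent of $u$, summing over $u\in\neigh_v$ gives $\alloc_v^{(\rho)}\ge \big(\beta_v^{(\rho-1)}/((1+\eps)\,\beta_v^{(\rho-2)})\big)\,\alloc_v^{(\rho-1)}$, together with the matching upper bound carrying the factor $(1+\eps)\,\beta_v^{(\rho-1)}/\beta_v^{(\rho-2)}$.

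For the bad case of Part~1, let $r$ be the largest round $\le\tau$ in which $\beta_v$ was not raised; such an $r$ exists with $r\ge1$, since otherwise $\beta_v$ was raised in every round and $v\in\lset_{2\tau}$. By the choice of $r$ we have $\alloc_v^{(r)}>\mcap_v/(1+\eps)$, with the stronger bound $\alloc_v^{(r)}\ge(1+\eps)\mcap_v$ if round $r$ was a ``lowered'' round, and $\beta_v$ was raised in each of rounds $r+1,\dots,\tau$. Iterating the lower-bound recursion over $\rho=r+1,\dots,\tau$ makes the $\beta_v$-ratios telescope:
\[
\alloc_v^{(\tau)} \;\ge\; \frac{\beta_v^{(\tau-1)}}{(1+\eps)^{\tau-r}\,\beta_v^{(r-1)}}\,\alloc_v^{(r)}.
\]
Because $\beta_v$ was raised in rounds $r+1,\dots,\tau$, we have $\beta_v^{(\tau-1)}=(1+\eps)^{\tau-r-1}\beta_v^{(r)}$, so the prefactor collapses to $\beta_v^{(r)}/((1+\eps)\,\beta_v^{(r-1)})$. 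If round $r$ was ``unchanged'' this equals $1/(1+\eps)$, giving $\alloc_v^{(\tau)}>\mcap_v/(1+\eps)^2$; if round $r$ was ``lowered'' it equals $1/(1+\eps)^2$ but $\alloc_v^{(r)}\ge(1+\eps)\mcap_v$, giving $\alloc_v^{(\tau)}\ge\mcap_v/(1+\eps)$. Either way $\alloc_v^{(\tau)}>\mcap_v/(1+\eps)^2\ge\mcap_v/(1+3\eps)$, using $\eps\le1$. Part~2 is the dual: take $r$ to be the last round in which $\beta_v$ was not lowered, apply the upper-bound recursion, and the same bookkeeping gives $\alloc_v^{(\tau)}<(1+\eps)^2\mcap_v\le(1+3\eps)\mcap_v$.

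The main obstacle is precisely these two bad cases: the round-$\tau$ update makes $\beta_v$ look under- (resp.\ over-) allocated, so one must reach back to the last round $r$ where the opposite held and transport that bound forward. The subtlety is that a naive transport accumulates a factor $(1+\eps)^{\tau-r}$; the point is that over those same rounds $\beta_v$ moved monotonically by exactly $(1+\eps)^{\tau-r}$, and the Lipschitz estimate shows $\alloc_v$ tracks the reciprocal motion of $\beta_v$ up to bounded multiplicative slack, so the large factors cancel and only a constant ($(1+\eps)^{\pm2}$) slack survives. One should also verify the boundary indices ($r=1$, $r=\tau-1$, empty telescopes) and note the harmless assumption $\eps\le1$ used to pass from $(1+\eps)^2$ to $1+3\eps$.
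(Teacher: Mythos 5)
Your proof is correct and follows essentially the same route as the paper's: identify the last round $r$ in which $\beta_v$ was not increased, use the update-rule threshold there (splitting into the ``unchanged'' and ``lowered'' subcases), and carry the bound forward through the subsequent raise-only rounds, losing only a $(1+\epsilon)^2 \le 1+3\epsilon$ factor. The only difference is presentational: you make the ``$\alloc_v$ does not decrease while $\beta_v$ keeps being raised'' step explicit via a telescoping multiplicative recursion, whereas the paper argues it round by round.
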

\begin{proof} Due to the symmetry of the two claims, we only prove the former.
Since $v$ is not in level set $\lset_{2\tau}$, there was a time that we did not increase $\beta_v$.
Let $t$ be the last round where $\beta_v$ is not increased.
At the end of round $t$, $\alloc_v \geq \frac{1}{1 + \epsilon}\mcap_v$.
For $t = \tau$, this completes the proof.
Otherwise, we consider the round $t + 1 \leq \tau$.
We first show that $\alloc_v \geq \frac{1}{1+3\epsilon} \mcap_v$ at the end of round $t + 1$.
Recall that $\alloc_v = \sum_{u \in \neigh_v} \frac{\beta_v}{\sum_{a' \in \neigh_u} \beta_{a'}}$.
If $\beta_v$ is unchanged at round $t$, the numerator of each term also remains unchanged.
The denominator terms are increased at most by a factor of $(1 + \epsilon)$.
So in total, $\alloc_v$ is not decreased by more than a factor of $(1 + \epsilon)$.
Hence, $\alloc_v \geq \frac{1}{(1+\epsilon)^2} \mcap_v \geq \frac{1}{(1+3\epsilon)} \mcap_v $ at the end of round $t + 1$.
In the other case, $\beta_v$ is decreased at round $t$, so the numerator of each term is also reduced by a factor of $(1 + \epsilon)$.
In total, $\alloc_v$ is decreased by a factor of at most $(1+\epsilon)^2$ at round $t + 1$.
Note that the reduction of $\beta_v$ at round $t$ means $\alloc_v$ was at least $(1 + \epsilon) \mcap_v$, and therefore at least $\frac{1}{1+\epsilon}$ at round $t + 1$.
So independent of whether $\beta_v$ was reduced or not, $\alloc_v \geq \frac{1}{1+3\epsilon} \mcap_v$ holds at round $t + 1$.

By definition of $t$, $\beta_v$ is increased in all rounds after $t$.
With a similar argument, we know that $\alloc_v$ does not decrease at any of these rounds.
So $\alloc_v$ remains at least $\frac{1}{1+3\epsilon} \mcap_v$ till the last round.
\end{proof}

We remark that \cref{lem:alloc-bound} is an adaptation of \cite[Lemma 1]{AZM18}.

For ease of notation, for a subset $S$ of vertices in $R$, we use $\mcap(S) = \sum_{v \in S} \mcap_v$ to denote the total capacity of $S$, and use $\neigh(S) = \bigcup_{v \in S} \neigh_v$ to denote the neighbor set of $S$.
\cref{lem:alloc-bound} implies the two lower bounds on $\mweight$. First, since each vertex in $\bigcup_{j<2\tau} \lset_j$ almost fill up their capacities, we have

\begin{equation} \label{eqn:weight-bound-1} \textstyle
    \mweight \geq \frac{1}{1+3\epsilon} \sum_{j=0}^{2\tau-1} \mcap(\lset_j).
\end{equation}
Second, since all vertices in $\bigcup_{j > 0} \lset_j$ has bounded over-allocation, when line 6 of \cref{alg:local} was executed, the ratio of $\frac{\mcap_v}{\alloc_v}$ was at least $\frac{1}{1 + 3\epsilon}$.
Hence, we have

\begin{equation} \label{eqn:weight-bound-2} \textstyle
    \mweight \geq \frac{1}{1+3\epsilon} \sum_{j=1}^{2\tau} \sum_{v \in \lset_j} \alloc_v.
\end{equation}

Let $OPT$ be the weight of an optimal fractional allocation. The following lemma gives an upper bound on $OPT$.

\begin{lemma} \label{lem:goal} Assume that we run \cref{alg:local} for $\tau \geq 1$ rounds. Then, after all rounds have completed, we have
\begin{equation} \label{eqn:goal} \textstyle
     OPT \leq (1 + 3\epsilon) \mweight + |\neigh(\lset_{2\tau})|.
\end{equation}
\end{lemma}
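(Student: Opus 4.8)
The plan is to bound $OPT$ by comparing an optimal fractional allocation against the capacities of the level sets, using the standard observation sketched in the overview: for any downward-closed union of level sets $S = \bigcup_{j \le \ell} \lset_j$, the weight of any fractional allocation is at most $\mcap(S) + |\neigh(R \setminus S)|$, because every edge either has its $R$-endpoint in $S$ (and the total allocation into $S$ is capped by $\mcap(S)$) or has its $R$-endpoint in $R \setminus S$ (and each vertex of $L$ contributes at most $1$ in total, so the allocation on these edges is at most the number of $L$-vertices incident to $R\setminus S$, i.e. at most $|\neigh(R\setminus S)|$). First I would make this general inequality precise, applied to an optimal solution $\vx^{OPT}$ and to the specific choice $S = \bigcup_{j=0}^{2\tau-1}\lset_j$, so that $R \setminus S = \lset_{2\tau}$, giving
\begin{equation*}
OPT \le \sum_{j=0}^{2\tau-1}\mcap(\lset_j) + |\neigh(\lset_{2\tau})|.
\end{equation*}

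Next I would convert the capacity term $\sum_{j=0}^{2\tau-1}\mcap(\lset_j)$ into $(1+3\epsilon)\mweight$ using \cref{eqn:weight-bound-1}, which says exactly $\sum_{j=0}^{2\tau-1}\mcap(\lset_j) \le (1+3\epsilon)\mweight$. Combining the two displayed bounds yields $OPT \le (1+3\epsilon)\mweight + |\neigh(\lset_{2\tau})|$, which is \cref{eqn:goal}. So the whole argument reduces to establishing the combinatorial inequality in the first paragraph carefully.

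To prove that inequality I would split the edge set of $G$ according to the side of the $R$-endpoint relative to $S$. Write $OPT = \sum_{(u,v)\in E}\vx^{OPT}_{u,v} = \sum_{v \in S}\sum_{u \in \neigh_v}\vx^{OPT}_{u,v} + \sum_{v \in R\setminus S}\sum_{u\in \neigh_v}\vx^{OPT}_{u,v}$. For the first sum, swap to summing the capacity constraint over $v \in S$: each $v \in S$ satisfies $\sum_{u\in\neigh_v}\vx^{OPT}_{u,v} \le \mcap_v$, so the first sum is at most $\mcap(S)$. For the second sum, reorganize by the $L$-endpoint: it equals $\sum_{u \in \neigh(R\setminus S)}\big(\sum_{v \in \neigh_u \cap (R\setminus S)}\vx^{OPT}_{u,v}\big)$, and each inner sum is at most $\sum_{v\in\neigh_u}\vx^{OPT}_{u,v}\le 1$ by the $L$-side constraint of a fractional allocation; there are $|\neigh(R\setminus S)|$ terms, so the second sum is at most $|\neigh(R\setminus S)|$.

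The main obstacle is not any single step — each is elementary — but making sure the bookkeeping of which edges/vertices are counted where is airtight, in particular that $R\setminus S = \lset_{2\tau}$ exactly (which holds because the $\lset_j$ partition $R$) and that no edge is double-counted or dropped in the split. A secondary point to double-check is that \cref{eqn:weight-bound-1} is applied with the correct index range $j=0,\dots,2\tau-1$ so that it pairs with this choice of $S$; using any smaller $S$ would force a worse bound. Once these are confirmed, the lemma follows by direct substitution.
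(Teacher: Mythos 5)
Your proposal is correct and follows essentially the same route as the paper: choose the level-set cut at $\ell = 2\tau-1$ and combine the resulting bound $OPT \leq \sum_{j=0}^{2\tau-1}\mcap(\lset_j) + |\neigh(\lset_{2\tau})|$ with \cref{eqn:weight-bound-1}. The only difference is that the paper imports this cut bound as Claim~1 of \cite{AZM18}, whereas you prove it directly by splitting the optimal fractional allocation according to the $R$-endpoint's side of the cut, which is a correct (and self-contained) way to obtain the same inequality.
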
 
\begin{proof}
In \cite[Claim 1]{AZM18}, it has been shown that
\begin{equation*} \textstyle
    \mbox{for all } 0 \leq \ell \leq 2\tau \mbox{, we have } OPT \leq \sum_{j=0}^{\ell} \mcap(\lset_j) + |\neigh (\bigcup_{j=\ell+1}^{2\tau} \lset_{j})|.
\end{equation*}
That is, the optimal fractional allocation is at most the total capacity of $\bigcup_{j \leq \ell} \lset_j$ plus the total number of neighbors of $\bigcup_{j > \ell} \lset_j$.
By setting $\ell = 2\tau - 1$, we obtain
\begin{equation*} \textstyle
    OPT \leq \sum_{j=0}^{2\tau-1} \mcap(\lset_j) + |\neigh(\lset_{2\tau})|.
\end{equation*}
By \cref{eqn:weight-bound-1}, the term $\sum_{j=0}^{2\tau-1} \mcap(\lset_j)$ can be replaced by $(1 + 3\epsilon) \mweight$, which yields the desired inequality.
\end{proof}

\begin{remark}
Before proceeding, we remark on how the arboricity naturally arises in our analysis as follows.
By \cref{lem:goal}, if the current fractional solution is as large as $(1 - O(\eps)) |\neigh(\lset_{2\tau})|$, then the solution is a $(2 + O(\eps))$ approximation.
The only reason that $\mweight$ does not reach $(1 - O(\eps)) |\neigh(\lset_{2\tau})|$ is when most fractions from $\neigh(\lset_{2\tau})$ are allocated to $\lset_0$, because $\lset_0$ is the only set of vertices having unbounded over-allocation.
Intuitively, this implies that the subgraph induced by $\neigh(\lset_{2\tau}) \cup \lset_0$ is dense because the vertices in $\lset_0$ have low priority values compared to the vertices in $\lset_{2\tau}$.
For a large enough $\tau$, such a dense subgraph cannot exist in a low-arboricity graph, implying a contradiction.
\end{remark}


We are ready to prove the approximation factor of \cref{alg:local}.

\begin{theorem} \label{thm:apx-factor} Assume that we run \cref{alg:local} for $\tau \geq \loglambda$ rounds. Then, $OPT \leq (2 + 10\epsilon)\cdot \mweight$.
\end{theorem}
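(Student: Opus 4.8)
The plan is to derive a contradiction from assuming that $\mweight < \frac{1}{2+10\epsilon} OPT$. By \cref{lem:goal}, this assumption forces $|\neigh(\lset_{2\tau})|$ to be large relative to $\mweight$; more precisely, combining $OPT \le (1+3\epsilon)\mweight + |\neigh(\lset_{2\tau})|$ with the (trivial) bound $OPT \ge \mweight$ and the failure assumption, I would extract an inequality of the form $|\neigh(\lset_{2\tau})| \ge (1 - O(\epsilon)) \cdot \text{(something comparable to } \mweight)$, so in particular $|\neigh(\lset_{2\tau})| > c \cdot \mweight$ for a constant $c$ close to $1$. The next step is to account for where the fractional mass sitting on edges incident to $\neigh(\lset_{2\tau})$ goes. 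Every vertex $u \in \neigh(\lset_{2\tau})$ has $\sum_{v \in \neigh_u} \vx_{u,v} = 1$ exactly (this holds by construction in line~\ref{alg-local:agg1}, since the $\vx_{u,v}$ for $u\in L$ form a probability distribution over $\neigh_u$), so the total fractional mass emanating from $\neigh(\lset_{2\tau})$ is exactly $|\neigh(\lset_{2\tau})|$. This mass is distributed among the level sets $\lset_0, \ldots, \lset_{2\tau}$ on the $R$ side.

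Now split this mass: the portion landing in $\bigcup_{j\ge 1}\lset_j$ contributes to $\sum_{j=1}^{2\tau}\sum_{v\in\lset_j}\alloc_v$, which by \cref{eqn:weight-bound-2} is at most $(1+3\epsilon)\mweight$. So if $|\neigh(\lset_{2\tau})| > (2+O(\epsilon))\mweight$, then a mass of at least, say, $\tfrac{1}{2}|\neigh(\lset_{2\tau})|$ must be allocated from $\neigh(\lset_{2\tau})$ into $\lset_0$ specifically. The key structural observation, as flagged in the remark, is that each edge $(u,v)$ with $u \in \neigh(\lset_{2\tau})$ and $v \in \lset_0$ carries fractional value $\vx_{u,v} = \frac{\beta_v}{\sum_{v'\in\neigh_u}\beta_{v'}} \le \frac{\beta_v}{\beta_w}$ where $w$ is the neighbor of $u$ in $\lset_{2\tau}$ (which exists since $u \in \neigh(\lset_{2\tau})$), and since $\beta_v = \beta_{min}$ while $\beta_w = (1+\epsilon)^{2\tau}\beta_{min}$, this is at most $(1+\epsilon)^{-2\tau}$. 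Therefore the number of edges between $\lset_0$ and $\neigh(\lset_{2\tau})$ is at least $\tfrac{1}{2}|\neigh(\lset_{2\tau})| \cdot (1+\epsilon)^{2\tau}$.

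On the other hand, $\lset_0 \cup \neigh(\lset_{2\tau})$ induces a subgraph of $G$ whose edge count is at most $\lambda \cdot (|\lset_0| + |\neigh(\lset_{2\tau})|)$ by definition of arboricity (a graph of arboricity $\lambda$ has at most $\lambda$ times its vertex count many edges). To close the argument I also need a bound relating $|\lset_0|$ to $|\neigh(\lset_{2\tau})|$: since each vertex in $\lset_0$ is over-allocated — its $\alloc_v$ is at most... actually, $\lset_0$ vertices have $\beta_v = \beta_{min}$ and were never decreased further, meaning they may be heavily over-allocated, so I instead want to bound $|\lset_0|$ via the total mass argument too, or simply observe $\mweight \ge \sum_{v\in\lset_0}\min(\mcap_v,\alloc_v) \ge |\lset_0|$ since capacities are $\ge 1$ and these vertices are over-allocated; combined with $|\neigh(\lset_{2\tau})| > 2\mweight \ge 2|\lset_0|$ this gives $|\lset_0| < |\neigh(\lset_{2\tau})|$, so the vertex count of the induced subgraph is $O(|\neigh(\lset_{2\tau})|)$. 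Thus the edge count is $O(\lambda \cdot |\neigh(\lset_{2\tau})|)$, contradicting the lower bound $\Omega(|\neigh(\lset_{2\tau})| \cdot (1+\epsilon)^{2\tau})$ as soon as $(1+\epsilon)^{2\tau} \gg \lambda$, i.e. $\tau \gtrsim \log_{1+\epsilon}\lambda$. Chasing the precise constants so that $\tau \ge \loglambda$ suffices to make $(1+\epsilon)^{2\tau}$ large enough — in particular beating the constant factor hidden in the $O(\lambda)$ bound on density and the factor $\tfrac12$ from the mass split, with $\epsilon/4$-type slack — is where I expect the main bookkeeping effort to lie; the conceptual core is just the density-versus-arboricity clash.

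\begin{proof}[Proof sketch]
Assume for contradiction that $OPT > (2+10\epsilon)\mweight$. Since trivially $OPT \ge \mweight$, \cref{lem:goal} gives
\begin{equation*}
|\neigh(\lset_{2\tau})| \;\ge\; OPT - (1+3\epsilon)\mweight \;>\; (1 - 3\epsilon)\,\frac{OPT}{2+10\epsilon}\cdot 2 \;\ge\; (2 + O(\epsilon))\,\mweight
\end{equation*}
after simplification (the $2+10\epsilon$ margin is chosen precisely so the surviving constant exceeds $2$ with $\Theta(\epsilon)$ room to spare). Each $u \in \neigh(\lset_{2\tau})$ distributes fractional mass summing to exactly $1$ over its neighbors in $R$; the mass it sends to $\bigcup_{j\ge1}\lset_j$ is counted in $\sum_{j=1}^{2\tau}\sum_{v\in\lset_j}\alloc_v \le (1+3\epsilon)\mweight$ by \cref{eqn:weight-bound-2}. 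Hence the mass sent from $\neigh(\lset_{2\tau})$ to $\lset_0$ is at least $|\neigh(\lset_{2\tau})| - (1+3\epsilon)\mweight \ge (1 - O(\epsilon))\cdot\tfrac12|\neigh(\lset_{2\tau})|$.

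Each edge $(u,v)$ from $\neigh(\lset_{2\tau})$ to $\lset_0$ satisfies $\vx_{u,v} \le \beta_v / \beta_w \le (1+\epsilon)^{-2\tau}$, where $w$ is a neighbor of $u$ in $\lset_{2\tau}$. Therefore the number of such edges is at least $(1-O(\epsilon))\tfrac12\,|\neigh(\lset_{2\tau})|\,(1+\epsilon)^{2\tau}$. But $\lset_0 \cup \neigh(\lset_{2\tau})$ induces a subgraph with at most $\lambda\,(|\lset_0| + |\neigh(\lset_{2\tau})|)$ edges; since $|\lset_0| \le \mweight \le \tfrac12|\neigh(\lset_{2\tau})|$, this is at most $2\lambda\,|\neigh(\lset_{2\tau})|$. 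Combining the two estimates yields $(1+\epsilon)^{2\tau} = O(\lambda/\epsilon)$, which contradicts $\tau \ge \loglambda$ once the constants are tracked. Hence $OPT \le (2+10\epsilon)\mweight$.
\end{proof}
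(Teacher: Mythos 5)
Your proposal is the paper's argument run in contrapositive form: reduce everything to comparing $|\neigh(\lset_{2\tau})|$ with $\mweight$ via \cref{lem:goal}, bound the mass that $\neigh(\lset_{2\tau})$ can push into $\lset_0$ using the priority gap plus the arboricity bound on the subgraph induced by $\neigh(\lset_{2\tau}) \cup \lset_0$, and control $|\lset_0|$ separately (your observation $\mweight \ge \mcap(\lset_0) \ge |\lset_0|$ plays the role of the paper's Case~1, which uses \cref{eqn:weight-bound-1}). The conceptual core is therefore the same. However, two of your quantitative steps do not follow as stated. From $OPT > (2+10\epsilon)\mweight$ and \cref{lem:goal} you may only conclude $|\neigh(\lset_{2\tau})| > (1+7\epsilon)\mweight$, not $(2+O(\epsilon))\mweight$; consequently the mass forced into $\lset_0$ is only $|\neigh(\lset_{2\tau})| - (1+3\epsilon)\mweight > \frac{4\epsilon}{1+7\epsilon}\,|\neigh(\lset_{2\tau})|$, i.e.\ a $\Theta(\epsilon)$ fraction of $|\neigh(\lset_{2\tau})|$, not the $\tfrac12 |\neigh(\lset_{2\tau})|$ your sketch relies on. The contradiction still closes, but only barely and only because the per-edge bound is as strong as $\frac{\epsilon}{4\lambda}$: the arboricity bound (with $|\lset_0| < |\neigh(\lset_{2\tau})|$, which your $\mweight \ge |\lset_0|$ observation does give) caps the mass entering $\lset_0$ at $2\lambda|\neigh(\lset_{2\tau})| \cdot \frac{\epsilon}{4\lambda} = \frac{\epsilon}{2}|\neigh(\lset_{2\tau})|$, and $\frac{4\epsilon}{1+7\epsilon} \ge \frac{\epsilon}{2}$ holds precisely for $\epsilon \le 1$. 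So the slack is razor-thin rather than the ``$\Theta(\epsilon)$ room to spare on top of a factor $2$'' you describe; the constant-chasing you explicitly defer is exactly where your stated intermediate numbers are wrong, even though a correct accounting (this is the paper's Case~2, using \cref{eqn:weight-bound-2}) rescues the argument.

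A second, smaller gap is the per-edge bound itself. The output values $\vx_{u,v}$ are computed from the $\beta$ values at the \emph{beginning} of round $\tau$, whereas the level sets are defined by the $\beta$ values after round $\tau$'s update; since each $\beta$ moves by at most one level per round, the guaranteed ratio between a $\lset_{2\tau}$-neighbor and a $\lset_0$-neighbor of $u$ at the time $\vx_{u,v}$ was set is only $(1+\epsilon)^{2\tau-2}$, not $(1+\epsilon)^{2\tau}$. With $\tau \geq \loglambda$ this still yields $\vx_{u,v} \le \frac{\epsilon}{4\lambda}$ (the $+1$ in the definition of $\tau$ is there precisely to absorb the lost two levels), so the fix is routine, but your bound $\vx_{u,v} \le (1+\epsilon)^{-2\tau}$ is not justified as written.
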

\begin{proof} Let $N'$ denote $\neigh(\lset_{2\tau})$. By \cref{eqn:goal}, if we can show that $|N'| \leq (1 + 7\epsilon) \cdot \mweight$, then $OPT \leq (2 + 10\epsilon)\mweight$ and thus \cref{alg:local} is a $(2 + 10\epsilon)$-approximation algorithm. To prove that $|N'| \leq (1 + 7\epsilon) \mweight$, we consider two cases:

\begin{itemize}
    \item[] \textbf{Case 1:} $|N'| \leq |\lset_0|$. By \cref{eqn:weight-bound-1}, we have
    \begin{equation*}
        \mweight \geq \frac{1}{1+3\epsilon} \mcap(\lset_0) \geq \frac{1}{1+3\epsilon} |\lset_0| \geq \frac{1}{1+3\epsilon}|N'|.
    \end{equation*}
    Hence, $|N'| \leq (1 + 7\epsilon) \mweight$ holds for this case.
    
    \item[] \textbf{Case 2:} $|N'| > |\lset_0|$. We first show that $\sum_{j=1}^{2\tau} \sum_{v \in \lset_j} \alloc_v \geq (1 - \frac{\epsilon}{2}) |N'|$.
    Consider a vertex $u \in L$ and a vertex $v \in \lset_0$.
    Since $N'$ is the neighbor set of $\lset_{2\tau}$, there is a vertex $v' \in \lset_{2\tau}$ that is a neighbor of $u$.
    In the end of round $\tau$, $\beta_{v'}$ is larger than $\beta_{v}$ by a factor of $(1 + \epsilon)^{2\tau}$.
    Thus, in the beginning of round $\tau$, $\beta_{v'}$ is larger than $\beta_v$ by at least a factor of $(1 + \epsilon)^{2\tau-2}$.
    Recall that $x_{u,v}$ and $x_{u,v'}$ are calculated proportional to the priority values of $v$ and $v'$ in the beginning of round $\tau$.
    Thus, $\vx_{u,v} \leq \vx_{u,v'} / (1+\epsilon)^{2\tau - 2}$.
    Since $\tau = \loglambda$, we have $\vx_{u,v} \leq \frac{\epsilon}{4\lambda}\cdot\vx_{u,v'} \leq \frac{\epsilon}{4\lambda}$.
    That is, every edge $(u, v)$ with $u \in N'$ and $v \in \lset_0$ allocates at most an amount of $\frac{\epsilon}{4\lambda}$ to $\lset_0$.
    Consider the subgraph $G'$ of $G$ induced by $N' \cup \lset_0$.
    Since the arboricity of $G$ is bounded by $\lambda$ and $|N'| > |\lset_0|$, the number of edges in $G'$ is at most $\lambda \cdot (|N'| + |\lset_0|) < 2 \lambda |N'|$.
    Therefore, the total allocation from $N'$ to $\lset_0$, $\sum_{u \in N'} \sum_{v \in \lset_0} \vx_{u,v}$, is at most $2 \lambda |N'| \cdot \frac{\epsilon}{4\lambda} = \frac{\epsilon}{2} |N'|$.
    That is, the total allocation from $N'$ to vertices not in $\lset_0$ is at least $(1 - \frac{\epsilon}{2})|N'|$.
    This implies that $\sum_{j=1}^{2\tau} \sum_{v \in \lset_j} \alloc_v \geq (1 - \frac{\epsilon}{2}) |N'|$.

    By \cref{eqn:weight-bound-2}, $\mweight$ is at least $\frac{1}{1+3\epsilon} \sum_{j=1}^{2\tau} \sum_{v \in \lset_j} \alloc_v$. Therefore, we obtain
    
    \begin{equation*}
        \mweight \geq \frac{1}{1+3\epsilon} \sum_{j=1}^{2\tau} \sum_{v \in \lset_j} \alloc_v \geq \frac{1}{1+3\epsilon} (1 - \frac{\epsilon}{2}) |N'| \geq \frac{1}{1 + 7\epsilon} |N'|,
    \end{equation*}
    where the last inequality holds for $\epsilon \leq 1$. Hence, $|N'| \leq (1 + 7\epsilon) \mweight$ also holds for this case.
\end{itemize}

Since we have $|N'| \leq (1 + 7\epsilon) \mweight$ in both cases, the theorem holds.
\end{proof}

\noindent Note that \cref{thm:LOCAL-allocation} is a direct consequence of \cref{thm:apx-factor}.

We conclude this section by the following remark. The proof of \cref{thm:apx-factor} also implies that after $\loglambda$ rounds, at least one of the following conditions must hold: either $\neigh(\lset_{2\tau}) \leq |\lset_0|$, or $\sum_{j=1}^{2\tau} \sum_{v \in \lset_j} \alloc_v \geq (1 - \frac{\epsilon}{2}) |N'|$; moreover, if one of the conditions hold at the end of the last round, the output is a $(2 + 10\epsilon)$-approximation. This implies a variation of \cref{alg:local} that does not assume the knowledge of $\lambda$: Instead of running the for-loop in lines 1-4 for $\loglambda$ rounds, we terminate the loop when one of the above conditions hold. By the discussion above, the variation is a $(2 + 10\epsilon)$-approximation algorithm that performs $\tau \leq \loglambda$ rounds of computation. The new termination condition for the loop can be checked in $O(1)$ MPC rounds. However, it is not clear how to check it in $O(1)$ LOCAL rounds.

\section{MPC algorithm}
\label{sec:mpc}

\begin{algorithm}
\begin{algorithmic}[1]
\Statex \textbf{Input:} $G = (L \cup R, E)$, capacity constraints $\{\mcap_v\}_{v \in R}$, number of rounds $\tau \geq 1$, approximation parameter $\epsilon > 0$.
\Statex \textbf{Output:} A $2+\epsilon$ approximate fractional allocation of $G$
\Statex \textbf{Parameters:} $B = B_{\epsilon/48}$ \text{ as per } \cref{eqn:bvalue}, and $t = (1+\epsilon)^{2B} \cdot \epsilon^{-5} \cdot \log n$
\Statex \textbf{Initialization:} Set $\beta_v \gets 1$ for every $v \in R$ and $\beta_u = \sum_{v\in \neigh_u} \beta_v$ for all $u \in L$

\For{ phases $p = 0, 1, \dots \tau/B-1$}
    \State for each $w\in V(G)$:
    \Statex \hspace{1cm} Partition $\neigh_w$ into $4\tau+1$ groups $\lset_{-2\tau},\dots \lset_{2\tau}$ as follows, 
    \Statex \hspace{3.5cm} $$\lset_x \gets \Big\{u \in \neigh_w \mid \beta_u \in ((1+\epsilon)^{x-1}, (1+\epsilon)^x] \Big\} \text{ for each } x = -2\tau, \cdots ,2\tau$$ \label{line:block-splitting}
    
    \State for each vertex $w \in V(G)$, integer $r \in [0, B)$, integer $x \in [-2\tau, 2\tau]$:
    \Statex \hspace{1cm} $E_{r, v, x} \gets $ Sample $t$ edges from $\neigh_v \bigcap \lset_x$ u.a.r.
    \Statex \hspace{0.25cm} $H_r:=$ Graph induced by edges in $\cup E_{r, v, x}$ over all $v, x$.
    \Statex \hspace{0.25cm} $H_{r, v}:=$ Ball of radius $B$ around $v$ in $H_r$.
    \Statex \hspace{0.25cm} $\neigh_{r, v}:=$ neighborhood of $v$ in $H_r$. \label{line:sampling}
    
    \For { rounds $r = pB+1, pB+2, \dots (p+1)B$} \label{line:local-sim-begin}
        \Statex \hspace{1cm} \textbf{for} each $u \in L$:
            \State \hspace{0.25cm} Set $\beta_{u} \gets \frac{|\neigh_u|}{|\neigh_{r, u}|} \sum\limits_{v \in \neigh_{r, u}} \beta_{v}$ \label{line:est1}
        \Statex \hspace{1cm} \textbf{for} each $v \in R$:
            \State \hspace{0.25cm} Set $\halloc_v \gets \frac{|\neigh_v|}{|\neigh_{r, v}|} \sum \limits_{u \in \neigh_{r, v}} \frac{\beta_v}{\beta_u}$ and update  $\beta_v$ as follows, \label{line:est2}
            
            \State \hspace{0.25cm} $$ \beta_v \gets \begin{cases} 
            \beta_v (1 + \epsilon) \ \ \ \text{if } \halloc_v \leq \mcap_v / (1 + \epsilon) \\
            \beta_v / (1 + \epsilon) \ \ \text{if } \halloc_v \geq \mcap_v (1 + \epsilon) \\
            \beta_v \ \ \ \ \ \ \ \ \ \ \ \ \text{otherwise}      
            \end{cases} $$
    \EndFor
    \label{line:local-sim-end}
    \State Each $v \in R$ calculates $\vx_{u, v} = \min(1, \frac{\mcap_v}{\halloc_v}) \cdot \sum\limits_{u \in \neigh_v} \frac{\beta_v}{\beta_u} $, the desired fractional allocation.
    \label{line:last-step}
\EndFor
\end{algorithmic}
\caption{A fractional allocation algorithm for bipartite graphs}
\label{alg:matching}
\end{algorithm}

In this section we outline how \cref{alg:local} can be efficiently implemented in sublinear MPC within $\tilde{O}(\sqrt{\log \lambda})$ rounds. Towards this goal, we present \cref{alg:matching}, a LOCAL algorithm with small \textit{locality volume}. The locality volume is the size of the subgraph around each vertex $v$ that is sufficient to obtain the output of $v$.  
More precisely, for every vertex $v \in V(G)$ and $\Theta(\sqrt{\log \lambda})$ rounds of \cref{alg:matching}, the output of $v$ is influenced by at most $O(\lambda \cdot \text{poly}(\log n))$ vertices. Mirroring the sparsification technique in \cite{GU19}, yields us our result. \cref{alg:local} gives a $2 + \epsilon$ approximation. Applying the framework of \cite{ghaffari2022massively}, we can boost this ratio to $1 + \epsilon$. A modification in the reduction is needed since the framework of \cite{ghaffari2022massively} is for the more general $b$-matching problem. We discuss details in \cref{sec:framework}.

\begin{theorem} \label{thm:fast-mpc}
    There is an $O_{\epsilon}(\sqrt{\log \lambda} \cdot \log \log \lambda)$-round MPC algorithm using $n^{\alpha}$ local memory (for any constant $\alpha \in (0, 1)$) and $\tilde{O}(\lambda n)$
global memory that with high probability computes an $2+\epsilon$ approximate allocation of any graph with arboricity $\lambda$. Using the framework of \cite{ghaffari2022massively}, the approximation can be further reduced to $1+\epsilon$ with the same asymptotic round complexity.
\end{theorem}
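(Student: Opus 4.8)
The plan is to realize, in full, the strategy outlined in \cref{sec:overview}. The vehicle is \cref{alg:matching}: it is a LOCAL algorithm that emulates the variant of \cref{alg:local} with widened update thresholds, but every estimate it computes reads only from the sampled sparsifiers $H_r$, so it has small \emph{locality volume}, and this is exactly what lets us simulate it in sublinear MPC via the ``collect the radius-$O(B)$ ball, then run the phase with no further communication'' template, where the balls are assembled by graph exponentiation (doubling the radius each round) in the style of \cite{GU19,lenzen2010brief}. So the proof reduces to three claims: (i) run for $\tau=\loglambda$ rounds, \cref{alg:matching} outputs, with high probability, a fractional allocation of weight at least $OPT/(2+\epsilon)$; (ii) over any window of $B$ consecutive rounds, the output of every vertex is determined by a subgraph of size $2^{O_\epsilon(B^2)}\cdot\poly\log n$ around it; (iii) $B$ can be taken to be $\Theta_\epsilon(\sqrt{\log\lambda})$ so that this subgraph has size $\tilde{O}(\lambda)$ and the simulation fits the claimed round and memory budgets. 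Then we round the fractional allocation to an integral one (\cref{sec:rounding}) and invoke the $b$-matching framework of \cite{ghaffari2022massively}, adapted to allocation in \cref{sec:framework}, to push $2+\epsilon$ down to $1+\epsilon$.

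For (ii)--(iii): within a phase, round $pB+j$ consults only the $2$-hop neighborhood of each vertex in the single sparsifier $H_{j-1}$, in which every vertex has degree $t=(1+\epsilon)^{2B}\epsilon^{-5}\log n$ inside each of the $O(\tau)$ frozen level sets; iterating over the $B$ rounds of the phase, the state of $v$ at the end of the phase depends only on the ball of radius $O(B)$ around $v$ in $H_0\cup\cdots\cup H_{B-1}$, whose size is $(O(\tau)\cdot t)^{O(B)}=2^{O_\epsilon(B^2+B\log\log n)}=2^{O_\epsilon(B^2)}\poly\log n$ using $B\le O_\epsilon(\sqrt{\log n})$. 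Each phase is then simulated in MPC by: (a) distributing the current $\beta$-values along the edges and drawing the samples of Lines~\ref{line:block-splitting}--\ref{line:sampling}, in $O(1)$ rounds (the per-vertex sampling from a possibly huge level set is a standard $O(1)$-round primitive); (b) collecting, for every $v$, its radius-$O(B)$ ball in $\bigcup_r H_r$ by graph exponentiation in $O(\log B)$ rounds, which is feasible since the ball fits in the machine memory $n^\alpha$; (c) simulating the $B$ LOCAL rounds locally. Choosing the constant inside $B=\Theta_\epsilon(\sqrt{\log\lambda})$ so that $2^{O_\epsilon(B^2)}\le\lambda$ makes each per-vertex ball of size $\tilde{O}(\lambda)$, hence the global memory $\tilde{O}(\lambda n)$ (recalling $|E|\le\lambda n$ for arboricity $\lambda$), keeps the ball below $n^\alpha$ since $\lambda\le n$, and yields $\tau/B=\Theta_\epsilon(\sqrt{\log\lambda})$ phases, each costing $O(\log B)=O(\log\log\lambda)$ MPC rounds — a total of $O_\epsilon(\sqrt{\log\lambda}\cdot\log\log\lambda)$ rounds.

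For (i): the estimators in Lines~\ref{line:est1}--\ref{line:est2} reweight a uniform sample of a \emph{frozen} level set $\lset^r_v(i)$. Since a $\beta$-value changes by a factor of at most $1+\epsilon$ per round (Observation~1), over the $\le B$ rounds of a phase every member of $\lset^r_v(i)$ stays within a $(1+\epsilon)^{O(B)}$ multiplicative window, so $t=(1+\epsilon)^{2B}\epsilon^{-5}\log n$ fresh independent samples per round (the $\poly\log$ slack is what supplies enough independent samples for all $B<\log n$ rounds) give, via a Chernoff/Hoeffding bound, a $(1\pm\epsilon^2)$-multiplicative estimate of every partial sum $\sum_{w\in\lset^r_v(i)}\beta_w$, hence of $\beta_u$ and of $\halloc_v$; a union bound over all $n$ vertices, all $O(\tau)$ level sets, and all $\tau$ rounds keeps the failure probability polynomially small. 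Conditioned on this good event, \cref{alg:matching} behaves exactly like \cref{alg:local} run with a $(1+\Theta(\epsilon))$-widened update rule, for which the chain of arguments \cref{lem:alloc-bound}$\to$\cref{thm:apx-factor}, re-run with the inflated constants as carried out in \cref{sec:mpc:correctness}, still yields $OPT\le(2+\epsilon)\mweight$ after $\loglambda$ rounds; this is \cref{lem:sampling}. Rounding the resulting fractional allocation loses only a $1+\epsilon$ factor (\cref{sec:rounding}), and feeding the $(2+\epsilon)$-approximate integral routine into the framework of \cite{ghaffari2022massively} (adapted in \cref{sec:framework}) gives a $(1+\epsilon)$-approximation with only an $O_\epsilon(1)$ overhead in the number of invocations, hence the same asymptotic round count.

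The main obstacle is claim (i): making precise that sampling from the level sets \emph{as frozen at the start of a phase}, rather than the true level sets at the instant each estimate is needed, remains accurate enough. One must simultaneously control (a) the drift of individual $\beta$-values within a phase, so that reweighting a uniform sample of a frozen level set is unbiased up to a $(1+\epsilon)^{O(B)}$ factor; (b) the concentration, so that $t$ samples suffice with room for a union bound over all vertices and all $\tau$ rounds; and (c) the robustness of \cref{alg:local}'s analysis to replacing every $\alloc_v$ by a $(1\pm\epsilon^2)$-approximation, which forces a careful re-derivation of \cref{lem:alloc-bound} and \cref{thm:apx-factor} with widened thresholds in \cref{sec:mpc:correctness}. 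A secondary technical point is tuning the constants inside $B=\Theta_\epsilon(\sqrt{\log\lambda})$ so that the three competing constraints — per-vertex ball of size $\tilde{O}(\lambda)$, ball size below $n^\alpha$, and an essentially integral number $\tau/B$ of phases — hold at once.
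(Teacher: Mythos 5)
Your proposal follows essentially the same route as the paper: the same sparsified simulation of \cref{alg:local} via per-phase sampling from level sets frozen at the start of each $B$-round window, the same concentration argument (\cref{lem:sampling}) exploiting the $(1+\epsilon)^{O(B)}$ drift bound, the same reduction of correctness to the widened-threshold variant analyzed in \cref{sec:mpc:correctness}, the same choice $B=\Theta_\epsilon(\sqrt{\log\lambda})$ with graph exponentiation giving $O(\log B)$ MPC rounds per phase, and the same final rounding plus the \cite{ghaffari2022massively} framework. One small inaccuracy: the rounding in \cref{sec:rounding} yields only a $\Theta(1)$-approximate integral allocation (not a $1+\epsilon$-factor loss as you state), but since the framework only needs a constant-factor starting point, this does not affect the final $1+\epsilon$ guarantee or the round complexity.
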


In \cref{alg:local}, in each round a vertex $v$ communicates with \textit{all} of its neighbors. The locality volume for $B$ rounds is thus, $\Delta^B$ in the worst case. The communication is done for exactly two purposes: to determine the value of either (i) $\sum\limits_{v \in \neigh_u} \beta_u$ for all $u \in L$ or (ii) $\sum\limits_{u \in \neigh_v} \beta^{-1}_u$ for all $v \in R$. 

For a node $v \in R$, $\beta_v$ has the same connotation as \cref{alg:local}. The values $\beta_v$ provide an implicit distribution of how weights should be assigned to the edges incident on it. For a node $u \in L$ we define $\beta_u = \sum\limits_{v \in \neigh_{u}} \beta_v$. This definition helps us treat the aggregations needed for Line 2 and Line 3 of \cref{alg:local} in a similar way. Note that to implement Line 2 we require $\sum\limits_{v \in \neigh_u} \beta_v$ for every vertex $u \in L$. As per our definition, Line 3 can be written as $\beta_v \cdot \sum\limits_{u \in \neigh_v} \beta^{-1}_{u}$, i.e., essentially $v$ requires only $\sum\limits_{u \in \neigh_v} \beta^{-1}_u$, which is symmetric to the previous aggregation.
Observe that $1 + O(\epsilon)$ multiplicative approximations of the $\beta$ values are sufficient to obtain $1 + O(\epsilon)$ approximations of both aggregates.

Finally at the end of a single round, if these aggregations exceed a certain threshold, the $\beta$ values are decreased by a $1+\epsilon$ factor. Similarly, if they fall below a certain threshold, the $\beta$ values are increased by a $1+\epsilon$ factor.

Our main observation is that even if these thresholds are slightly ``loose'', the algorithm is still correct. We prove this observation in \cref{sec:mpc:correctness}. A consequence of this is that estimating the exact aggregations computed by \cref{alg:local} in Lines 2 and 3 is sufficient.

To obtain approximations with an efficient number of samples, we use the fact that the $\beta$ values (see \cref{alg-local:agg1}) cannot change drastically within a single round. In particular, after $B$ rounds the values change by at most $(1+\epsilon)^B$. The following lemma helps in this regard.

\begin{lemma}\label{lem:sampling}
    Consider any sequence $X$ of $n$ elements $x_1, x_2, \dots x_n$ with $S_x = \sum\limits_{i=1}^n x_i$ and such that each $x_i$ lies in the interval $[V / t, V \cdot t]$ for some positive real $V$, and let $\epsilon \in (0, 1]$.
    For $s \geq 20 \cdot t^2 \log n / \epsilon^4$, let $y_1, y_2, \dots y_s$ be a sequence of uniformly random independent samples from $X$.
    Denote the rescaled sum of the samples as $S_y = \frac{n}{s} \cdot \sum\limits_{i=1}^s y_i$. Then,
    \begin{equation*}
        \mathrm{Pr}\Bigg[|S_y - S_x| \leq 4\epsilon S_x\Bigg] \geq 1 - n^{-10} \cdot \log_{1+\epsilon} t
    \end{equation*}
\end{lemma}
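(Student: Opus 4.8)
The plan is to prove \cref{lem:sampling} via a bucketing argument combined with a multiplicative Chernoff bound applied separately within each bucket of roughly-equal values.

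\textbf{Setup and bucketing.} First I would partition the index set $\{1,\dots,n\}$ into buckets $B_1, B_2, \dots$ according to the magnitude of $x_i$: put $x_i$ into bucket $B_k$ if $x_i \in [V/t \cdot (1+\epsilon)^{k-1}, V/t \cdot (1+\epsilon)^k)$. Since all values lie in $[V/t, V\cdot t]$, there are at most $\log_{1+\epsilon}(t^2) = 2\log_{1+\epsilon} t$ nonempty buckets; this is where the $\log_{1+\epsilon} t$ factor in the failure probability originates (one union-bound term per bucket). Let $n_k = |B_k|$ and let $S_k = \sum_{i \in B_k} x_i$, so $S_x = \sum_k S_k$. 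For the random samples $y_1,\dots,y_s$, let $s_k$ be the number of samples that land in bucket $B_k$, and note $\mathbb{E}[s_k] = s \cdot n_k / n$.

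\textbf{Concentration within a bucket.} For a fixed bucket $B_k$, each sample lands in $B_k$ independently with probability $p_k = n_k/n$, so $s_k$ is a sum of independent Bernoulli variables with mean $\mu_k = s p_k$. I would apply a multiplicative Chernoff bound: $\Pr[|s_k - \mu_k| > \epsilon \mu_k] \le 2\exp(-\epsilon^2 \mu_k / 3)$. The subtlety is that this bound is only useful when $\mu_k$ is large enough — say $\mu_k \ge 3 \cdot 11 \log n / \epsilon^2$ — to drive the failure probability below $n^{-11}$ or so. To handle buckets with small $\mu_k$, I would argue that their \emph{total} contribution to $S_x$ is negligible: if $\mu_k = s n_k / n$ is small, then $n_k$ is small (using $s \ge 20 t^2 \log n / \epsilon^4$), and since each element in bucket $k$ has value at most a factor $t^2$ times the smallest possible value $V/t$, the tiny-bucket elements contribute at most an $\epsilon$-fraction of $S_x$ — here is where the $t^2$ in the sample bound gets used, to convert a bound on $n_k$ into a bound on $S_k$ relative to $S_x$. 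Actually the cleanest route: observe $S_x \ge n \cdot V/t$ (every term is at least $V/t$), while a small bucket contributes $S_k \le n_k \cdot V t$; combining with the threshold on $n_k$ forced by small $\mu_k$ gives $S_k \le \epsilon S_x / (\text{number of buckets})$, so summing over all small buckets loses only $\epsilon S_x$.

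\textbf{Assembling the estimate.} On the good event (all large buckets have $s_k$ within $(1\pm\epsilon)\mu_k$), I would write the rescaled sum as $S_y = \frac{n}{s}\sum_k \sum_{y_j \in B_k} y_j$. Within bucket $B_k$ every sampled value is within a factor $(1+\epsilon)$ of every other, so $\frac{n}{s}\sum_{y_j\in B_k} y_j$ is within a $(1\pm\epsilon)$ factor of $\frac{n}{s} s_k \bar{x}_k$ where $\bar x_k$ is any representative value, hence within roughly a $(1\pm\epsilon)$ factor of $\frac{n_k}{n} \cdot \frac{s_k}{\mu_k} \cdot n \bar x_k$... more carefully: $\frac{n}{s} s_k$ estimates $n_k$ to within $(1\pm\epsilon)$, and the samples themselves estimate the bucket average to within $(1\pm\epsilon)$, so the bucket-$k$ term estimates $S_k$ to within $(1\pm\epsilon)^2 \le (1+3\epsilon)$. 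Summing over the large buckets recovers $(1\pm 3\epsilon)\sum_{\text{large }k} S_k$, and adding back the at-most-$\epsilon S_x$ error from the small buckets and from samples landing in small buckets yields $|S_y - S_x| \le 4\epsilon S_x$, as claimed. The union bound over at most $2\log_{1+\epsilon} t$ buckets, each failing with probability at most $n^{-11}$, gives the stated $1 - n^{-10}\log_{1+\epsilon} t$ (absorbing constants).

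\textbf{Main obstacle.} The delicate part is the two-sided accounting for small buckets: one must simultaneously (i) show tiny buckets contribute little to $S_x$ so ignoring them costs little, and (ii) show the \emph{samples} that happen to fall into tiny buckets also contribute little to $S_y$, since those can't be controlled by Chernoff. Both follow from the same observation that small $\mu_k$ forces small $n_k$, which forces small $S_k$ relative to $S_x \ge nV/t$ — but getting the constants to line up so the final slack is exactly $4\epsilon$ (rather than, say, $10\epsilon$) requires being careful about how the $(1+\epsilon)$ within-bucket spread, the $(1\pm\epsilon)$ count error, and the small-bucket leakage combine. I expect choosing the small/large bucket threshold and tracking these three sources of error is the only real work; everything else is a standard Chernoff-plus-union-bound.
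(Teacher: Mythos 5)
Your proposal follows essentially the same route as the paper's proof: geometric value-buckets of ratio $1+\epsilon$, a Chernoff-plus-union bound on the sample counts of the large buckets (giving the $\log_{1+\epsilon} t$ factor in the failure probability), and the bound $S_x \geq nV/t$ together with $s \geq 20 t^2 \log n/\epsilon^4$ to show the small buckets contribute at most an $\epsilon$-fraction of $S_x$. Your explicit accounting of samples that land in small buckets is a point the paper's write-up glosses over, but otherwise the decomposition and error budget ($\epsilon$ from small buckets plus $3\epsilon$ from count error and within-bucket spread) match the paper exactly.
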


\begin{proof}
    We have $S_x \geq n V / t$. Let us split the interval $[V / t, V \cdot t]$ into groups $\lset_i = ((1+\epsilon)^{r-1}, (1 + \epsilon)^r]$ for $i$ ranging from $-\lceil \log_{1+\epsilon} t \rceil$ to $\lceil \log_{1+\epsilon} t \rceil$. Let $M_i$ be the set of elements $j$ whose value $x_j$ lies in $\lset_i$.

    We partition the contribution of the elements $x_i$ into two parts, depending on the size of $|M_i|$. Let $S_x = S_{x,\mathsf{small}} + S_{x,\mathsf{large}}$ where, $S_{x,\mathsf{small}}$ is the sum of all $x_j$ such that $j$ lies in a group $M_i$ of size at most $10 n \log n / (s \epsilon^2)$. $S_{x,\mathsf{large}}$ is the sum of the remaining elements.

    \begin{enumerate}
        \item[] \textbf{Case 1:} ($S_{x,\mathsf{small}} \leq S_x \epsilon$). In this case, the total contribution of such groups can be bound as follows,
        \begin{equation*}
            S_{x,\mathsf{small}} \leq \frac{20 n V t \log n}{s \epsilon^3} = \frac{n V}{t} \cdot \frac{20t^2 \log n}{s \epsilon^3} \leq S_x \cdot \epsilon
        \end{equation*}

        For the leftmost inequality, we assumed that every group has the maximum size \\ ($10 n \log n / (s \epsilon^2)$) and maximum value. The maximum value is bounded as follows, the value of the largest group is at most $Vt$. Therefore the total contribution over all groups is at most the infinite geometric sum with first term as $\frac{10 n V t \log n}{s \epsilon^2}$ and common ratio $1+\epsilon$, which is bounded by $\frac{20 n V t \log n}{s \epsilon^3}$. 
        For the last inequality, we used the assumption that $s \geq 20 t^2 \log n / \epsilon^4$.
        
        \item[] \textbf{Case 2:} ($|S_y - S_{x,\mathsf{large}}| \leq 3 \epsilon S_{x,\mathsf{large}}$). In this case, for each $x_j$ contributing to the sum, at least $10 n \log n / (s \epsilon^2)$ values close to it exist (i.e., they lie in the same set $M_i$ for some $i$). For each such group $M_i$, the expected number of samples $y_k$ chosen from $M_i$ is $10 \log n / \epsilon^2$. By Chernoff bounds, the number of samples is off by a factor of $1+2\epsilon$ of the expected value by a probability of at most $2 e^{-40 \log n/3} \leq n^{-10}$.

        The sum of the values of these samples incurs at most another $1+\epsilon$ factor, since the gap between the largest and smallest value in a group is $1+\epsilon$. The overall approximation ratio is $1 + 3\epsilon$ for this case.
    \end{enumerate}
    Assuming we lose all the elements in Case 1, we still have at least $(1-\epsilon) S_x$ left in Case 2. In Case 2, we do not deviate from the actual value by more than $3\epsilon S_x$. Overall, $S_y$ deviates from $S_x$ by at most $4\epsilon S_x$. Using the union bound over all intervals in Case 2 gives us the desired probability.
\end{proof}

Guided by \cref{lem:sampling}, if we want to compress $B$ rounds of \cref{alg:local} to MPC (being content with $1+O(\epsilon)$ approximations), then we can plugin $t = (1+\epsilon)^{B}$ in \cref{lem:sampling} and obtain $s$, the number of samples needed per group. The total degree per vertex is at most $d = 20(1+\epsilon)^{2B} \cdot \log^2 n \cdot \epsilon^{-5}$.

Now the $B$-radius neighborhood of the relevant vertices around each vertex has volume at most $d^B$. We choose $B$ so that (i) the neighborhood can be stored within a single machine and (ii) the size is at most $\lambda$. (i) ensures that we can simulate $B$ rounds for $v$ in a single machine and (ii) ensures that the total memory is $\tilde{O}(\lambda n + m)$. We require $d^B \leq \min\{\lambda, n^\alpha\}$:
\begin{equation*}
    2B^2 \log(1+\epsilon) + 5 B \log(\epsilon^{-1}) + 2B \log \log n + 5B \leq \min\{\log \lambda, \alpha \cdot \log n \}  
\end{equation*}
Since we are concerned only with asymptotics, we obtain a simple candidate for $B$ by bounding each term on the left by $\min(\log \lambda, \alpha \log n) / 4$. Assuming $\epsilon$ is a constant, and $n$ is large enough, the first term is dominant. Thus, we can set $B = B_{\epsilon}$ to guarantee that aggregations are computed within a $1+4\epsilon$ factor. 
\begin{equation}\label{eqn:bvalue}
B_{\epsilon} = \min \Bigg\{ \frac{\sqrt{\alpha \log n}}{\sqrt{8 \epsilon}}, \frac{\sqrt{\log \lambda}}{\sqrt{8 \epsilon}} \Bigg\}
\end{equation}

For ease of analysis in the correctness proof, we set $B=B_{\epsilon/48}$, giving us the following lemma.

\begin{lemma} \label{lem:approx}
    Consider \cref{alg:matching}. We define the following parameters, 
    \begin{equation*} \begin{split}
    \overline{\beta_u} = \sum\limits_{v \in \neigh_u} \beta_v, \ \ \ 
    \alloc_v = \sum\limits_{u \in \neigh_v} \overline{\beta_u}
    \end{split} \end{equation*}
    We have $|\overline{\beta_u} - \beta_u| \leq \epsilon/12 \cdot \overline{\beta_u}$ and $|\alloc_v - \halloc_v| \leq \epsilon/4 \cdot \alloc_v$ with probability at least $1 - n^{-5}$ over all rounds and vertices.
\end{lemma}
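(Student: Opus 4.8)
The plan is to reduce the lemma to a single invocation of \cref{lem:sampling} for each vertex, each level set, and each round, and then union‑bound. First I would fix a phase of \cref{alg:matching} and observe that the level sets $\lset_{-2\tau},\dots,\lset_{2\tau}$ of every vertex are frozen at its start in Line~2, so inside each $\lset_x$ the stored $\beta$‑values span a multiplicative factor of at most $1+\epsilon$ when the phase begins. Since a $\beta$‑value changes by a factor of at most $1+\epsilon$ per round, after any $r\le B$ rounds of the phase the values in a fixed $\lset_x$ still lie in an interval of multiplicative width at most $(1+\epsilon)^{2B+1}$; the same holds, up to a constant factor, for the ratios $\beta_v/\beta_u$ that appear in Line~6, provided each stored $\beta_u$ is a $1\pm\epsilon/12$ approximation of $\overline{\beta_u}$ --- a fact I would carry as an inductive hypothesis on the round, inside the good event. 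This bounded spread is exactly the hypothesis of \cref{lem:sampling} with spread parameter $t_{\mathrm{sp}}=(1+\epsilon)^{B+O(1)}$, and the $t=(1+\epsilon)^{2B}\epsilon^{-5}\log n$ samples drawn from each $\neigh_v\cap\lset_x$ in Line~3 exceed the $20\,t_{\mathrm{sp}}^2\log n/\delta^4$ that the lemma requires once the error parameter is set to $\delta=\epsilon/48$ (this is the reason the algorithm uses $B=B_{\epsilon/48}$) and $\epsilon$ is below a small absolute constant; \cref{lem:sampling} then delivers a $1\pm4\delta=1\pm\epsilon/12$ multiplicative error for the estimate of each individual group.

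For $u\in L$, I would analyze Line~5 by writing $\overline{\beta_u}=\sum_x\sum_{v\in\neigh_u\cap\lset_x}\beta_v$ and applying \cref{lem:sampling} separately to each group to bound the contribution of the samples $E_{r,u,x}$; each group contributes relative error at most $\epsilon/12$ and all partial sums are nonnegative, so the estimate computed in Line~5 has relative error at most $\epsilon/12$, that is, $|\overline{\beta_u}-\beta_u|\le(\epsilon/12)\overline{\beta_u}$. For $v\in R$, I would split the error of $\halloc_v$ against the true aggregate $\alloc_v$ into two pieces: (i) replacing each stored estimate $\beta_u$ by the true $\overline{\beta_u}$ inside the summand $\beta_v/\beta_u$, which by the previous bound changes every summand by a factor in $1\pm O(\epsilon/12)$ and hence the whole sum by a relative $O(\epsilon/12)$; and (ii) the sampling of Line~6, which by the same per‑group use of \cref{lem:sampling} contributes a further relative $\epsilon/12$. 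Adding (i) and (ii), together with the lower‑order cross term and using $\epsilon\le1$, gives $|\alloc_v-\halloc_v|\le(\epsilon/4)\alloc_v$.

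For the probability bound I would union‑bound over all the invocations of \cref{lem:sampling} above. The step needing care is legitimizing each invocation: I would condition on every sample used in earlier rounds and, when estimating an $R$‑side aggregate at round $r$, additionally on the round‑$r$ samples drawn by $L$‑vertices. Under this conditioning the stored $\beta$‑values --- and therefore the sequence fed into \cref{lem:sampling} --- are determined, while the relevant fresh samples remain uniform and independent of the conditioning; this is exactly why the algorithm draws an independent sample batch $E_{r,\cdot,\cdot}$ for each round of a phase and keeps the $L$‑ and $R$‑side samples independent. Each invocation then fails with probability at most $n^{-10}\log_{1+\epsilon}t_{\mathrm{sp}}$, and there are at most $\tau=O_\epsilon(\log n)$ rounds, $n$ vertices, $O(\tau)$ level sets per vertex, and two sides, with $\log_{1+\epsilon}t_{\mathrm{sp}}=O(B)=O(\sqrt{\log n})$; the union bound therefore keeps the total failure probability well below $n^{-5}$ for $n$ large.

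The step I expect to be the main obstacle is making these last two ingredients mesh: bounding the spread within each level set after a full phase of $\beta$‑drift compounded with the $\epsilon/12$ estimation error, and arranging the conditioning so that the random stored $\beta$‑values defining the estimated sequence at round $r$ are independent of the samples used to estimate it at round $r$. Once that is done, what remains is routine bookkeeping to convert the per‑group guarantees of \cref{lem:sampling} into the stated $\epsilon/12$ and $\epsilon/4$ bounds.
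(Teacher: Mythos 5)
Your proposal takes essentially the same route as the paper's proof: invoke \cref{lem:sampling} with spread parameter $(1+\epsilon)^{O(B)}$ and error parameter $\epsilon/48$ (which is exactly why $B=B_{\epsilon/48}$ is chosen), obtain a $1\pm\epsilon/12$ relative error for each aggregate, cascade two such errors for $\halloc_v$ to get $\epsilon/4$, and union bound over all vertices and rounds to reach $1-n^{-5}$. The paper's proof is just a terser version of this; your additional bookkeeping (per-level-set decomposition, the within-phase drift bound, and the conditioning argument that keeps each round's fresh sample batch independent of the stored $\beta$-values it estimates) fills in details the paper leaves implicit rather than constituting a different argument.
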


\begin{proof}
    For $\overline{\beta_u}$ we use \cref{lem:sampling} with $t = (1 + \epsilon)^B$. Observe that $B$ value is set so that the error term in \cref{lem:sampling} is $\epsilon / 48$. The approximation ratio is $1 + \epsilon/12$. Since the choice of $B$ in \cref{eqn:bvalue} is $o(n)$ even when $\epsilon < 1/n$, the probability guarantee is at least $1 - n^{-9}$. 

    For the second expression, we have the same situation. Applying union bound over all vertices and rounds we get a conservative failure probability of $n^{-5}$ overall. For the approximation ratio, the $1 + \epsilon/12$ is cascaded twice, leading to overall $(1 +  \epsilon/12)^2 \leq 1 + \epsilon / 4$.
\end{proof}
From \cref{lem:approx}, we can infer that the $B$ round simulation is identical to a run of the algorithm described in \cref{sec:mpc:correctness} with the value of $k$ chosen appropriately. Our round complexity is $\tau \log B / B$ where $\tau$ is given in \cref{alg:local} and $B$ is as per \cref{eqn:bvalue}, which completes the proof of \cref{thm:fast-mpc}.

We did not cover MPC simulations of certain lines in \cref{alg:matching} such as the graph exponentiation to compute the subgraphs $H_{r, v}$ for each $v$ and the random sampling from the groups. These can be implemented from standard primitives such as graph exponentiation and sorting, which are by now standard in the MPC literature.

Finally, our algorithms as stated assume knowledge of $\lambda$. To remove this assumption, we can make use of the fact that the proof of correctness of \cref{alg:local} provides a termination condition. To perform this test, we are required to compute the fractions assigned to edges incident to a subset of vertices. Again using known primitives in MPC, we can perform this computation in $O(1)$ rounds. We therefore can repeat the algorithm by guessing the value of $\sqrt{\log \lambda}$ and doubling the estimate when the termination test fails.

\section{From fractional to integral allocation}
\label{sec:rounding}

We describe in this section a simple rounding procedure that, in expectation, obtains an $\Theta(1)$-approximate integral allocation from a $4 + \epsilon$-approximate fractional allocation. In the MPC model, we can further obtain high probability of success by running $O(\log n)$ copies of this algorithm in parallel, and choosing the copy that corresponds to the largest rounded allocation.

Let $M_f$ be the given fractional allocation.
Denote by $\text{wt}(M_f)$ the total sum of the fractional values over all edges in $M_f$.
The algorithm is as follows.
First, sample edge $e$ independently of the others with probability $M_f(e) / 6$.
Next, if a vertex $v$ has the number of edges incident to it greater than its capacity, drop all the sampled edges incident to $v$. We call such vertex $v$ \emph{heavy}.

Let $X_e$ be a $0/1$ random variable equal to $1$ if $e = \{u, v\}$ is sampled, and $Y_e$ be a $0/1$ random variable equal to $1$ if $e$ remains among the sampled edges. Then, we have
\newcommand{\E}[1]{\mathbb{E}\left[ #1 \right]}
\newcommand{\prob}[1]{\Pr \left[ #1 \right]}
\begin{equation}\label{eq:E[Xe]}
    \E{Y_e} = \prob{X_e = 1} \cdot \prob{\text{neither $u$ nor $v$ is heavy} \mid X_e = 1}.
\end{equation}
Note that $\prob{X_e = 1} = \frac{M_f(e)}{6}$. We analyze $\prob{\text{$u$ is heavy} \mid X_e = 1}$ as follows. Let $E'$ be the set of edges $e' \neq e$ that are incident to $u$. Then,

\[
    \prob{\text{$u$ is heavy} \mid X_e = 1} = \prob{\sum_{e' \in E'} X_{e'} > \mcap_u - 1}.
\]
We show that $\prob{\sum_{e' \in E'} X_{e'} > \mcap_u - 1} \leq \frac{1}{3}$ by considering two cases:

\begin{enumerate}
    \item $C_u > 1$. In this case, $C_u - 1 \ge \frac{\mcap_u}{2}$. Therefore,
    \[
        \prob{\sum_{e' \in E'} X_{e'} > \mcap_u - 1} \le \prob{\sum_{e' \in E'} X_{e'} > \frac{\mcap_u}{2}}
    \]
    By our algorithm, $\mathbb{E}[\sum_{e' \in E'} X_{e'}] \leq \frac{\mcap_v}{6}$. Hence, by Markov's inequality, it implies
    \[
        \prob{\sum_{e' \in E'} X_{e'} > \mcap_u - 1} \le \frac{1}{3}.
    \]
    
    \item $C_u = 1$. In this case, we have
    \begin{align*}
        &\prob{\sum_{e' \in E'} X_{e'} > \mcap_u - 1} \\
        =& \prob{\text{at least one edge in $E'$ is chosen}} \leq \frac{1}{6},
    \end{align*}
    where the last inequality is obtained by applying the union bound.
\end{enumerate}

By the above analysis, we know that $\prob{\text{$u$ is heavy} \mid X_e = 1} \leq \frac{1}{3}$. By symmetry, we also have $\prob{\text{$v$ is heavy} \mid X_e = 1} \leq \frac{1}{3}$. Plugging this into \cref{eq:E[Xe]} yields
\[
    \E{Y_e} \ge \frac{M_f(e)}{3} \cdot \rb{1 - \frac{1}{3} - \frac{1}{3}} = \frac{M_f(e)}{9}.
\]

Let $M$ be the allocation kept after the rounding procedure. By the linearity of expectation, we have
\[
    \E{|M|} = \sum_{e \in E} \E{Y_e} \ge \text{wt}(M_f) / 9.
\]
\newcommand{\Mstar}{M^\star}
Let $\Mstar$ be the maximum allocation. Since $M_f$ is a $4 + \epsilon$ approximation, we have $\text{wt}(M_f) \ge |\Mstar| / 5$. Therefore,
\[
    \E{|M|} \ge \frac{|\Mstar|}{45}.
\]
Let $\alpha$ be the probability that $|M| \le |\Mstar| / 450$. Then, by using that $|M| \le |\Mstar|$ deterministically, we have
\[
    \alpha \frac{|\Mstar|}{450} + (1 - \alpha) |\Mstar|\ge \E{|M|} \ge \frac{|\Mstar|}{45}.
\]
A simple calculation leads to $\alpha \le 440 / 449$.
Therefore, with a constant positive probability, $|M|$ is a $\Theta(1)$ approximation.

\bibliographystyle{alpha}
\bibliography{references}

\onecolumn
\appendix
\section{Correctness of the MPC algorithm}
\label{sec:mpc:correctness}

\begin{algorithm}
\begin{algorithmic}[1]
\Statex \textbf{Input:} $G = (L \cup R, E)$, capacity constraints $\{\mcap_v\}_{v \in R}$, number of rounds $\tau \geq 1$, approximation parameter $\epsilon \in (0, \frac{1}{4}]$.
\Statex \textbf{Parameters:} positive real numbers $\{k_{v,r}\}_{v \in R, 1 \leq r \leq \tau}$.
\Statex \textbf{Output:} a fractional allocation $\{\vx'_{u,v}\}_{(u,v) \in E}$
\Statex \textbf{Initialization:} Set $\beta_v \gets 1$ for every $v \in R$

\For{rounds $r = 1, 2, \dots \tau$}
    \State for each vertex $u \in L$: 
    \Statex \hspace{1cm} Set $\vx_{u, v} = \frac{\beta_v}{\sum_{v' \in \neigh_u} \beta_{v'}}$ for each $v \in \neigh_u$ 
    \State for each vertex $v \in R$: 
    \Statex \hspace{1cm} Set $\alloc_v \gets \sum\limits_{u \in \neigh_v} \vx_{u, v}$
    \State for each vertex $v \in R$, update  $\beta_v$ as follows, 
            
    \Statex  $$ \beta_v \gets \begin{cases} 
        \beta_v (1 + \epsilon) \ \ \ \text{if } \alloc_v \leq \mcap_v / (1 + k_{v, r}\epsilon) \\
        \beta_v / (1 + \epsilon) \ \ \text{if } \alloc_v \geq \mcap_v (1 + k_{v, r} \epsilon) \\
        \beta_v \ \ \ \ \ \ \ \ \ \ \ \ \text{otherwise}      
    \end{cases} $$    
\EndFor

\For {each vertex $v \in R$}
    \State For all $u \in \neigh_v$, set $\vx'_{u,v} \gets \begin{cases} 
        \frac{\mcap_v}{\alloc_v}\vx_{u,v} \ \ \ \text{if } \alloc_v > \mcap_v \\
        \vx_{u,v} \ \ \ \ \ \ \ \ \ \ \ \ \text{otherwise}      
    \end{cases}$
\EndFor
\end{algorithmic}
\caption{An adapted version of \cref{alg:local}.}
\label{alg:det-thresh}
\end{algorithm}

To prove the correctness of the MPC algorithm (\cref{alg:matching}), we consider an adaptation of our LOCAL algorithm (\cref{alg:local}).
The adaptation is given in \cref{alg:det-thresh}.
\cref{alg:local,alg:det-thresh} only have two differences:
First, \cref{alg:det-thresh} requires an additional set of parameters $\{k_{v,r}\}_{v \in R, 1 \leq r \leq \tau}$.
Second, \cref{alg:det-thresh} applies a different update rule in line 4, in which the thresholds for updating $\beta_v$ are replaced with $\mcap_v / (1 + k_{v,r}\epsilon)$ and $\mcap_v(1 + k_{v,r}\epsilon)$.
This adapted update rule is more flexible because the thresholds can be controled by adjusting the parameters.
Note that \cref{alg:local} is a special case where all parameters $k_{v,r}$ are fixed as 1.

Consider an input instance $(G, \{\mcap_v\}_{v \in R}, \tau, \epsilon)$ of \cref{alg:matching}.
The correctness proof for \cref{alg:matching} consists of three parts.

\begin{itemize}
    \item In the first part, we argue that \cref{alg:matching} is equivalent to \cref{alg:det-thresh} with a specific set of parameters.
More precisely, we show that given a fixed execution of \cref{alg:matching} (on the input instance $(G, \{\mcap_v\}_{v \in R}, \tau, \epsilon)$),
with high probability, there exists a set of parameters $\{k_{v,r}\}_{v \in R, 1 \leq r \leq \tau}$ such that the outputs of \cref{alg:matching,alg:det-thresh} are the same; in addition, the constructed parameters satisfy $\frac{1}{4} \leq k_{v,r} \leq 4$ for all $v \in R$ and $1 \leq r \leq \tau$.
    \item In the second part, we show that \cref{alg:det-thresh} has an approximation factor of $(2 + (2k + 8)\epsilon)$ if the input and parameters satisfy $\frac{1}{k} \leq k_{v,r} \leq k$ for some small number $k$, $\epsilon \leq \frac{1}{k}$, and $\tau \geq \loglambda$. By combining the results of the first and second parts, we know that \cref{alg:matching} has an approximation factor of $2 + 16 \epsilon$ if $\tau \geq \loglambda$ and $\epsilon \leq \frac{1}{4}$.
    \item In the third part, we show that \cref{alg:det-thresh} has an approximation factor of $1 + (k + 14)\epsilon$ if the input and parameters satisfy $\frac{1}{k} \leq k_{v,r} \leq k$ for some small number $k$, $\epsilon \leq \frac{1}{k}$, and $\tau \geq \logn$. By combining the results of the first and third parts, we know that \cref{alg:matching} has an approximation factor of $1 + 18\epsilon$ for $\tau = O(\logn)$ and $\epsilon \leq \frac{1}{4}$.
\end{itemize}
The three parts are presented in \cref{sec:appendix-1,sec:appendix-2,sec:appendix-3}, respectively.

\subsection{Equivalence of \texorpdfstring{\cref{alg:matching,alg:det-thresh}}{Algorithms 2 and 3}} \label{sec:appendix-1}
In the following, we argue that \cref{alg:matching} is equivalent to \cref{alg:det-thresh} with a specific set of parameters. Consider an execution of \cref{alg:matching}. For ease of presentation, for a fixed round, we use $\{\hbeta_v\}_{v \in R}$ to denote the priority values maintained by \cref{alg:matching}, and use $\{\beta_v\}_{v \in R}$ to denote the priority values maintained by \cref{alg:det-thresh}. Let $\hvx_{u, v} = \frac{\hbeta_v}{\sum_{v' \in \neigh_u} \hbeta_{u'}}$ denote the corresponding fractional values in \cref{alg:matching}.

\begin{lemma} \label{lem:equivalence} Let $(G, \{\mcap_v\}_{v \in A}, \epsilon, R)$ be an input instance for \cref{alg:matching,alg:det-thresh}.
Consider an execution of \cref{alg:matching} on the instance.
With high probability, there exists a set of parameters $\{k_{v,r}\}_{v \in R, 1 \leq r \leq \tau}$ for \cref{alg:det-thresh},
all in the range $[\frac{1}{4}, 4]$,
such that the output of the execution is the same as the output of \cref{alg:det-thresh}.
\end{lemma}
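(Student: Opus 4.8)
The plan is to exhibit the parameters $\{k_{v,r}\}$ explicitly, round by round, by comparing the approximate aggregates that \cref{alg:matching} actually uses with the exact ones that \cref{alg:det-thresh} would use, and to invoke \cref{lem:approx} to control the discrepancy. First I would condition on the high-probability event of \cref{lem:approx}, namely that for every round $r$ and every vertex, $|\overline{\beta_u} - \beta_u| \le \tfrac{\epsilon}{12}\overline{\beta_u}$ and $|\alloc_v - \halloc_v| \le \tfrac{\epsilon}{4}\alloc_v$; everything below is then deterministic. I would prove by induction on $r$ that, with a suitable choice of $k_{v,1},\dots,k_{v,r}$, the priority values $\hbeta_v$ maintained by \cref{alg:matching} coincide exactly with the $\beta_v$ maintained by \cref{alg:det-thresh} at the end of round $r$. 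The base case is immediate since both initialize $\beta_v \gets 1$.

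For the inductive step, assume $\hbeta_v = \beta_v$ for all $v$ at the start of round $r$. Then the exact quantity $\alloc_v = \sum_{u \in \neigh_v}\overline{\beta_u}$ (with $\overline{\beta_u} = \sum_{v'\in\neigh_u}\beta_{v'}$) that \cref{alg:det-thresh} computes equals the idealized aggregate for \cref{alg:matching} as well, and \cref{alg:matching} instead acts on $\halloc_v$, which satisfies $\halloc_v \in [(1-\tfrac{\epsilon}{4})\alloc_v,\ (1+\tfrac{\epsilon}{4})\alloc_v]$. \cref{alg:matching} increases $\hbeta_v$ iff $\halloc_v \le \mcap_v/(1+\epsilon)$, decreases it iff $\halloc_v \ge \mcap_v(1+\epsilon)$, and leaves it fixed otherwise. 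I want to choose $k_{v,r} \in [\tfrac14,4]$ so that \cref{alg:det-thresh}'s rule applied to the \emph{exact} $\alloc_v$ makes the identical decision. The key computation: if \cref{alg:matching} increases, i.e. $\halloc_v \le \mcap_v/(1+\epsilon)$, then $\alloc_v \le \halloc_v/(1-\tfrac{\epsilon}{4}) \le \mcap_v/((1+\epsilon)(1-\tfrac{\epsilon}{4}))$, and one checks $(1+\epsilon)(1-\tfrac{\epsilon}{4}) \ge 1 + \tfrac{\epsilon}{4}$ for $\epsilon \le \tfrac14$, so $\alloc_v \le \mcap_v/(1+\tfrac14 \epsilon)$; hence taking $k_{v,r} = \tfrac14$ forces \cref{alg:det-thresh} to increase too. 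Symmetrically, if \cref{alg:matching} decreases, then $\alloc_v \ge \halloc_v/(1+\tfrac{\epsilon}{4}) \ge \mcap_v(1+\epsilon)/(1+\tfrac{\epsilon}{4}) \ge \mcap_v(1+\tfrac14\epsilon)$, so $k_{v,r}=\tfrac14$ again makes \cref{alg:det-thresh} decrease. If \cref{alg:matching} leaves $\hbeta_v$ unchanged, then $\mcap_v/(1+\epsilon) < \halloc_v < \mcap_v(1+\epsilon)$, so $\alloc_v \in (\mcap_v/((1+\epsilon)(1+\tfrac{\epsilon}{4})),\ \mcap_v(1+\epsilon)(1+\tfrac{\epsilon}{4}))$; bounding $(1+\epsilon)(1+\tfrac{\epsilon}{4}) \le 1 + 4\epsilon$ for $\epsilon \le \tfrac14$ shows $\mcap_v/(1+4\epsilon) < \alloc_v < \mcap_v(1+4\epsilon)$, so taking $k_{v,r}=4$ puts \cref{alg:det-thresh} in the "otherwise" branch as well. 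In all three cases the resulting $k_{v,r}$ lies in $[\tfrac14,4]$, and by the induction hypothesis $\hbeta_v$ and $\beta_v$ were equal before the update and undergo the same multiplicative change, so they remain equal afterwards.

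Carrying the induction through all $\tau$ rounds gives $\hbeta_v = \beta_v$ for every $v$ at termination, hence $\alloc_v$ and the final scaled values $\vx'_{u,v}$ agree; note the final output line of \cref{alg:matching} uses $\halloc_v$ rather than $\alloc_v$ in the scaling $\min(1,\mcap_v/\halloc_v)$, so strictly the two outputs agree only up to the $(1\pm\tfrac{\epsilon}{4})$ slack there — I would either absorb this into the statement's "equivalent" (as the later approximation-factor proofs in \cref{sec:appendix-2,sec:appendix-3} do) or, more cleanly, observe that replacing $\alloc_v$ by $\halloc_v$ in that last scaling is itself captured by a minor adjustment and only affects the weight by a $(1\pm O(\epsilon))$ factor. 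The main obstacle is purely bookkeeping: making sure the chosen $k_{v,r}$ stays in $[\tfrac14,4]$ simultaneously with the correct branch in every one of the three cases, which comes down to the two elementary inequalities $(1+\epsilon)(1-\tfrac{\epsilon}{4}) \ge 1+\tfrac{\epsilon}{4}$ and $(1+\epsilon)(1+\tfrac{\epsilon}{4}) \le 1+4\epsilon$ valid for $\epsilon \in (0,\tfrac14]$; there is no genuine difficulty beyond verifying these and threading the induction, since the randomness is entirely quarantined in the single application of \cref{lem:approx} at the start.
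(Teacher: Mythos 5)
Your proposal is correct and follows essentially the same route as the paper: induct on rounds to keep $\beta_v = \hbeta_v$, condition on the high-probability event of \cref{lem:approx}, and pick $k_{v,r}$ case-by-case according to whether \cref{alg:matching} increased, decreased, or kept $\hbeta_v$; your purely multiplicative error handling even lets you skip the paper's separate case $\alloc_v > 2\mcap_v$ and use only $k_{v,r}\in\{\frac14,4\}$ rather than the paper's $\{\frac14,\frac12,3,1\}$. One small algebraic slip: in the ``unchanged'' case the bound you can actually derive is $\alloc_v < \mcap_v(1+\epsilon)/(1-\frac{\epsilon}{4})$, not $\mcap_v(1+\epsilon)(1+\frac{\epsilon}{4})$, but since both are at most $\mcap_v(1+4\epsilon)$ for $\epsilon\le\frac14$ the conclusion stands, and your remark about the $\min(1,\mcap_v/\halloc_v)$ scaling in the last line is a legitimate caveat that the paper's own proof glosses over as well.
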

\begin{proof} We prove this lemma by constructing a set of parameters $\{k_{v,r}\}_{v \in R, 1 \leq r \leq \tau}$ such that $\beta_v = \hbeta_v$ holds for all $v \in R$ across all rounds.
Initially, both algorithms set the priority values as 1.
Hence, in beginning of the the first round, $\beta_v = \hbeta_v$ holds for all $v \in R$.

Consider a fixed round $r$ and assume that $\beta_v = \hbeta_v$ holds in the beginning of round $r$.
Let $v \in R$ be a vertex.
In the following, we show how to determine the parameter $k_{v,r}$ such that the two algorithms update the priority values in the same way.
Since $\beta_v = \hbeta_v$ holds for all $v \in R$, we know that $\vx_{u,v} = \hvx_{u,v}$ holds for each edge $(u, v)$ in round $r$.
Therefore, $\alloc_v = \sum_{u \in \neigh_v} \vx_{u,v}$ is equal to $\sum_{u \in \neigh_v} \hvx_v$.
By \cref{lem:approx}, the execution of \cref{alg:matching} satisfies
\begin{equation} \label{eqn:appendix-mpc-approx}
    |\halloc_v - \sum_{u \in \neigh_v} \hvx_v| = |\halloc_v - \alloc_v| \leq \frac{\epsilon}{4} \alloc_v
\end{equation}
with high probability.
Assume that this inequality holds for the given execution.
Consider two cases:
\begin{enumerate}
    \item[] \textbf{Case 1:} In round $r$, $\alloc_v \leq 2\mcap_v$. By \cref{eqn:appendix-mpc-approx}, we have
    
    \begin{equation} \label{eqn:appendix-case1}
        |\alloc_v - \halloc_v| \leq \frac{\epsilon}{4} \alloc_v \leq \frac{\epsilon}{2} \epsilon \mcap_v.
    \end{equation}
    There are three subcases:

    \begin{enumerate}
        \item[] \textbf{Case 1.1:} In the execution, $\hbeta_v$ is increased in round $r$.
        That is, $\halloc_v \leq \mcap_v / (1 + \epsilon)$.
        Our goal is to choose $k_{v,r}$ such that \cref{alg:det-thresh} also increases $\beta_v$.
        By \cref{eqn:appendix-case1}, we have $\alloc_v \leq \halloc_v + \frac{\epsilon}{2} \mcap_v \leq (\frac{1}{1 + \epsilon} + \frac{\epsilon}{2}) \mcap_v$.
        Thus, we should choose $k_{v,r}$ such that $\alloc_v \leq \frac{1}{(1 + k_{v,r} \epsilon)} \mcap_v$.
        Since $(\frac{1}{1 + \epsilon} + \frac{\epsilon}{2}) \leq \frac{1}{1 + \frac{1}{4}\epsilon}$ for $\epsilon \leq \frac{1}{4}$, we may choose $k_{v,r} = \frac{1}{4}$.
        
        \item[] \textbf{Case 1.2:} In the execution, $\hbeta_v$ is decreased in round $r$.
        That is, $\halloc_v \geq \mcap_v (1 + \epsilon)$.
        Our goal is to choose $k_{v,r}$ such that \cref{alg:det-thresh} also decreases $\beta_v$.
        By \cref{eqn:appendix-case1}, we have $\alloc_v \geq \halloc_v - \frac{\epsilon}{2} \mcap_v \geq (1 + \epsilon - \frac{\epsilon}{2}) \mcap_v$.
        By choosing $k_{v,r} = \frac{1}{2}$, we ensure that $\alloc_v \geq (1 + k_{v,r} \epsilon) \mcap_v$ and thus \cref{alg:det-thresh} will decrease $\beta_v$.
        
        \item[] \textbf{Case 1.3:} In the execution, $\hbeta_v$ is unchanged in round $r$.
        That is, $\mcap_v / (1 + \epsilon) < \halloc_v < \mcap_v (1 + \epsilon)$.
        By \cref{eqn:appendix-case1}, we have $(\frac{1}{1 + \epsilon} - \frac{\epsilon}{2}) \mcap_v < \alloc_v < (1 + \frac{3\epsilon}{2}) \mcap_v$.
        Using a similar argument, we know that \cref{alg:det-thresh} will not change $\beta_v$ if $\frac{1}{1 + k_{v,r}\epsilon} \leq \frac{1}{1 + \epsilon} - \frac{\epsilon}{2}$ and $1 + k_{v,r}\epsilon \geq 1 + \frac{3\epsilon}{2}$.
        It is not hard to verify that $k_{v,r} = 3$ satisfies these conditions.
    \end{enumerate}
    
    \item[] \textbf{Case 2:} In round $r$, $\alloc_v > 2\mcap_v$.
    In this case, we choose $k_{v,r} = 1$ so that \cref{alg:det-thresh} will decrease $\beta_v$.
    By \cref{eqn:appendix-case1}, we have
    \begin{equation*}
        \halloc_v \geq (1 - \epsilon) \alloc_v
                  > 2(1 - \epsilon)\mcap_v > (1 + \epsilon)\mcap_v,
    \end{equation*}    
    where the last inequality holds because $\epsilon \leq \frac{1}{4}$.
    Therefore, \cref{alg:matching} must also have decreased $\hbeta_v$ in round $r$.
\end{enumerate}

By repeating the above argument for $r = 0, 1, \dots, R$, we obtain a set of parameters $\{k_{v,r}\}$ such that, in each round, the two algorithms update the priority value of each vertex in exactly the same way. Hence, the lemma holds.
\end{proof}

\subsection{Performance guarantee on low arboricity bipartite graphs} \label{sec:appendix-2}

In the following, we show that \cref{alg:det-thresh} has an approximation factor of $2 + (2k + 8)\epsilon$ if the parameters satisfy $\tau \geq \loglambda$, $\frac{1}{k} \leq k_{v,r} \leq k$ holds for all $k_{v,r}$ and for some number $k$, and $\epsilon \leq \frac{1}{k}$.
By \cref{lem:equivalence}, this result implies that \cref{alg:matching} has an approximation factor of $2 + 16 \epsilon$ if $\tau \geq \loglambda$ and $\epsilon \leq \frac{1}{4}$.
The result is proven by modifying the proof in \cref{sec:local}.

We first present the approximation factor of \cref{alg:det-thresh}. Let $k$ be a number such that $\frac{1}{k} \leq k_{v,r} \leq k$ holds for all $v \in R$ and $1 \leq r \leq \tau$.
As in \cref{sec:local}, we consider the $\beta$ variables after the end of round $\tau$ and partition $R$ into level sets $\lset_0, \lset_1, \dots, \lset_{2\tau}$, where $\lset_i = \{v \mid v \in R \mbox{ and } \beta_v = (1 + \epsilon)^{i}\}$.
We generalize \cref{lem:alloc-bound} as follows.

\begin{lemma} \label{lem:appendix-alloc-bound} Assume that we run \cref{alg:det-thresh} for $\tau \geq 1$ rounds. Then, after all rounds have completed, we have
\begin{enumerate}
    \item for any $v \in \bigcup_{j=0}^{2\tau-1} \lset_j$, $\alloc_v \geq \frac{1}{1+(k + 2)\epsilon} \mcap_v$, and
    \item for any $v \in \bigcup_{j=1}^{2\tau} \lset_j$, $\alloc_v \leq (1+(k + 2)\epsilon) \mcap_v$.
\end{enumerate}
\end{lemma}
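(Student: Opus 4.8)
The plan is to replicate the argument of \cref{lem:alloc-bound} almost verbatim, carefully tracking how the relaxed thresholds $\mcap_v/(1 + k_{v,r}\epsilon)$ and $\mcap_v(1 + k_{v,r}\epsilon)$ propagate through the estimates. As before, by the symmetry of the two claims it suffices to prove the first one. Fix $v \in \bigcup_{j=0}^{2\tau-1}\lset_j$; since $v \notin \lset_{2\tau}$, there is a round in which $\beta_v$ was not increased, and we let $t$ be the last such round. The key starting fact is that at the end of round $t$ we have $\alloc_v \geq \frac{1}{1 + k_{v,t}\epsilon}\mcap_v \geq \frac{1}{1 + k\epsilon}\mcap_v$, using $k_{v,t} \leq k$. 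If $t = \tau$ we are done. Otherwise we push this bound forward to round $t+1$, and then from round $t+1$ to the final round $\tau$.

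First I would handle the single-round transition from $t$ to $t+1$. Exactly as in \cref{lem:alloc-bound}, write $\alloc_v = \sum_{u\in\neigh_v}\frac{\beta_v}{\sum_{a'\in\neigh_u}\beta_{a'}}$. Over one round each $\beta_{a'}$ changes by a factor in $[(1+\epsilon)^{-1}, (1+\epsilon)]$, and $\beta_v$ itself changes by such a factor too. So the numerator of each term changes by at most $(1+\epsilon)^{\pm 1}$ and the denominator by at most $(1+\epsilon)^{\pm1}$; in the worst case $\alloc_v$ drops by a factor of $(1+\epsilon)^2$. If $\beta_v$ was unchanged at round $t$, we get $\alloc_v \geq \frac{1}{(1+\epsilon)^2}\cdot\frac{1}{1+k\epsilon}\mcap_v$ at the end of round $t+1$. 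If $\beta_v$ was \emph{decreased} at round $t$ (the other possibility consistent with ``not increased''), then $\alloc_v \geq \mcap_v(1 + k_{v,t}\epsilon) \geq \mcap_v$ at the end of round $t$, so even after a $(1+\epsilon)^2$ drop we have $\alloc_v \geq \frac{1}{(1+\epsilon)^2}\mcap_v$ at the end of round $t+1$. In either case $\alloc_v \geq \frac{1}{(1+\epsilon)^2(1+k\epsilon)}\mcap_v$ at round $t+1$; I would then check, using $\epsilon \leq \frac{1}{k}$ (so $\epsilon \le 1$), that $(1+\epsilon)^2(1+k\epsilon) \leq 1 + (k+2)\epsilon + O(\epsilon^2) \le 1+(k+2)\epsilon$ within the slack the statement allows — this is the one routine inequality to verify, and it is where the ``$+2$'' in $(k+2)\epsilon$ comes from (two factors of $1+\epsilon$ on top of the $1+k\epsilon$ threshold).

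Finally, from the definition of $t$, in every round after $t$ the value $\beta_v$ is increased. The same one-round analysis shows that when $\beta_v$ is increased, $\alloc_v$ does not decrease: the numerator grows by $(1+\epsilon)$ and each denominator term grows by at most $(1+\epsilon)$, so the ratio is non-decreasing. Hence $\alloc_v \geq \frac{1}{1+(k+2)\epsilon}\mcap_v$ is preserved through round $\tau$, proving claim (1); claim (2) follows by the symmetric argument (swapping ``increased'' and ``decreased'', ``under-'' and ``over-allocated''). The main obstacle, such as it is, is purely bookkeeping: making sure that replacing the fixed threshold $1$ by the variable $k_{v,r}\epsilon$ only ever enters through the bound $k_{v,r}\le k$ at the critical round $t$, and confirming the arithmetic $(1+\epsilon)^2(1+k\epsilon) \le 1+(k+2)\epsilon$ (equivalently its reciprocal version) holds under the hypothesis $\epsilon \le 1/k$. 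No new idea beyond \cref{lem:alloc-bound} is needed.
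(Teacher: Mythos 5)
Your proposal follows the same skeleton as the paper's proof (symmetry reduction, the last round $t$ where $\beta_v$ was not increased, a one-round transfer from $t$ to $t+1$, then monotonicity of $\alloc_v$ while $\beta_v$ keeps increasing), but there is a genuine gap at the transition step. You collapse both cases into the single bound $\alloc_v \geq \frac{1}{(1+\epsilon)^2(1+k\epsilon)}\mcap_v$ at the end of round $t+1$ and then assert that $(1+\epsilon)^2(1+k\epsilon) \le 1+(k+2)\epsilon$ holds ``within the slack the statement allows.'' This inequality is false for every $\epsilon>0$: expanding, $(1+\epsilon)^2(1+k\epsilon) = 1+(k+2)\epsilon+(2k+1)\epsilon^2+k\epsilon^3 > 1+(k+2)\epsilon$, and the lemma claims exactly the factor $1+(k+2)\epsilon$, so there is no slack to absorb the $O(\epsilon^2)$ terms. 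As written, your argument only proves a weaker guarantee (roughly $1+(k+6)\epsilon$ after bounding the quadratic terms with $k\epsilon\le 1$), not the stated one.

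The repair is precisely what the paper does: never combine the worst drop factor with the worst starting value. If $\beta_v$ is unchanged at round $t$, the numerator $\beta_v$ in each term of $\alloc_v=\sum_{u\in\neigh_v}\frac{\beta_v}{\sum_{a'\in\neigh_u}\beta_{a'}}$ is unchanged and only the denominators can grow, so the drop factor is $(1+\epsilon)$, not $(1+\epsilon)^2$; combined with the starting bound $\alloc_v\ge\frac{1}{1+k\epsilon}\mcap_v$ this needs $(1+\epsilon)(1+k\epsilon)=1+(k+1)\epsilon+k\epsilon^2\le 1+(k+2)\epsilon$, which follows from $k\epsilon\le 1$. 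If instead $\beta_v$ was decreased at round $t$, the drop factor can be $(1+\epsilon)^2$, but the starting value is at least $(1+\frac{\epsilon}{k})\mcap_v\ge\mcap_v$, so one only needs $(1+\epsilon)^2=1+2\epsilon+\epsilon^2\le 1+(k+2)\epsilon$, i.e.\ $\epsilon\le k$, which holds. With the cases kept separate in this way, the rest of your argument (monotonicity of $\alloc_v$ in the rounds after $t+1$ when $\beta_v$ is increased, and the symmetric argument for the over-allocation claim) goes through exactly as in the paper.
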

\begin{proof} Due to the symmetry of the two claims, we only prove the former.
Since $v$ is not in level set $\lset_{2\tau}$, there was a time that we did not increase $\beta_v$.
Let $t$ be the last round where $\beta_v$ is not increased.
At the end of round $t$, $\alloc_v \geq \frac{1}{1 + k\epsilon}\mcap_v$.
For $t = \tau$, this completes the proof.
Otherwise, we consider the round $t + 1 \leq \tau$.
We first show that $\alloc_v \geq \frac{1}{1+(k+2)\epsilon} \mcap_v$ at the end of round $t + 1$.
Recall that $\alloc_v = \sum_{u \in \neigh_v} \frac{\beta_v}{\sum_{a' \in \neigh_u} \beta_{a'}}$.
If $\beta_v$ is unchanged at round $t$, the numerator of each term also remains unchanged.
The denominator terms are increased at most by a factor of $(1 + \epsilon)$.
So in total, $\alloc_v$ is not decreased by more than a factor of $(1 + \epsilon)$.
Hence, at the end of round $t + 1$, we have $\alloc_v \geq \frac{1}{(1+k\epsilon)(1+\epsilon)} \mcap_v \geq \frac{1}{(1+(k+2)\epsilon)} \mcap_v $, where the last inequality holds because $k\epsilon \leq 1$.
In the other case, $\beta_v$ is decreased at round $t$, so the numerator of each term is also reduced by a factor of $(1 + \epsilon)$.
In total, $\alloc_v$ is decreased by a factor of at most $(1+\epsilon)^2$ at round $t + 1$.
Note that the reduction of $\beta_v$ at round $t$ means $\alloc_v$ was at least $(1 + \frac{1}{k}\epsilon) \mcap_v \geq \mcap_v$, and therefore at least $\frac{1}{(1+\epsilon)^2} \leq \frac{1}{(1+(k+2)\epsilon)}$ at round $t + 1$.
So independent of whether $\beta_v$ was reduced or not, $\alloc_v \geq \frac{1}{1+(k+2)\epsilon} \mcap_v$ holds at round $t + 1$.

By definition of $t$, $\beta_v$ is increased in all rounds after $t$.
With a similar argument, we know that $\alloc_v$ does not decrease at any of these rounds.
So $\alloc_v$ remains at least $\frac{1}{1+(k+2)\epsilon} \mcap_v$ till the last round.
\end{proof}

\cref{lem:appendix-alloc-bound} implies the two lower bounds on $\mweight$. First, since each vertex in $\bigcup_{j<2\tau} \lset_j$ almost fill up their capacities, we have

\begin{equation} \label{eqn:appendix-weight-bound-1} \textstyle
    \mweight \geq \frac{1}{1+(k+2)\epsilon} \sum_{j=0}^{2\tau-1} \mcap(\lset_j).
\end{equation}
Second, since all vertices in $\bigcup_{j > 0} \lset_j$ has bounded over-allocation, when line 6 of \cref{alg:det-thresh} was executed, the ratio of $\frac{\mcap_v}{\alloc_v}$ was at least $\frac{1}{1 + (k+2)\epsilon}$.
Hence, we have

\begin{equation} \label{eqn:appendix-weight-bound-2} \textstyle
    \mweight \geq \frac{1}{1+(k+2)\epsilon} \sum_{j=1}^{2\tau} \sum_{v \in \lset_j} \alloc_v.
\end{equation}

In the proof of \cref{lem:goal}, we have mentioned the following upper bounds on $OPT$, which is proven by \cite{AZM18}.

\begin{equation} \label{eqn:appendix-opt-bound} \textstyle
    OPT \leq \sum_{j=0}^{\ell} \mcap(\lset_j) + |\neigh (\bigcup_{j=\ell+1}^{2\tau} \lset_{j})| \mbox{ for all } 0 \leq \ell \leq 2\tau.
\end{equation}

We now present a generalization of \cref{lem:goal}.

\begin{lemma} Assume that we run \cref{alg:det-thresh} for $\tau \geq 1$ rounds. Then, after all rounds have completed, we have
\begin{equation} \label{eqn:appendix-goal} \textstyle
     OPT \leq (1 + (k+2)\epsilon) \mweight + |\neigh(\lset_{2\tau})|.
\end{equation}
\end{lemma}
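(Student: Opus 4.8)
The plan is to mirror the proof of \cref{lem:goal} almost verbatim, substituting the generalized ingredients that have already been established for \cref{alg:det-thresh}. First I would invoke the structural upper bound \cref{eqn:appendix-opt-bound} — the restatement of \cite[Claim 1]{AZM18} — which applies to \emph{any} partition of $R$ into the level sets $\lset_0, \lset_1, \dots, \lset_{2\tau}$, since its proof only uses the capacity constraints on $R$ and the unit degree constraint on $L$ and is completely insensitive to how the thresholds in line~4 are chosen. Specializing \cref{eqn:appendix-opt-bound} to $\ell = 2\tau - 1$ yields
\[
    OPT \leq \sum_{j=0}^{2\tau-1} \mcap(\lset_j) + |\neigh(\lset_{2\tau})|.
\]

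Next I would eliminate the capacity sum using the first lower bound on $\mweight$. By \cref{eqn:appendix-weight-bound-1} — which is itself a consequence of the generalized allocation bound \cref{lem:appendix-alloc-bound}, exactly as \cref{eqn:weight-bound-1} follows from \cref{lem:alloc-bound} — we have $\sum_{j=0}^{2\tau-1} \mcap(\lset_j) \leq (1 + (k+2)\epsilon)\mweight$. Substituting this into the displayed inequality gives $OPT \leq (1 + (k+2)\epsilon)\mweight + |\neigh(\lset_{2\tau})|$, which is precisely \cref{eqn:appendix-goal}.

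I do not expect a genuine obstacle here; the proof is the same two substitutions as in \cref{lem:goal}, with the constant $3$ — which came from the fixed $1+\epsilon$ thresholds of \cref{alg:local} — replaced by $k+2$, coming from the looser $1+k_{v,r}\epsilon$ thresholds and tracked through \cref{lem:appendix-alloc-bound}. The only point worth a careful sentence is confirming that \cref{eqn:appendix-opt-bound} does not secretly depend on the update rule: it is a statement purely about the final level-set partition of $R$ together with an arbitrary feasible fractional allocation, so the argument of \cite{AZM18} transfers unchanged to \cref{alg:det-thresh}.
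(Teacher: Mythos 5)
Your proposal is correct and follows exactly the paper's own proof: specialize \cref{eqn:appendix-opt-bound} to $\ell = 2\tau - 1$ and then replace $\sum_{j=0}^{2\tau-1} \mcap(\lset_j)$ by $(1 + (k+2)\epsilon)\mweight$ via \cref{eqn:appendix-weight-bound-1}. Your additional remark that the bound from \cite{AZM18} is insensitive to the modified update thresholds is a sensible sanity check, consistent with how the paper reuses that claim.
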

\begin{proof}
By setting $\ell = 2\tau - 1$ in \cref{eqn:appendix-opt-bound}, we obtain $OPT \leq \sum_{j=0}^{2\tau-1} \mcap(\lset_j) + |\neigh(\lset_{2\tau})|$.
By \cref{eqn:appendix-weight-bound-1}, the term $\sum_{j=0}^{2\tau-1} \mcap(\lset_j)$ can be replaced by $(1 + (k+2)\epsilon) \mweight$, which yields the desired inequality.
\end{proof}

We are ready to prove the approximation factor of \cref{alg:det-thresh}.

\begin{theorem} \label{thm:appendix-apx-factor-1} Assume that we run \cref{alg:det-thresh} for $\tau \geq \loglambda$ rounds. Then, $OPT \leq (2 + (2k + 8)\epsilon)\mweight$.
\end{theorem}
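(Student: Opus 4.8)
The plan is to mirror the proof of \cref{thm:apx-factor} almost verbatim, substituting the constant $3$ that appeared there (coming from the original $1$-valued thresholds) with the quantity $k+2$ that governs the analogous bounds in \cref{lem:appendix-alloc-bound} and \cref{eqn:appendix-goal}. Write $N' = \neigh(\lset_{2\tau})$. By \cref{eqn:appendix-goal} it suffices to show $|N'| \le (1 + (k+6)\epsilon)\,\mweight$, since then $OPT \le (1 + (k+2)\epsilon)\mweight + (1+(k+6)\epsilon)\mweight = (2 + (2k+8)\epsilon)\mweight$. As before I would split into two cases according to whether $|N'| \le |\lset_0|$.

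In \textbf{Case 1} ($|N'| \le |\lset_0|$), I would use \cref{eqn:appendix-weight-bound-1} with only the $j=0$ term: $\mweight \ge \frac{1}{1+(k+2)\epsilon}\mcap(\lset_0) \ge \frac{1}{1+(k+2)\epsilon}|\lset_0| \ge \frac{1}{1+(k+2)\epsilon}|N'|$, and then observe $\frac{1}{1+(k+2)\epsilon} \ge \frac{1}{1+(k+6)\epsilon}$, so $|N'| \le (1+(k+6)\epsilon)\mweight$. In \textbf{Case 2} ($|N'| > |\lset_0|$) I would reproduce the density argument: any edge $(u,v)$ with $u \in N'$, $v \in \lset_0$ has $\vx_{u,v} \le \vx_{u,v'}/(1+\epsilon)^{2\tau-2}$ for the witness $v' \in \lset_{2\tau} \cap \neigh_u$, hence (using $\tau = \loglambda$) $\vx_{u,v} \le \frac{\epsilon}{4\lambda}$; the subgraph on $N' \cup \lset_0$ has fewer than $2\lambda|N'|$ edges by the arboricity bound and $|N'| > |\lset_0|$, so the total allocation from $N'$ into $\lset_0$ is at most $\frac{\epsilon}{2}|N'|$, giving $\sum_{j=1}^{2\tau}\sum_{v \in \lset_j}\alloc_v \ge (1-\frac{\epsilon}{2})|N'|$. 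Combining with \cref{eqn:appendix-weight-bound-2} yields $\mweight \ge \frac{1}{1+(k+2)\epsilon}(1-\frac{\epsilon}{2})|N'| \ge \frac{1}{1+(k+6)\epsilon}|N'|$, where the last inequality needs $(1-\frac{\epsilon}{2})(1+(k+6)\epsilon) \ge 1+(k+2)\epsilon$, i.e.\ $4\epsilon - \frac{\epsilon}{2}(k+6) - \frac{\epsilon^2}{2}(k+6) \ge 0$; this is where I need $k\epsilon \le 1$ (so $\epsilon \le \frac1k$) and a little room — I should double-check that the constant I pick (using $1+(k+6)\epsilon$ versus, say, $1+(k+8)\epsilon$) is generous enough, and adjust the slack term if the arithmetic is tight.

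Since the argument in \cref{sec:local} is essentially unchanged, the main obstacle is purely bookkeeping: verifying that the chosen target bound $|N'| \le (1+c\epsilon)\mweight$ has $c$ small enough to land the overall factor at $2+(2k+8)\epsilon$ while being large enough that both the Case 1 inequality $\frac{1}{1+(k+2)\epsilon}\ge\frac{1}{1+c\epsilon}$ and the Case 2 inequality $(1-\frac{\epsilon}{2})(1+c\epsilon)\ge 1+(k+2)\epsilon$ hold simultaneously under the hypotheses $\epsilon \le \frac1k$ and $k \ge 1$. Taking $c = k+6$ should work, but if not, relaxing to $c = 2k+6$ (which still gives the stated $2+(2k+8)\epsilon$ bound after the triangle-inequality-style combination, provided the $\mweight$ coefficient from \cref{eqn:appendix-goal} is also absorbed) is a safe fallback. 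I would also note explicitly, as in the body, that the same two alternatives — $|N'| \le |\lset_0|$ or $\sum_{j=1}^{2\tau}\sum_{v\in\lset_j}\alloc_v \ge (1-\frac{\epsilon}{2})|N'|$ — serve as the $\lambda$-free termination test for \cref{alg:matching}.
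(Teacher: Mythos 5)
Your proposal follows the paper's proof essentially verbatim: the same reduction to showing $|N'| \le (1+(k+6)\epsilon)\,\mweight$ via \cref{eqn:appendix-goal}, the same two cases on $|N'|$ versus $|\lset_0|$, and the same arboricity/density argument combined with \cref{eqn:appendix-weight-bound-1,eqn:appendix-weight-bound-2}, so $c=k+6$ indeed works and no fallback is needed. The only wrinkle is a small expansion slip in your Case~2 check: $(1-\frac{\epsilon}{2})(1+(k+6)\epsilon) \ge 1+(k+2)\epsilon$ reduces to $\frac{7\epsilon}{2} \ge \frac{(k+6)\epsilon^2}{2}$, i.e.\ $(k+6)\epsilon \le 7$, which holds under $k\epsilon \le 1$ and $\epsilon \le 1$ (the paper verifies the same step by writing $1-\frac{\epsilon}{2} \ge \frac{1}{1+\epsilon}$ and bounding $(1+(k+2)\epsilon)(1+\epsilon) \le 1+(k+6)\epsilon$).
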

\begin{proof} Let $N'$ denote $\neigh(\lset_{2\tau})$. By \cref{eqn:appendix-goal}, if we can show that $|N'| \leq (1 + (k + 6)\epsilon) \mweight$, then $OPT \leq (2 + (2k+8)\epsilon)\mweight$. To prove that $|N'| \leq (1 + (k+6)\epsilon) \mweight$, we consider two cases:

\begin{itemize}
    \item[] \textbf{Case 1:} $|N'| \leq |\lset_0|$. By \cref{eqn:appendix-weight-bound-1}, we have
    \begin{equation*}
        \mweight \geq \frac{1}{1+(k+2)\epsilon} \mcap(\lset_0) \geq \frac{1}{1+(k+2)\epsilon} |\lset_0| \geq \frac{1}{1+(k+2)\epsilon}|N'|.
    \end{equation*}
    Hence, $|N'| \leq (1 + (k+6)\epsilon) \mweight$ holds for this case.
    
    \item[] \textbf{Case 2:} $|N'| > |\lset_0|$. We first show that $\sum_{j=1}^{2\tau} \sum_{v \in \lset_j} \alloc_v \geq (1 - \frac{\epsilon}{2}) |N'|$.
    Consider a vertex $u \in L$ and a vertex $v \in \lset_0$.
    Since $N'$ is the neighbor set of $\lset_{2\tau}$, there is a vertex $v' \in \lset_{2\tau}$ that is a neighbor of $u$.
    In the end of round $\tau$, $\beta_{v'}$ is larger than $\beta_{v}$ by a factor of $(1 + \epsilon)^{2\tau}$.
    Thus, in the beginning of round $\tau$, $\beta_{v'}$ is larger than $\beta_v$ by at least a factor of $(1 + \epsilon)^{2\tau-2}$.
    Recall that $x_{u,v}$ and $x_{u,v'}$ are calculated proportional to the priority values of $v$ and $v'$ in the beginning of round $\tau$.
    Thus, $\vx_{u,v} \leq \vx_{u,v'} / (1+\epsilon)^{2\tau - 2}$.
    Since $\tau = \loglambda$, we have $\vx_{u,v} \leq \frac{\epsilon}{4\lambda}\cdot\vx_{u,v} \leq \frac{\epsilon}{4\lambda}$.
    That is, every edge $(u, v)$ with $u \in N'$ and $v \in \lset_0$ allocates at most an amount of $\frac{\epsilon}{4\lambda}$ to $\lset_0$.
    Consider the subgraph $G'$ of $G$ induced by $N' \cup \lset_0$.
    Since the arboricity of $G$ is bounded by $\lambda$ and $|N'| > |\lset_0|$, the number of edges in $G'$ is at most $\lambda \cdot (|N'| + |\lset_0|) < 2 \lambda |N'|$.
    Therefore, the total allocation from $N'$ to $\lset_0$, $\sum_{u \in N'} \sum_{v \in \lset_0} \vx_{u,v}$, is at most $2 \lambda |N'| \cdot \frac{\epsilon}{4\lambda} = \frac{\epsilon}{2} |N'|$.
    That is, the total allocation from $N'$ to vertices not in $\lset_0$ is at least $(1 - \frac{\epsilon}{2})|N'|$.
    This implies that $\sum_{j=1}^{2\tau} \sum_{v \in \lset_j} \alloc_v \geq (1 - \frac{\epsilon}{2}) |N'|$.

    By \cref{eqn:appendix-weight-bound-2}, $\mweight$ is at least
    
    \begin{equation*}
        \frac{1}{1+(k+2)\epsilon} \sum_{j=1}^{2\tau} \sum_{v \in \lset_j} \alloc_v
        \geq \frac{1}{1+(k+2)\epsilon} (1 - \frac{\epsilon}{2}) |N'|
        \geq \frac{1}{(1 + (k+2)\epsilon)(1+\epsilon)} |N'|.
    \end{equation*}
    Since $k\epsilon \leq 1$, the last term is at least $\frac{1}{1+(k+6)\epsilon}$. Hence, $|N'| \leq (1 + (k+6)\epsilon) \mweight$ also holds for this case.
\end{itemize}

Since we have $|N'| \leq (1 + (k+6)\epsilon) \mweight$ in both cases, the theorem holds.
\end{proof}

\cref{lem:equivalence,thm:appendix-apx-factor-1} imply the following.

\begin{theorem} Assume that we execute \cref{alg:matching} with $\epsilon \leq \frac{1}{4}$ and $\tau \geq \loglambda$. Then, with high probability, the algorithm finds a $(2 + 16 \epsilon)$-approximate fractional allocation.
\end{theorem}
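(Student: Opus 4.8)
The statement is a direct composition of the two preceding results, \cref{lem:equivalence} and \cref{thm:appendix-apx-factor-1}, so the plan is simply to chain them with the right choice of the constant $k$. First I would condition on the high-probability event asserted by \cref{lem:equivalence}: on this event there exists a family of parameters $\{k_{v,r}\}_{v \in R,\, 1 \le r \le \tau}$, each lying in $[\tfrac14, 4]$, such that running \cref{alg:det-thresh} with these parameters produces exactly the same fractional allocation $\{\vx'_{u,v}\}$ as the given execution of \cref{alg:matching}. In particular $\mweight$, the weight of the output, is the same for both.

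Next I would apply \cref{thm:appendix-apx-factor-1} to that execution of \cref{alg:det-thresh} with the choice $k = 4$. The hypotheses are met: $\tfrac1k = \tfrac14 \le k_{v,r} \le 4 = k$ for all $v,r$ by the range guarantee of \cref{lem:equivalence}; $\epsilon \le \tfrac14 = \tfrac1k$ by assumption; and $\tau \ge \loglambda$ by assumption. The theorem then gives $OPT \le (2 + (2k+8)\epsilon)\mweight$, and substituting $k = 4$ yields $2k + 8 = 16$, i.e. $OPT \le (2 + 16\epsilon)\mweight$. Since $\mweight$ is the weight of the allocation returned by \cref{alg:matching}, this is exactly the claimed $(2+16\epsilon)$-approximation, and it holds with the high probability inherited from \cref{lem:equivalence}.

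There is no real obstacle here beyond bookkeeping: all the technical work (the sampling concentration behind \cref{lem:approx}, the case analysis constructing the $k_{v,r}$ in \cref{lem:equivalence}, and the arboricity-based density argument in \cref{thm:appendix-apx-factor-1}) has already been done. The only point to be careful about is that the value $k = 4$ used when invoking \cref{thm:appendix-apx-factor-1} must match the worst-case bound $k_{v,r} \le 4$ coming out of \cref{lem:equivalence} (the construction there uses $k_{v,r} \in \{\tfrac14, \tfrac12, 1, 3\}$, all within $[\tfrac14,4]$), and that the constraint $\epsilon \le \tfrac1k = \tfrac14$ is precisely the standing assumption $\epsilon \le \tfrac14$, so the two hypotheses are consistent.
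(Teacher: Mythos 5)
Your proof matches the paper exactly: the paper derives this theorem directly by combining \cref{lem:equivalence} (high-probability equivalence with $k_{v,r} \in [\frac14, 4]$) with \cref{thm:appendix-apx-factor-1} instantiated at $k = 4$, giving $2k+8 = 16$. Your hypothesis-checking ($\epsilon \le \frac14 = \frac1k$, $\tau \geq \loglambda$) is precisely the intended bookkeeping, so the argument is correct and complete.
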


\subsection{Performance guarantee on bipartite graphs} \label{sec:appendix-3}

In the following, we show that \cref{alg:det-thresh} has an approximation factor of $1 + (k + 14)\epsilon$ if $\frac{1}{k} \leq k_{v,r} \leq k$ holds for all $k_{v,r}$ and for some number $k$, $\epsilon \leq \frac{1}{k}$, and $\tau \geq \lognExact$.
By \cref{lem:equivalence}, this result implies that \cref{alg:matching} has an approximation factor of $1 + 18 \epsilon$ if $\tau \geq \lognExact$ and $\epsilon \leq \frac{1}{4}$.
The result is proven by modifying a theorem in \cite{AZM18}, which shows that \cref{alg:local} has an approximation factor of $(1 + O(\epsilon))$.

We first present lower bounds for $\mweight$. Let $g$ denote $\gap$.

\begin{lemma} \label{lem:appendix-weight-bound} For any two indices $1 \leq \ell$ and $\ell + g \leq \ell' \leq 2\tau$, we have

\begin{equation} \label{eqn:appendix-gap-bound} \mweight \geq \left( \frac{1}{1 + (k + 6)\epsilon} \right) \left(\sum_{j=0}^\ell  \mcap(\lset_j) + \left| \neigh \left( \bigcup_{j'=\ell'+1}^{2\tau} \lset_{j'} \right) \right| \right)
\end{equation}

\end{lemma}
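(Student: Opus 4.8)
The plan is to mirror the structure of the proof of \cref{thm:appendix-apx-factor-1}, but to exploit the gap of $g = \gap$ levels between $\ell$ and $\ell'$ in place of the arboricity argument. Recall from \cref{eqn:appendix-weight-bound-1,eqn:appendix-weight-bound-2} that we already have $\mweight \geq \frac{1}{1+(k+2)\epsilon}\sum_{j=0}^{2\tau-1}\mcap(\lset_j)$ and $\mweight \geq \frac{1}{1+(k+2)\epsilon}\sum_{j=1}^{2\tau}\sum_{v \in \lset_j}\alloc_v$; more locally, the same argument of \cref{lem:appendix-alloc-bound} gives, for \emph{every} index $i$ with $1 \le i \le 2\tau$, that every vertex in $\lset_i$ has $\alloc_v \ge \frac{1}{1+(k+2)\epsilon}\mcap_v$ provided $i \le 2\tau-1$, and $\alloc_v \le (1+(k+2)\epsilon)\mcap_v$ provided $i \ge 1$. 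Set $N' := \neigh\!\big(\bigcup_{j'=\ell'+1}^{2\tau}\lset_{j'}\big)$ and $S := \bigcup_{j=0}^{\ell}\lset_j$. Our goal is to show
\[
\mweight \ge \frac{1}{1+(k+6)\epsilon}\Big(\mcap(S) + |N'|\Big),
\]
and since $\mcap(S)$ is already controlled by \cref{eqn:appendix-weight-bound-1} up to the factor $1+(k+2)\epsilon \le 1+(k+6)\epsilon$, the crux is to bound $|N'|$ against the allocation mass landing on the \emph{middle and top} levels $\lset_{\ell+1},\dots,\lset_{2\tau}$.

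First I would reproduce the key geometric observation: for $u \in N'$ there is a neighbor $v' \in \bigcup_{j'>\ell'}\lset_{j'}$, so $\beta_{v'} \ge (1+\epsilon)^{\ell'+1}$ at the end of round $\tau$, hence $\ge (1+\epsilon)^{\ell'-1}$ at the start of round $\tau$; meanwhile any neighbor $v \in S$ of $u$ has $\beta_v \le (1+\epsilon)^{\ell}$ at the end of round $\tau$ and $\le (1+\epsilon)^{\ell+1}$ at the start of round $\tau$. Because $\vx_{u,v}, \vx_{u,v'}$ are proportional to $\beta_v, \beta_{v'}$ within round $\tau$, we get $\vx_{u,v} \le \vx_{u,v'}/(1+\epsilon)^{\ell'-\ell-2} \le \vx_{u,v'}/(1+\epsilon)^{g-2}$. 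Plugging $g = \frac{2\log(2|R|/\epsilon)}{\epsilon}+1$ and using $(1+\epsilon)^{g-2} \ge \exp(\epsilon(g-2)/2) \cdot (\text{const}) \ge (2|R|/\epsilon)$ (the standard $\log(1+\epsilon) \ge \epsilon/2$ bound for $\epsilon \le 1$), this yields $\vx_{u,v} \le \frac{\epsilon}{2|R|}\vx_{u,v'} \le \frac{\epsilon}{2|R|}$ for every edge from $N'$ into $S$. Summing over the at most $|R|$ vertices in $S$ and over all $u \in N'$, the total fractional mass sent from $N'$ into $S$ is at most $|N'|\cdot |R| \cdot \frac{\epsilon}{2|R|} = \frac{\epsilon}{2}|N'|$. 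Hence at least $(1-\frac{\epsilon}{2})|N'|$ units of the fractional allocation incident to $N'$ land on vertices outside $S$, i.e.\ on $\bigcup_{j=\ell+1}^{2\tau}\lset_j$, so $\sum_{j=\ell+1}^{2\tau}\sum_{v\in\lset_j}\alloc_v \ge (1-\frac{\epsilon}{2})|N'|$. Combining with the per-level lower bound from \cref{eqn:appendix-weight-bound-2} (restricted to $j \ge \ell+1 \ge 1$) gives $\mweight \ge \frac{1}{1+(k+2)\epsilon}(1-\frac{\epsilon}{2})|N'| \ge \frac{1}{1+(k+6)\epsilon}|N'|$ for $k\epsilon \le 1$, exactly as in Case 2 of \cref{thm:appendix-apx-factor-1}.

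To finish, I would add the two bounds. From \cref{eqn:appendix-weight-bound-1}, $\mweight \ge \frac{1}{1+(k+6)\epsilon}\mcap(S)$ (weakening $k+2$ to $k+6$), and from the previous paragraph $\mweight \ge \frac{1}{1+(k+6)\epsilon}|N'|$; a convexity/averaging step — since both hold, their average also holds, and then one pushes the constant — actually we need the \emph{sum} bound, which does not follow from two separate lower bounds directly. The honest way, and the one I expect the paper takes, is to split $\mweight$ itself: the allocation mass on $S = \bigcup_{j\le\ell}\lset_j$ contributes $\ge \frac{1}{1+(k+2)\epsilon}\mcap(S)$ by \cref{eqn:appendix-weight-bound-1} applied level-by-level for $j \le \ell$, while the allocation mass on $\bigcup_{j=\ell+1}^{2\tau}\lset_j$ contributes $\ge \frac{1}{1+(k+2)\epsilon}(1-\frac{\epsilon}{2})|N'|$ by the gap argument; these two portions of $\mweight$ are over disjoint vertex sets, so they add, giving $\mweight \ge \frac{1}{1+(k+6)\epsilon}(\mcap(S)+|N'|)$. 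The main obstacle I anticipate is exactly this bookkeeping — making sure the two contributions to $\mweight$ are genuinely disjoint (they are, since $\mweight = \sum_{v\in R}\min(\mcap_v,\alloc_v)$ splits over $R = S \sqcup (R\setminus S)$) and that the gap $g$ is chosen precisely large enough that $(1+\epsilon)^{g-2} \ge 2|R|/\epsilon$, so that the "leakage into $S$" is at most $\frac{\epsilon}{2}|N'|$ even after summing over all $|R|$ possible vertices of $S$; everything else is the same arithmetic already carried out in \cref{sec:appendix-2}.
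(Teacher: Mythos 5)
Your proposal is correct and follows essentially the same route as the paper's proof: the same gap argument showing each $u \in \neigh\left(\bigcup_{j'=\ell'+1}^{2\tau} \lset_{j'}\right)$ sends at most $\frac{\epsilon}{2}$ of its unit allocation to levels $\le \ell$, combined with \cref{lem:appendix-alloc-bound} and the split of $\mweight$ over the disjoint vertex sets $\bigcup_{j\le\ell}\lset_j$ and $\bigcup_{j>\ell}\lset_j$ (a bookkeeping step the paper leaves implicit but uses exactly as you spell out). The only nit is an off-by-one in your round-$\tau$ accounting: since $\beta_{v'}$ is updated only once during round $\tau$, you have $\beta_{v'} \ge (1+\epsilon)^{\ell'}$ (not $(1+\epsilon)^{\ell'-1}$) at the start of the round, which gives the exponent $g-1$ and hence the clean bound $(1+\epsilon)^{g-1} \ge \frac{2|R|}{\epsilon}$ without the stray constant you had to wave at.
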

\begin{proof} By \cref{lem:appendix-alloc-bound}, the vertices
$\lset_0, \lset_1, \dots, \lset_\ell$ almost fill up their capacities.
Therefore, they contribute at least $\frac{1}{1 + (k + 2)\epsilon}
\sum_{j=0}^{\ell} \mcap(\lset_j)$ which is larger than the first term of the lower bound.

The second term represents all neighbors of vertices in $\lset_{\ell'+1}, \dots, \lset_{2\tau}$. We first show that for each vertex $u \in L$ having a neighbor in $\bigcup_{j=\ell'+1}^{2\tau} \lset_j$, we have:

\begin{equation} \label{eqn:appendix-gap}
    \sum_{j=\ell+1}^{2\tau} \sum_{v \in \lset_j} \vx_{u, v} \geq (1 - \frac{\epsilon}{2}),
\end{equation}

That is, a large fraction of $u$ is assigned to vertices in $\lset_{\ell+1}, \dots, \lset_{2\tau}$.
Consider vertex $u \in L$ that is a neighbor of $v' \in \lset_{j'}$ for some $j' \geq \ell' + 1$.
Because $\ell'$ is at least $\ell + \gap$, we have $\beta_{v'} \geq \frac{2|R|}{\epsilon} \beta_v$ for any $v \in \lset_j$ with $j \leq l$.
Therefore $\vx_{u, v}$ is at most $\frac{\epsilon}{2|R|}$ times $\vx_{u, v'}$. Since there are at most $|R|$ vertices like $v$, we have

\begin{equation*}
    \sum_{j=0}^{\ell} \sum_{v \in \lset_j} \vx_{u, v} \leq \sum_{j=0}^{\ell} \sum_{v \in \lset_j} \frac{\epsilon}{2|R|} \cdot \vx_{u, v'} \leq \frac{\epsilon}{2} \vx_{u, v'} \leq \frac{\epsilon}{2}.
\end{equation*}
Hence, \cref{eqn:appendix-gap} holds, which indicates that each vertex in $\neigh(\bigcup_{j'=\ell'+1}^{2\tau} \lset_{j'})$ contributes at least $1 - \frac{\epsilon}{2}$ to $\sum_{j=\ell+1}^{2\tau} \alloc_v$. Using \cref{lem:appendix-alloc-bound}, at least $\frac{1}{1+(k+2)\epsilon}$ fraction of every such vertex will be counted toward $\mweight$. So in total, we get at least $\frac{1}{1+(k + 2)\epsilon} \cdot (1 - \frac{\epsilon}{2}) \geq \frac{1}{1+6\epsilon}$ fraction for each vertex in $\neigh(\bigcup_{j'=\ell'+1}^{2\tau} \lset_{j'})$. This concludes the proof.
\end{proof}

We prove the approximation factor of \cref{alg:det-thresh} by combining the lower and upper bounds.

\begin{lemma} \label{lem:appendix-apx-factor-2} \cref{alg:det-thresh} has an approximation factor of $1 + (k + 14) \epsilon$ if the input instance and parameters satisfy all of the following:
\begin{itemize}
    \item for all $v \in R$ and $0 \leq r \leq \tau$, $\frac{1}{k} \leq k_{a,r} \leq k$ for some number $k$,
    \item $\epsilon \leq \frac{1}{k}$, and
    \item $\tau \geq \gap$.
\end{itemize}
\end{lemma}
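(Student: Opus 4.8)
The plan is to follow the structure of the proof of \cref{thm:appendix-apx-factor-1}, but to replace the single ``arboricity cut'' used there with a pigeonhole argument over many candidate cuts. Write $g = \gap$; the extra room here comes from the hypothesis $\tau \geq \lognExact$ (and $\lognExact = g/\epsilon$), which makes available $\Omega(1/\epsilon)$ pairwise disjoint windows of $g$ consecutive level sets. The goal is to find one ``thin'' such window and to cut just below it, so that the lower bound of \cref{lem:appendix-weight-bound} almost matches the upper bound \cref{eqn:appendix-opt-bound}.

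Keep the level sets $\lset_0, \dots, \lset_{2\tau}$ of \cref{sec:appendix-2} after round $\tau$, and for every $u \in L$ that has a neighbor let $h(u)$ denote the largest index $j$ with $\neigh_u \cap \lset_j \neq \emptyset$ (vertices of $L$ with no neighbor do not appear in any neighborhood below and may be dropped). Two elementary observations drive the argument: $|\neigh(\bigcup_{j > \ell} \lset_j)| = |\{u : h(u) > \ell\}|$ for every $\ell$, and for $\ell \leq \ell'$ the set $\{u : h(u) > \ell\}$ is the disjoint union of $\{u : \ell < h(u) \leq \ell'\}$ and $\{u : h(u) > \ell'\}$. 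As a first step I would apply \cref{lem:appendix-weight-bound} with $\ell = 1$ and $\ell' = g+1$ and discard the nonnegative capacity term, obtaining the global bound $|\{u : h(u) \geq g+2\}| \leq (1 + (k+6)\epsilon)\,\mweight$.

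Next, partition $\{g+2, g+3, \dots, 2\tau\}$ into pairwise disjoint windows of $g$ consecutive indices each, discarding any leftover indices. Since $\tau \geq g/\epsilon$ and $\epsilon \leq 1/4$, the number of such windows is at least $1/\epsilon$, so by the bound of the previous step some window, say with index set $\{\ell+1, \dots, \ell'\}$ where $\ell' = \ell + g$ and $\ell \geq g+1$, carries only an $\epsilon$ fraction of the top vertices: $|\{u : \ell < h(u) \leq \ell'\}| = \sum_{j=\ell+1}^{\ell'} |\{u : h(u) = j\}| \leq \epsilon\,(1+(k+6)\epsilon)\,\mweight$. Now instantiate \cref{eqn:appendix-opt-bound} at this $\ell$, split $\{u : h(u) > \ell\}$ using the second observation, and apply \cref{lem:appendix-weight-bound} to the pair $(\ell, \ell')$ — legal since $1 \leq \ell$ and $\ell + g = \ell' \leq 2\tau$ — to the sum of the capacity term and $|\{u : h(u) > \ell'\}|$; this gives $OPT \leq (1+(k+6)\epsilon)\,\mweight + |\{u : \ell < h(u) \leq \ell'\}| \leq (1+(k+6)\epsilon)(1+\epsilon)\,\mweight$. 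Expanding and using $k\epsilon \leq 1$ and $\epsilon \leq 1/4$ bounds the right-hand side by $(1 + (k+14)\epsilon)\,\mweight$, as claimed.

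The main obstacle is the window-selection step. The lower bound \cref{lem:appendix-weight-bound} is inherently lossy by a gap of $g$ levels: a vertex of $L$ whose highest-priority neighbor lies inside the gap window cannot be charged against $\mweight$, and this is precisely the phenomenon responsible for the factor $2$ in \cref{thm:appendix-apx-factor-1}. Reducing this loss to a $1+O(\epsilon)$ factor is possible only because some window of $g$ consecutive levels holds at most an $\epsilon$ fraction of the top vertices, and the existence of such a window is the single place where the stronger round count $\tau \geq \lognExact$ (rather than $\tau \geq \loglambda$) is needed; the remaining bookkeeping — in particular tracking the $k$-dependent constants — is routine and inherited from \cref{lem:appendix-alloc-bound,lem:appendix-weight-bound}.
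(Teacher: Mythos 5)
Your argument is correct and arrives at exactly the inequality the paper derives, $OPT \le (1+(k+6)\epsilon)(1+\epsilon)\,\mweight$, but via a genuinely different decomposition of the slack. The paper also pigeonholes over windows of $g$ consecutive levels, enabled by reading the round hypothesis as $\tau \ge g/\epsilon = \lognExact$ (its own proof uses this, as you do; the weaker bound written in the lemma statement appears to be a typo), but the quantity it averages is capacity mass: sliding $\ell$ over all $2\tau-g$ positions, it finds a window with $\sum_{j=\ell+1}^{\ell+g}\mcap(\lset_j)$ at most an $\epsilon$-fraction of the total capacity, converts this to $\epsilon(1+(k+2)\epsilon)\,\mweight$ via \cref{eqn:appendix-weight-bound-1}, and then pairs \cref{eqn:appendix-opt-bound} at the cut $\ell'$ with \cref{lem:appendix-weight-bound} at $(\ell,\ell')$, so the window slack appears as a capacity term on the $R$ side. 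You instead average the number of $L$-vertices whose highest-level neighbor $h(u)$ falls in the window, control the total $|\{u : h(u)\ge g+2\}|$ by an extra application of \cref{lem:appendix-weight-bound} at the bottom pair $(1,g+1)$, and pair \cref{eqn:appendix-opt-bound} at the cut $\ell$ with \cref{lem:appendix-weight-bound} at $(\ell,\ell')$, so the slack appears as a neighbor count on the $L$ side. Both routes cost exactly one extra $(1+\epsilon)$ factor and the constant bookkeeping coincides; yours trades \cref{eqn:appendix-weight-bound-1} for a second invocation of \cref{lem:appendix-weight-bound}, which is arguably slightly cleaner in that only one type of lower bound is needed, while the paper's sliding (overlapping) windows make the averaging step valid for every $\epsilon \le 1$.

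One small caveat, not a gap: your claim of ``at least $1/\epsilon$'' disjoint windows uses $\epsilon \le 1/4$, which is not literally among the lemma's hypotheses ($\epsilon \le 1/k$ with $k \ge 1$ only guarantees $\epsilon \le 1$). This is immaterial in the regime where the lemma is actually applied ($k = 4$, $\epsilon \le 1/4$), and it can be relaxed by averaging over overlapping windows as the paper does (for your statistic this still requires roughly $\epsilon \le 1/2$, or a constant-factor larger $\tau$), so it only affects the extreme range of $\epsilon$ where the stated approximation factor is weak anyway. The rest of your steps — the identity $|\neigh(\bigcup_{j>\ell}\lset_j)| = |\{u : h(u) > \ell\}|$, the legality checks $\ell \ge 1$ and $\ell' \le 2\tau$ before invoking \cref{lem:appendix-weight-bound}, and the final expansion using $k\epsilon \le 1$ — are all sound.
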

\begin{proof} There are two main gaps between the lower bound of \cref{eqn:appendix-gap-bound} and the upper bound of \cref{eqn:appendix-opt-bound}:
the $\frac{1}{1+(k+6)\epsilon}$ factor and the sum $\sum_{j=\ell+1}^{\ell'} \sum_{v \in \lset_j} \mcap_v$.
We show that the latter gap is small for some value of $\ell$ and
$\ell' = \ell + g$.
Recall that $g = \gap$.
Summing the gap over different values of $\ell$ yields
\begin{equation} \notag \textstyle
    \sum_{\ell=0}^{2\tau - g - 1} \sum_{j=\ell+1}^{\ell + g} \mcap(\lset_j)
    \leq g \sum_{j = 1}^{2\tau - 1} \mcap(\lset_j).
\end{equation}
Therefore, there exists an $\ell$ with $0 \leq \ell \leq 2\tau - g - 1$ such that its associated gap $\sum_{j=\ell+1}^{\ell+g} \mcap(\lset_j)$ is at most $\frac{g}{2\tau - g} \sum_{j=1}^{2\tau-1} \mcap(\lset_j) \leq \epsilon \sum_{j=1}^{2\tau-1} \mcap(\lset_j)$, where the last inequality holds when $\tau$ is at least $g / \epsilon = \lognExact$ and $\epsilon \leq 1$.

By \cref{eqn:appendix-weight-bound-1}, $\mweight$ is at least $\frac{1}{1+(k+2)\epsilon} \sum_{j=1}^{2\tau} \mcap(\lset_j)$. This means the gap associated for some $\ell$ is at most $\epsilon \cdot (1 + (k + 2)\epsilon) \mweight$. Using \cref{lem:appendix-weight-bound,eqn:appendix-opt-bound}, we have
\begin{equation}
\begin{split}
    \mweight & \geq \frac{1}{1+(k+6)\epsilon}(OPT - \epsilon (1 + (k + 2)\epsilon) \mweight) \\
    & = \frac{1}{1+(k+6)\epsilon} OPT - \epsilon \cdot \frac{1 + (k + 2)\epsilon}{1 + (k + 6)\epsilon} \cdot \mweight \\
    & \geq \frac{1}{1+(k+6)\epsilon} OPT - \epsilon \mweight,
\end{split}
\end{equation}
and hence $\mweight \geq \frac{1}{(1+(k+6)\epsilon)(1 + \epsilon)} OPT$,  yielding a final approximation factor of at least $1 + (k + 14)\epsilon$.
\end{proof}

\cref{lem:equivalence,lem:appendix-apx-factor-2} imply the following.

\begin{theorem} Assume that we execute \cref{alg:matching} with $\epsilon \leq \frac{1}{4}$ and $\tau \geq \lognExact$. Then, with high probability, the algorithm finds a $(1 + 18 \epsilon)$-approximate fractional allocation.
\end{theorem}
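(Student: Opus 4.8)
The plan is to read this off directly from \cref{lem:equivalence} and \cref{lem:appendix-apx-factor-2}, specialized to the constant $k = 4$. All the substantive work — the concentration bound behind the sampling, the construction of the parameters $\{k_{v,r}\}$, and the approximation analysis of \cref{alg:det-thresh} — is already contained in those two lemmas, so what remains is only to line up the hypotheses and transfer the guarantee across the equivalence.

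First I would fix an execution of \cref{alg:matching} on the given instance with $\epsilon \le \tfrac14$ and $\tau \ge \lognExact$, and apply \cref{lem:equivalence}: with high probability there is a choice of parameters $\{k_{v,r}\}$, all in $[\tfrac14, 4]$, for which \cref{alg:det-thresh} — run with these parameters and the same $G$, $\{\mcap_v\}$, $\tau$, $\epsilon$ — produces exactly the same output as the fixed execution of \cref{alg:matching}. Condition on this high-probability event; from here on everything is deterministic. Then, setting $k = 4$, I would verify the three hypotheses of \cref{lem:appendix-apx-factor-2}: $\tfrac1k = \tfrac14 \le k_{v,r} \le 4 = k$ for all $v, r$, which is exactly the range provided by \cref{lem:equivalence}; $\epsilon \le \tfrac14 = \tfrac1k$, which is our standing assumption; and $\tau \ge \lognExact = \gap/\epsilon$, which is precisely the bound the proof of \cref{lem:appendix-apx-factor-2} uses (and is no weaker than the $\tau \ge \gap$ listed in its statement, since $\epsilon \le 1$). \cref{lem:appendix-apx-factor-2} then yields that this run of \cref{alg:det-thresh} outputs a fractional allocation of weight at least $OPT / (1 + (k+14)\epsilon) = OPT / (1 + 18\epsilon)$; since the two algorithms have identical output on the conditioned event, the same $(1 + 18\epsilon)$-approximation bound holds for \cref{alg:matching}.

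I do not expect a real obstacle here, since the heavy lifting has already been done; the one point requiring care is the probability bookkeeping. The only randomness in \cref{alg:matching} is the edge sampling, and its entire failure probability is absorbed into the high-probability event of \cref{lem:equivalence}, which itself union-bounds over all vertices and rounds through \cref{lem:approx}; conditioned on that event the approximation bound is deterministic, so the phrase ``with high probability'' carries over unchanged. I would also make explicit the otherwise trivial observation that \cref{lem:equivalence} equates the \emph{outputs} of the two algorithms, hence in particular their weights $\mweight$ coincide, so the approximation ratio transfers exactly rather than merely up to constants. No new estimates are needed.
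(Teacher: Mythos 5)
Your proposal is correct and matches the paper's own argument, which derives this theorem directly from \cref{lem:equivalence} and \cref{lem:appendix-apx-factor-2} with $k=4$; your additional care about conditioning on the equivalence event and about $\tau \geq \lognExact$ being the bound the proof of \cref{lem:appendix-apx-factor-2} actually uses is exactly the right bookkeeping.
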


\section{\texorpdfstring{$(1 + \epsilon)$}{1+eps} Approximation of Allocation} \label{sec:framework}
In this section, we show that the framework in \cite[Section 4]{ghaffari2022massively} can be applied to find a $(1 +\eps)$-approximate allocation on low-arboricity graphs using our MPC algorithm in \cref{sec:mpc} as a subroutine. The framework was designed for finding a $(1+\eps)$-approximate \emph{$b$-matching}.

\begin{definition}[The $b$-matching problem]
In the $b$-matching problem, the input is a graph $G$ in which each vertex $v$ is associated with an integer $b_v$. A subset of edges is called a $b$-matching if each vertex $v$ has at most $b_v$ edges in this subset. The objective is to find a $b$-matching of the maximum size.
\end{definition}

Note that the allocation problem is a special case of the $b$-matching problem, in which a bipartition $(L, R)$ is given and all vertices $v \in L$ have $b(v) = 1$.

\subsection{Review of Ghaffari et al.'s framework}
Ghaffari et al.'s framework \cite{ghaffari2022massively} assumes access to a constant approximation algorithm $A$ for the $b$-matching problem. To find a $(1+\eps)$-approximate $b$-matching, they first invoke $A$ to find a constant approximate $b$-matching $M$.
Then, the framework iteratively improve $M$ via short augmenting walks.
In a fixed iteration, they first construct a layered graph $\cL$ whose vertex set is partitioned into layers $L_0, L_1, \dots, L_{k+1}$ for some integer $k = O(\eps^{-1})$.
The graph is constructed as follows.

\begin{enumerate}
    \item[] \textbf{Step 1:} Construct a graph $W$ as follows. For each vertex $v \in G$, add $b(v)$ copies of $v$, denoted by $v^1, v^2, \dots, v^{b(v)}$, into $V(W)$. For each edge $\{u, v\} \in M$, select a copy $u^i$ and a copy $v^j$ such that no $u^i$ nor $v^j$ has any edge from $M$ incident to it. Add edge $\{u^i, v^j\}$ to $E(W)$ and mark it as being in $M$. Note that the edge set of $W$ is a matching.
    \item[] \textbf{Step 2:} Each free vertex $v^i$ of $W$ (a vertex that is not adjacent to any edge) is assigned uniformly at random to $L_0$ or $L_{k+1}$.
    \item[] \textbf{Step 3:} Each matched edge $\{v, v'\}$ is assigned to one of the layers $L_1, \dots, L_k$ uniformly at random as an arc $(v, v')$ or $(v', v)$; the arc direction is also chosen randomly.
    \item[] \textbf{Step 4:} Let $H_i$ be the heads and $T_i$ be the tails of the arcs in $L_i$. For each unmatched edge $e = \{u, v\} \in E(G)$, choose an integer $i_e$ uniformly at random from the interval $0, 1, \dots, k$ and an orientation $\vec{e} = (u, v)$ or $\vec{e} = (v, u)$; wlog, assume $\vec{e} = (u, v)$. Then, we assign $e$ to be between a copy of $u$ which is in $H_{i_e}$ (if any) and a copy of $v$ which is in $T_{i_e+1}$ (if any).
    \item[] \textbf{Step 5:} For each layer $L_i$, contract all copies of a vertex $u$ in $H_i$ to a single vertex $u'$. Set $b'(u')$ to be the number of copies contracted in this step. Contract copies of each vertex $u$ in $T_i$ in a similar way.
\end{enumerate}

The key property of this construction is that any augmenting walk is preserved in the layered graph with probability $1/\exp(O(2k))$ \cite{ghaffari2022massively}.
The framework finds an preserved augmenting walk in the layer graph by executing $O(k^k)$ iterations.
Each iteration invokes the constant approximation algorithm $A$ to find a $b'$-matching between two adjacent layers $H_i$ and $T_{i+1}$.
Note that the input to each invocation of $A$ is a subgraph of $G$.
Hence, if $G$ has arboricity $\lambda$, the input to each invocation has arboricity at most $\lambda$.
We refer the reader to \cite[Section 4]{ghaffari2022massively} for more details of the framework.

\subsection{Solving the allocation problem}
Assume that the input to the framework is an instance of the allocation problem; that is, a bipartition $(L, R)$ is given and the $b$-value of each vertex $v \in L$ is $1$.
Then, the framework can be modified as follows.
In Step 2, we assign each free vertex in $L$ to $L_0$ and each free vertex in $R$ to $L_{k+1}$.
In Steps 3 and 5, we orient each unmatched edge from $L$ to $R$ and each matched edge from $R$ to $L$.
It is not hard to verify that the above modifications do not affect the correctness of the framework, because each short augmenting walk is still preserved in $\cL$ with probability $1/ \exp(O(2k))$.
With this modification, for each layer $L_i$, all vertices in $H_i$ are from $L$ and all vertices in $T_{i+1}$ are from $R$.
Hence, to find a $b$-matching between $H_i$ and $T_{i+1}$, we can invoke a constant approximation algorithm for the allocation problem.


\end{document}